\newcommand{\cmark}{\ding{51}}
\newcommand{\xmark}{\ding{55}}
\definecolor{darkblue}{rgb}{0,0,0.5}
\newtheorem{theorem}{Theorem}%
\newtheorem{lemma}{Lemma}%
\theoremstyle{remark}
\newcommand{\R}{\mathbb{R}}
\newcommand{\C}{\mathbb{C}}
\newcommand{\e}{\begin{equation}}
\newcommand{\ee}{\end{equation}}
\newcommand{\en}{\begin{equation*}}
\newcommand{\een}{\end{equation*}}
\newcommand{\eqn}{\begin{eqnarray}}
\newcommand{\eeqn}{\end{eqnarray}}
\newcommand{\bmat}{\begin{bmatrix}}
\newcommand{\emat}{\end{bmatrix}}
\DeclareMathAlphabet\mathbfcal{OMS}{cmsy}{b}{n}
\newcommand{\E}{\operatorname{\mathbb{E}}}
\newcommand{\vct}[1]{\boldsymbol{#1}}
\newcommand{\mtx}[1]{\boldsymbol{#1}}
\newcommand{\trace}{\operatorname{tr}}
\def \vec       {\operatorname*{vec}}
\DeclareMathOperator*{\argmin}{\text{arg~min}}
\newcommand{\wh}{\widehat}
\newcommand{\norm}[2]{\left\| #1 \right\|_{#2}}
\newcommand{\bracket}[1]{\left( #1 \right)}
\newcommand{\parans}[1]{\left(#1\right)}
\newcommand{\innerprod}[2]{\trace\parans{ #1  #2}}
\newcommand{\calA}{\mathcal{A}}
\newcommand{\calM}{\mathcal{M}}
\newcommand{\ve}{\vct{e}}
\newcommand{\vp}{\vct{p}}
\newcommand{\vu}{\vct{u}}
\newcommand{\vphi}{\vct{\phi}}
\newcommand{\vrho}{\vct{\rho}}
\newcommand{\vone}{\vct{1}}
\newcommand{\mA}{\mtx{A}}
\newcommand{\mB}{\mtx{B}}
\newcommand{\mU}{\mtx{U}}
\newcommand{\mLambda}{\mtx{\Lambda}}
\newcommand{\mId}{{\bf I}}
\newcommand{\setA}{\mathbb{A}}
\newcommand{\setU}{\mathbb{U}}
\newlength{\imgwidth}
\newcommand{\twoCol}[2]{\ifthenelse{\boolean{twoColVersion}} {#1} {#2} }
\begin{document}

\title{On the connection between least squares, regularization, and classical shadows}

\author{Zhihui Zhu}
\email{zhu.3440@osu.edu}
\affiliation{Department of Computer Science and Engineering, The Ohio
State University, Columbus, Ohio 43210, USA}
\author{Joseph M. Lukens}
\email{joseph.lukens@asu.edu}
\affiliation{Research Technology Office and Quantum Collaborative, Arizona State University, Tempe, Arizona 85287, USA}
\affiliation{Quantum Information Science Section, Oak Ridge National Laboratory, Oak Ridge, Tennessee 37831, USA}
\author{Brian T. Kirby}
\email{brian.t.kirby4.civ@army.mil}
\affiliation{DEVCOM Army Research Laboratory, Adelphi, MD 20783, USA}
\affiliation{Tulane University, New Orleans, LA 70118, USA}

\begin{abstract}
Classical shadows (CS) offer a resource-efficient means to estimate quantum observables, circumventing the need for exhaustive state tomography. 
Here, we clarify and explore the connection between CS techniques and least squares (LS) and regularized least squares (RLS) methods commonly used in machine learning and data analysis. 
By formal identification of LS and RLS ``shadows'' completely analogous to those in CS---namely, point estimators calculated from the empirical frequencies of single measurements---we show that both RLS and CS can be viewed as regularizers for the underdetermined regime, replacing the pseudoinverse with invertible alternatives. 
Through numerical simulations, we evaluate RLS and CS from three distinct angles: the tradeoff in bias and variance, mismatch between the expected and actual measurement distributions, and the interplay between the number of measurements and number of shots per measurement.

Compared to CS, RLS attains lower variance at the expense of bias, is robust to distribution mismatch, and is more sensitive to the number of shots for a fixed number of state copies---differences that can be understood from the distinct approaches taken to regularization. Conceptually, our integration of LS, RLS, and CS under a unifying ``shadow'' umbrella aids in advancing the overall picture of CS techniques, while practically our results highlight the tradeoffs intrinsic to these measurement approaches, illuminating the circumstances under which either RLS or CS would be preferred, such as unverified randomness for the former or unbiased estimation for the latter.
\end{abstract}

\maketitle

\section{Introduction}
As experimentally accessible quantum systems continue to increase in size and complexity, methods for characterizing these systems as efficiently as possible have assumed primary importance. One of the leading state characterization approaches is quantum state tomography, which provides a complete density matrix describing a system from which all observable properties can be extracted via classical calculations~\cite{nielsen2000quantum}. 
However, the exponential scaling in the required number of measurements and the classical computational cost of determining the density matrix most consistent with the given measurement results make state tomography an unrealistic approach for large systems. 

The practical challenges associated with full state tomography have spurred the development of various methods for estimating properties and observables of quantum systems without needing to reconstruct the entire density matrix \cite{aaronson2019shadow}. 
For example, nonlinear functions of a density matrix, such as various entanglement measures, can be estimated directly without reconstruction but at the cost of requiring multiple copies of a given state and joint measurements between them \cite{horodecki2002method,horodecki2003measuring,mintert2007observable}.
Further, techniques based on randomized measurements (e.g., application of random unitaries to states with fixed measurement bases) have been developed to estimate observables from single-copy systems without requiring state reconstruction. 
These methods include those that do not incorporate the explicit set of randomized measurements selected into the estimation procedure \cite{van2012measuring,wyderka2023complete,ketterer2019characterizing,imai2021bound,ketterer2022statistically,liang2010nonclassical,tran2015quantum,seshadri2021theory,cieslinski2023analysing} as well as those, such as classical shadows (CS), that do utilize this information and hence still require a shared frame of reference \cite{huang2020predicting,acharya2021informationally,Nguyen2022,struchalin2021experimental}.

The CS approach to estimating observables of an unreconstructed density matrix is especially attractive due to its experimental simplicity and demonstrated predictive power \cite{huang2020predicting}. 
The original CS proposal
leverages random single-shot measurements to construct ``shadows'' of a quantum state that then stand-in for a complete density matrix reconstruction for the purpose of calculating observables. Even though the CS density matrix is not even constrained to be positive semidefinite (PSD)---an ostensibly surprising feature critical to its unique scaling behavior~\cite{lukens2021bayesian}---it has been shown to provide accurate estimates for many quantities of interest in a quantum system. 

Since its initial development, CS techniques have been studied intensely in various scenarios including---but not limited to---experimental data \cite{elben2020mixed, struchalin2021experimental, Zhang2021, Stricker2022, Zhu2022}, compared to  approaches such as Bayesian mean estimation~\cite{lukens2021bayesian}, extended to positive operator-valued measures (POVMs)~\cite{acharya2021informationally, Nguyen2022} and multiple shots per measurement setting~\cite{zhou2023performance}, and derandomized to remove the necessity of randomly chosen measurements~\cite{huang2021efficient}. 

From a physical point of view, the original CS proposal is relatively straightforward~\cite{huang2020predicting}. %
The randomness of the measurement procedure induces a depolarizing channel with a simple inversion consisting of subtraction of the identity. Intuitively, since the projections are restricted to a single shot, they naturally act as a noisy channel; they are a low-dimensional projection of a higher-dimensional probability distribution, and the randomness of these projections ensures the noise is isotropic. %
Importantly from an operational perspective, the inverse of an appropriately chosen random channel can be computed analytically, thereby obviating the need for a computationally intensive inverse calculation. 

Here we consider the fundamental causes for the estimation power of CS in light of standard least-squares (LS) and regularized least-squares (RLS) formulations of the same problem. %
Through a formal derivation of the LS and RLS solutions to a generic quantum measurement scenario, we find that both rely on their own ``shadows''---i.e., linear transformations of individual measurement results---which are averaged to obtain the final estimate. As overviewed in Fig.~\ref{fig:concept} and detailed in Secs. \ref{sec:background} and \ref{sec:stabilizing} below, all three techniques (LS, RLS, and CS) follow strikingly similar workflows, differing only in the respective inversion operation in each's shadow formula. Under this viewpoint, these techniques are seen to comprise a general family in which RLS and CS stabilize the LS shadow in the underdetermined regime by replacing the pseudoinverse with invertible and well-conditioned operators. Not only does our work reveal the unifying framework that the idea of ``shadows'' provides for traditional LS techniques; it also uncovers an profitable interpretation of CS as a regularizer for low-measurement quantum estimation in the tradition of RLS. 
Collectively, our results contribute to the fundamental understanding of CS while simultaneously offering practical guidance for quantum estimation.

This article is organized as follows. Section~\ref{sec:background} introduces the general measurement problem in terms of POVMs and derives the LS solution, expressing the result as an average of LS shadows. Simulations of a five-qubit system reveal high variance and error from the double descent phenomenon, which is mitigated by the RLS and CS stabilization techniques introduced in Sec.~\ref{sec:stabilizing}. Section~\ref{sec:Compare} then compares the advantages and disadvantages of RLS and CS with respect to three specific features: (i)~the bias-variance tradeoff, (ii)~the impact of misspecified measurement distributions, and (iii)~the scaling with reallocations of the number of measurements and the number of shots per measurement. Concluding thoughts appear in Sec.~\ref{sec:conclusion}.

\begin{figure*}[t!]
{\centering
\includegraphics[width=6.5in]{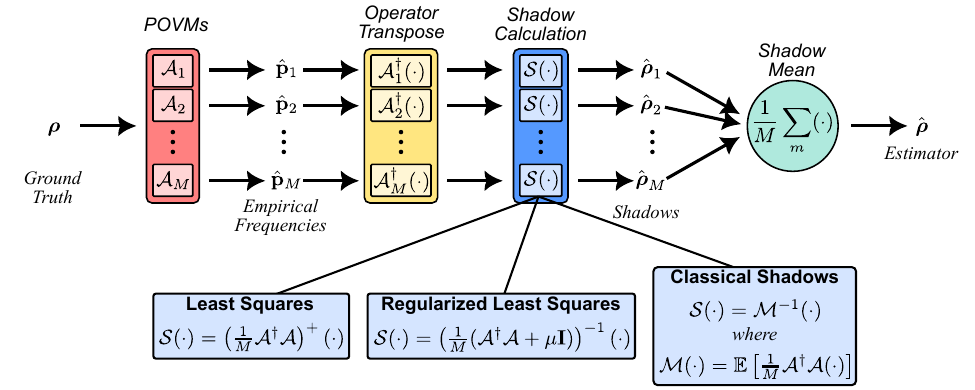}
}
\caption{Shadow picture of quantum estimation. POVMs $\{\mathcal{A}_1,\mathcal{A}_2,\ldots,\mathcal{A}_M\}\equiv\mathcal{A}$ are measured via repeated preparation of a ground truth quantum state $\boldsymbol{\rho}$. The observed frequencies for each POVM produce a single shadow state $\wh{\boldsymbol{\rho}}_m = \mathcal{S}\left(\mathcal{A}_m^\dagger(\wh{\vp}_m)\right)$, the collection of which are averaged for the final estimate $\wh{\boldsymbol{\rho}}$. The only difference between each technique lies in the specific shadow operation chosen: (i)~least squares (LS) performs the (pseudo)inverse on the POVMs directly; (ii)~regularized least squares (RLS) ensures invertibility through the addition of a term proportional to the identity; and (iii)~classical shadows (CS) inverts according to a simulated channel $\mathcal{M}$ defined in expectation over all possible measurement settings.}
\label{fig:concept}
\end{figure*}

\section{Background: POVM Measurements and LS Estimation}
\label{sec:background}
\subsection{POVM Measurements}
Consider an $n$-qubit quantum state $\vrho\in\C^{D\times D}$ with dimension $D = 2^n$. The probabilistic nature of quantum measurements can be described using POVMs~\cite{nielsen2000quantum}.  A POVM is a set of PSD operators $\{\mA_1,\ldots,\mA_K \}$---abbreviated as $\{\mA_k\}_{k\in[K]}$ with $[K] :=\{1,\ldots,K\}$ or simply $\{\mA_k\}$ when clear from context---such that $
\sum_{k=1}^K \mA_k = \mId.$
Each POVM element $\mA_k$ is associated with a possible measurement outcome, and the probability $p_k$ of detecting the $k$-th outcome when measuring the density operator $\vrho$ is given by
\begin{eqnarray}
\label{The defi of POVM 2}
p_k = \innerprod{\mA_k}{\vrho}.
\end{eqnarray}
We can repeat the measurement process $L$ times, observe  the $k$-th outcome $f_k$ number of times, and take the average of the outcomes to generate the empirical frequencies
\begin{equation}
\wh p_{k} = \frac{f_k}{L}, \  k \in[K].
\label{eq:empirical-prob}\end{equation}
Collectively, the random variables $f_1,\ldots,f_K$ are characterized by a multinomial distribution  with parameters $L$  and $\{p_k\}$. When $L = 1$, the measurements $\{\wh p_k = f_k\}$ form a one-hot vector, or a delta-like distribution, with all entries zero except for one entry being one. 
\textit{Orthogonal rank-1 POVMs.---}
A special case common in practice focuses on rank-1 POVMs of the form $\{ \mA_k=\vu_k \vu_k^\dagger \}$ with $\vu_k \in \C^D$ and $\sum_{k=1}^K \vu_k \vu_k^\dagger = \mId$, where $\dagger$ denotes the Hermitian transpose. We note that in the physics literature when $\vu$ represents a quantum state it is often represented as a ket $\ket{u}$; however, we adopt vector notation throughout for convenience.
When $\mU = \begin{bmatrix} \vu_1 & \cdots & \vu_K \end{bmatrix}^\dagger\in\C^{D\times  K}$ further forms an orthonormal basis, in which case $K = D$, the probability $p_k$ can be written as
\begin{equation}
p_k = \innerprod{\mA_k}{\vrho}  = \vu_k^\dagger \vrho \vu_k = \ve_k^\top \bracket{\mU \vrho \mU^\dagger} \ve_k,
\label{Probability of rank-1 orth POVM}
\end{equation}
where the last equation implies that the measurement is equivalent to first applying the unitary $\mU$ to the unknown state  $\vrho \mapsto \mU\vrho\mU^\dagger$ (the reason for the Hermitian transpose in the definition $\mU = \begin{bmatrix} \vu_1 & \cdots & \vu_K \end{bmatrix}^\dagger$) and then performing measurements in the canonical (or computational) basis $\ve_1,\ldots,\ve_{D}$. Both steps can be
implemented on a universal quantum computer, though the complexity of synthesizing $\mU$ by quantum circuits is matrix-dependent. As an aside, we note that rank-1 orthonormal POVMs with $L=1$ comprised the focus of the original CS proposal~\cite{huang2020predicting}, although CS extensions to both more generic POVMs~\cite{Nguyen2022} and $L>1$ are possible~\cite{elben2023randomized, zhou2023performance}. 

\textit{POVM ensembles.---}As in the case of the above rank-1 POVM, an individual POVM might not achieve informational completeness; therefore measuring with multiple POVMs can be used to acquire a more holistic understanding of the quantum state. For simplicity, consider $M$ POVMs, indexed by $m\in[M]$, where each POVM $\{\mA_{m,k}\}_{k\in[K]}$ contains the same number of PSD operators $K$ and is probed with $L$ shots, returning the empirical frequencies $\wh \vp_m$ as described in Eq.~\eqref{eq:empirical-prob}, for $m \in [M]$ where the bold notation indicates a vector: $\wh \vp_m = [f_{m,1} \cdots f_{m,K}]^\top/L  = [\wh p_{m,1} \cdots \wh p_{m,K}]^\top$. 

To simplify the notation, we collect the probabilities for each POVM $\{ \innerprod{\mA_{m,k}}{\vrho}\}$, %
into a single linear map $\calA_m:  \C^{D\times  D} \rightarrow \R^K$ of the form %
\begin{equation}
\calA_m(\vrho)= \begin{bmatrix}
         \innerprod{\mA_{m,1}}{\vrho} \\
          \vdots \\
          \innerprod{\mA_{m,K}}{\vrho}
        \end{bmatrix}.
\end{equation}
If we vectorize $\vrho$ and $\mA_{m,k}$ into $\vec(\vrho)$ and $\vec(\mA_k)$  such that $\innerprod{\mA_k}{\vrho} = (\vec(\mA_k))^\dagger \vec(\vrho)$ and define $\mB = \begin{bmatrix}  
\vec(\mA_{m,1}) & \cdots & \vec(\mA_{m,K}) 
\end{bmatrix}\in \C^{D^2\times K}$, then $\calA_m(\vrho)$ can be written as matrix-vector product of form
\e
\calA_m(\vrho) = \mB^\dagger \vec(\vrho).
\label{eq:vector form}\ee
Stacking all the empirical frequencies $\{\wh \vp_m\}$ and the linear operators $\{\calA_m\}$ as a single linear map  $\calA:  \C^{D\times  D} \rightarrow \R^{MK}$, we can write 
\e
\wh \vp = \begin{bmatrix}
    \wh \vp_1 \\ \vdots \\ \wh \vp_M
\end{bmatrix}, \quad \calA(\vrho) =  \begin{bmatrix}
    \calA_1(\vrho) \\ \vdots \\ \calA_M(\vrho)
\end{bmatrix}.
\label{eq:map-M-POVM}\ee
It is important to note that no assumptions about informational (tomographic) completeness have been applied in the formalism so far. Specifically, the linear map $\calA$ is informationally complete
iff $\mathrm{rank}(\calA) = D^2$; i.e., we can form exactly $D^2$
linearly independent operators by linearly combining the set of POVMs \cite{acharya2021informationally}.
In the regime of interest to CS, $\mathrm{rank}(\calA)\ll D^2$ typically holds, although we note that it is possible to formally define a \emph{single informationally complete} POVM so that $\mathrm{rank}(\calA)= D^2$ even with $M=1$---a construction that has been shown valuable for both theoretical analyses~\cite{acharya2021informationally,Nguyen2022} and experimental implementation~\cite{Stricker2022} of CS techniques. In this case, any estimator error stems solely from the number of shots $L$. In our analysis, we do not restrict to informational completeness and in the numerical simulations below consider rank-1 POVMs with $K=D$ outcomes.  
Nonetheless, the formalism developed applies to any combination of $M$, $K$, and $L$ and thus can be explored for any POVMs of potential interest.

\subsection{LS Estimation}
Without any prior information about $\vrho$, we can estimate it from the measurements $\wh \vp$ by the LS estimator
\e
\wh \vrho = \argmin_{\vrho' \in \C^{D\times  D}} \norm{\wh \vp - \calA(\vrho')}{2}^2,
\label{eq:ls-problem}\ee
by writing $\calA(\vrho')$ as matrix-vector product as in Eq.~(\ref{eq:vector form}). While one can explicitly enforce $\vrho'$ to be Hermitian and trace one, 
we will show in Lemma~\ref{lemma:1} that solutions to Eq.~(\ref{eq:ls-problem}) automatically adhere to these properties.
Additionally, the solution $\wh \vrho$ satisfies the following normal equation 
\e
\calA^\dagger\calA(\wh\vrho) = \calA^\dagger(\wh \vp),
\ee
where
\begin{align}
\begin{split}
    & \calA^\dagger(\wh \vp) = \sum_{m=1}^M \calA_m^\dagger(\wh \vp_m)  = \sum_{m=1}^M \sum_{k=1}^K \wh p_{m,k} \mA_{m,k},\\
    &\calA^\dagger \calA(\wh\vrho) = \sum_{m=1}^M \sum_{k=1}^K  \innerprod{\mA_{m,k}}{\wh\vrho} \mA_{m,k}.
\end{split}
\end{align}
When the ensemble of $M$ POVMs is informationally complete, %
$\calA^\dagger \calA$ is invertible and the solution is unique, given by $\wh \vrho = \parans{\calA^\dagger\calA}^{-1} \parans{\calA^\dagger(\wh \vp)}$. On the contrary, when the $M$ POVMs are not informationally complete, 
the operator $\calA^\dagger \calA$ is rank-deficient and the above problem has an infinite number of solutions. 
Among all possibilities, a common choice is to select the one that has the smallest norm or energy, also known as minimum-norm estimator, which can be obtained by applying the pseudoinverse %
$(\calA^\dagger \calA)^{+}$:
\begin{align}
\begin{split}
    \wh \vrho & = \parans{\calA^\dagger\calA}^{+} \parans{\calA^\dagger(\wh \vp)} \\ & = \sum_{m=1}^M  \parans{\calA^\dagger \calA}^{+}\parans{\calA_m^\dagger(\wh \vp_m)} \\ & = \frac{1}{M} \sum_{m=1}^M  \underbrace{\left(\frac{1}{M}\calA^\dagger \calA\right)^{+}\parans{\calA_m^\dagger(\wh \vp_m)}}_{\text{LS shadow}},\label{eq:shadow-v1}
\end{split}
\end{align}
where the significance of defining an LS ``shadow'' operator is elaborated on below.
When the POVMs are informationally complete, $(\calA^\dagger \calA)^{+}$ becomes $(\calA^\dagger \calA)^{-1}$. Thus, Eq.~\eqref{eq:shadow-v1} holds for both informationally complete and incomplete cases.

\begin{lemma}\label{lemma:1} The LS estimator $\wh\vrho$ is always Hermitian. Moreover, if $\wh \vp$ lies in the range space of $\calA$, then the LS estimator $\wh\vrho$ also has trace $1$. 
\end{lemma}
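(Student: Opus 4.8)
The plan is to argue directly from the closed form $\wh\vrho = (\calA^\dagger\calA)^{+}\calA^\dagger(\wh\vp)$ of Eq.~\eqref{eq:shadow-v1}, using only two elementary structural facts: that $\calA^\dagger(\wh\vp)$ is Hermitian, and that $\mId$ lies in the range of $\calA^\dagger$.

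\emph{Hermiticity.} First I would record that $\calA^\dagger(\wh\vp) = \sum_{m,k}\wh p_{m,k}\mA_{m,k}$ is Hermitian, being a real combination of the Hermitian (indeed PSD) POVM elements. The normal equations exhibit the set of LS minimizers as the closed affine set $\mathcal{M} = \{\vrho' : \calA^\dagger\calA(\vrho') = \calA^\dagger(\wh\vp)\}$, and I would check that $\mathcal{M}$ is invariant under the conjugate-transpose map $\mX\mapsto\mX^\dagger$: this follows because $\calA^\dagger\calA$ commutes with $\dagger$ (a one-line computation from $\calA^\dagger\calA(\mX) = \sum_{m,k}\innerprod{\mA_{m,k}}{\mX}\mA_{m,k}$ together with $\overline{\innerprod{\mA_{m,k}}{\mX}} = \innerprod{\mA_{m,k}}{\mX^\dagger}$ for Hermitian $\mA_{m,k}$), while the right-hand side $\calA^\dagger(\wh\vp)$ is fixed by $\dagger$. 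Since $\wh\vrho$ is by construction the unique element of $\mathcal{M}$ of minimal Frobenius norm and $\dagger$ is a norm-preserving involution of $\mathcal{M}$, uniqueness forces $\wh\vrho^\dagger = \wh\vrho$. (Equivalently: $\dagger$-commutation makes the Hermitian and skew-Hermitian subspaces orthogonal and $\calA^\dagger\calA$-invariant, so $(\calA^\dagger\calA)^{+}$ preserves that splitting and hence maps the Hermitian matrix $\calA^\dagger(\wh\vp)$ to a Hermitian matrix.)

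\emph{Trace.} For the second claim I would use that $(\calA^\dagger\calA)^{+}\calA^\dagger\calA = \mathcal{P}$ is the orthogonal (Hilbert--Schmidt) projector onto $\Range(\calA^\dagger) = \Null(\calA)^{\perp}$, and the key observation that $\mId \in \Range(\calA^\dagger)$, since each POVM obeys $\sum_{k}\mA_{m,k} = \mId = \calA_m^\dagger(\vone)$. Writing the hypothesis $\wh\vp\in\Range(\calA)$ as $\wh\vp = \calA(\vrho_0)$ for some $\vrho_0\in\C^{D\times D}$, one gets $\wh\vrho = \mathcal{P}\vrho_0$; pairing with $\mId$ and using self-adjointness of $\mathcal{P}$ together with $\mathcal{P}\mId = \mId$ yields $\trace(\wh\vrho) = \trace(\vrho_0)$. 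Finally $\trace(\vrho_0)$ is pinned down block by block: since $\sum_k\mA_{m,k}=\mId$ for each $m$, $\trace(\vrho_0) = \sum_k \innerprod{\mA_{m,k}}{\vrho_0} = \sum_k [\calA_m(\vrho_0)]_k = \sum_k \wh p_{m,k} = 1$, where the last equality is the defining property of empirical frequencies ($\sum_k f_{m,k} = L$). Hence $\trace(\wh\vrho) = 1$.

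\emph{Main obstacle.} The only genuinely delicate point is the bookkeeping with complex conjugation: $\C^{D\times D}$ should be treated as a \emph{real} Hilbert space under $\langle\mX,\mY\rangle = \Re\trace(\mX^\dagger\mY)$, on which $\calA^\dagger\calA$ is self-adjoint and the Hermitian/skew-Hermitian decomposition is orthogonal and invariant, so that the pseudoinverse really does block-diagonalize along it; since the $\dagger$ involution is only conjugate-linear, carrying it through $(\calA^\dagger\calA)^{+}$ or through $\calA$ (where $\calA(\mX^\dagger) = \overline{\calA(\mX)}$) must be done with care. Once that setup is fixed, everything reduces to standard facts about pseudoinverses and orthogonal projectors, plus the two observations that $\calA^\dagger(\wh\vp)$ is Hermitian and $\mId\in\Range(\calA^\dagger)$.
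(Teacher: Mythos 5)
Your proof is correct, but your route to Hermiticity is genuinely different from the paper's. The paper uses the pseudoinverse identity $(\calA^\dagger\calA)^{+}\calA^\dagger = \calA^\dagger(\calA\calA^\dagger)^{+}$ to write $\wh\vrho = \calA^\dagger\bigl((\calA\calA^\dagger)^{+}(\wh\vp)\bigr)$; since $\wh\vp$ is real and $\calA\calA^\dagger$ acts on $\R^{MK}$, the coefficient vector is real, so $\wh\vrho$ is a real linear combination of the Hermitian $\mA_{m,k}$ and is therefore Hermitian. You instead show that the solution set of the normal equations is invariant under $\mX\mapsto\mX^\dagger$ (because $\calA^\dagger\calA$ commutes with $\dagger$ while $\calA^\dagger(\wh\vp)$ is $\dagger$-fixed) and invoke uniqueness of the minimal-Frobenius-norm element to conclude that $\wh\vrho$ is a fixed point of $\dagger$. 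Both arguments are valid: the paper's is a one-line consequence of a standard pseudoinverse identity, whereas your symmetry-plus-uniqueness argument transfers verbatim to any $\dagger$-invariant strictly convex penalty (e.g.\ it also proves Hermiticity of the RLS estimator in Eq.~\eqref{eq:shadow-v2}), at the cost of the bookkeeping you rightly flag about treating $\C^{D\times D}$ as a real Hilbert space so that the Hermitian/skew-Hermitian splitting is orthogonal and $(\calA^\dagger\calA)^{+}$ respects it. For the trace claim the two proofs use the same two ingredients, $\sum_k\mA_{m,k}=\mId$ and $\sum_k\wh p_{m,k}=1$; the paper is slightly more direct, noting that $\wh\vp\in\Range(\calA)$ forces the exact fit $\calA(\wh\vrho)=\wh\vp$ and then summing $\vone^\top\calA(\wh\vrho)=M\trace(\wh\vrho)=\vone^\top\wh\vp=M$, while you route through the projector $\mathcal{P}=(\calA^\dagger\calA)^{+}\calA^\dagger\calA$ and the observation $\mId\in\Range(\calA^\dagger)$. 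Your detour is sound but compressible: $\calA\circ\mathcal{P}=\calA$ already yields $\calA(\wh\vrho)=\calA(\vrho_0)=\wh\vp$, after which the paper's summation finishes the job.
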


\begin{proof} 
Using the two equivalent forms for the pseudoinverse $\calA^+ = \parans{\calA^\dagger\calA}^{+} \calA^\dagger = \calA^\dagger \parans{\calA\calA^\dagger}^{+}$  \cite{albert1972regression}, we can rewrite the LS estimator as $\widehat \vrho = \calA^\dagger \parans{\parans{\calA\calA^\dagger}^{+}(\wh \vp)}$. Since $\wh \vp$ is a real vector and $\calA\calA^\dagger$ is a linear map of $\R^{MK}\rightarrow \R^{MK}$, $\parans{\calA\calA^\dagger}^{+}(\wh \vp)$ is also a real vector. As $\widehat \vrho$ lies in the range space of $\calA^\dagger$ that can be written as $\sum_{m=1}^M \sum_{k=1}^K \alpha_{m,k}\mA_{m,k}$ with real $\alpha_{m,k}$ and Hermitian $\mA_{m,k}$, $\wh \vrho$ is always Hermitian.

Now if $\wh \vp$ lies in the range space of $\calA$, then the LS estimator $\wh\vrho$ satisfies $\calA(\wh \vrho) = \wh \vp$, which further implies that $\vone^\top \calA(\wh \vrho) = \vone^\top \wh \vp$. Since $\vone^\top \wh \vp = M$ and  $\vone^\top \calA(\wh \vrho) = \sum_{m=1}^M \sum_{k=1}^K  \innerprod{\mA_{m,k}}{\wh\vrho} = M \trace(\wh \vrho)$, we have $\trace(\wh \vrho) = 1$. 
\end{proof}

When the operators  $\mA_{1,1},\ldots,\mA_{M,K}$ are linearly independent, $\wh \vp$ lies in the range space of $\calA$. On the other hand, even when $\wh \vp$ does not lie in the range space of $\calA$---which could happen when $MK>D^2$ and in which case the second half of Lemma~\ref{lemma:1} does not apply---we have observed in numerical experiments that the LS estimator $\wh \vrho$ either has trace 1 or is very close to 1 in practice. 

\textit{LS shadow.---}In anticipation of the CS formalism introduced in Sec.~\ref{sec:CS}, we may define $\wh\vrho_m \vcentcolon= \mathcal{S}\left(\calA_m^\dagger(\wh \vp_m)\right) = (\frac{1}{M}\calA^\dagger \calA)^{+}\calA_m^\dagger(\wh \vp_m)$ the {\it LS shadow} of $\vrho$ associated with measurement $m$. Unlike CS shadows, these LS shadows can be {\it biased} estimators of the state $\vrho$ as their average $\wh \vrho$  biases towards the minimum norm \cite{schwemmer2015systematic}. Yet like CS, the final state estimate $\wh\vrho$ is simply the average of the available individual shadows.

Intuitively, one would expect the performance of the LS estimator to improve with more POVM measurements $M$. We can quantify 
the agreement of the total estimator $\wh\vrho = \frac{1}{M}\sum_{m=1}^M \wh\vrho_m$ with the ground truth $\vrho$ through the Frobenius error $\|\wh\vrho-\vrho\|_F$, and the value of any observable $\lambda = \trace(\mLambda \vrho)$ through the mean squared error (MSE) $\E [(\wh\lambda - \lambda)^2]$, where $\wh \lambda = \trace(\mLambda \wh\vrho)$. 

Computing the error of all estimators with respect to ground truth quantities provides an objective standard by which we can compare all estimation methods in this study. In this vein, it is important to note that the three approaches under consideration (LS, RLS, and CS) correspond to different estimation techniques under a common physical model; i.e., all approaches seek to estimate the same unknown quantum state $\rho$ and assume the same mapping from state to probabilities [Eq.~\eqref{The defi of POVM 2}].
In other words, the problem of interest concerns estimation techniques and not model selection, so model identification tools such as the Akaike information criterion~\cite{Akaike1974}---considered in a variety of quantum state estimation contexts~\cite{Yin2011, vanEnk2013, Scholten2018, Yano2023}---do not apply.

On another note, Ref.~\cite{huang2020predicting} employed an additional statistical technique, ``median of means,'' to reduce the impact of outliers by partitioning the shadows into several groups and taking the median as the estimate. Incidentally, in recent experimental tests of CS, no significant difference was observed in the performance of the two approaches (mean versus median of means)~\cite{struchalin2021experimental}. Roughly speaking, the median-of-means approach is not designed to reduce the variance of the estimator, but rather  %
obtain a better concentration bound than the sample mean alone \cite{huang2020predicting}. Thus, as this paper focuses on the variance (i.e., MSE) instead of the concentration bound for performance quantification, the sample mean represents the most suitable estimator for our purposes. All that said,
we do expect similar phenomena to hold for all comparisons below with the median of means.

For our simulated experiments, we invoke the setup of the original CS proposal \cite{huang2020predicting} with ($K=D$)-outcome, rank-1 POVMs $\{\mA_{m,1},\ldots,\mA_{m,D} \}$ defined according to $\mA_{m,k} = \vu_{m,k}\vu_{m,k}^\dagger$, where each $\mU_m = \begin{bmatrix} \vu_{m,1} \cdots \vu_{m,D} \end{bmatrix}^\dagger$ is a randomly chosen $D\times  D$ unitary matrix.
Each POVM measures the state only once (i.e., $L = 1$) so that $\wh \vp_m$ becomes a one-hot vector, in which case $\calA_m^\dagger (\wh \vp_m)$ can be rewritten as 
\e
\calA_m^\dagger (\wh \vp_m) = \sum_{k=1}^D \wh p_{m,k} \vu_{m,k}\vu_{m,k}^\dagger = (\mU_m^\dagger \wh \vp_m)(\mU_m^\dagger \wh \vp_m)^\dagger.
\label{eq:Aadjoint-rank-one}\ee

Motivated by our previous study~\cite{lukens2021bayesian}, we consider $n = 5$ qubits, Haar-random unitaries, and a fixed ground truth state 
$\vrho = \ve_0 \ve_0^\dagger$. We focus on three rank-1 observables,  $\mLambda_i = \vphi_i \vphi_i^\dagger, i = 0,1,2$, where 
\begin{align}
    \vphi_0 & = \ve_0, 
    \vphi_1  = \frac{1}{\sqrt{2}}\ve_0 + \frac{1}{\sqrt{2(D-1)}} \sum_{j=1}^{D-1}\ve_j,
    \vphi_2 = \ve_1.
\end{align}
These possess ground truth values $\lambda_0=1$, $\lambda_1=1/2$, and $\lambda_2=0$ regardless of dimension $D$,
providing an informative range for exploration.

\begin{figure*}[t]
\centering
\begin{subfigure}{0.24\textwidth}
\centering
\includegraphics[width= \textwidth]{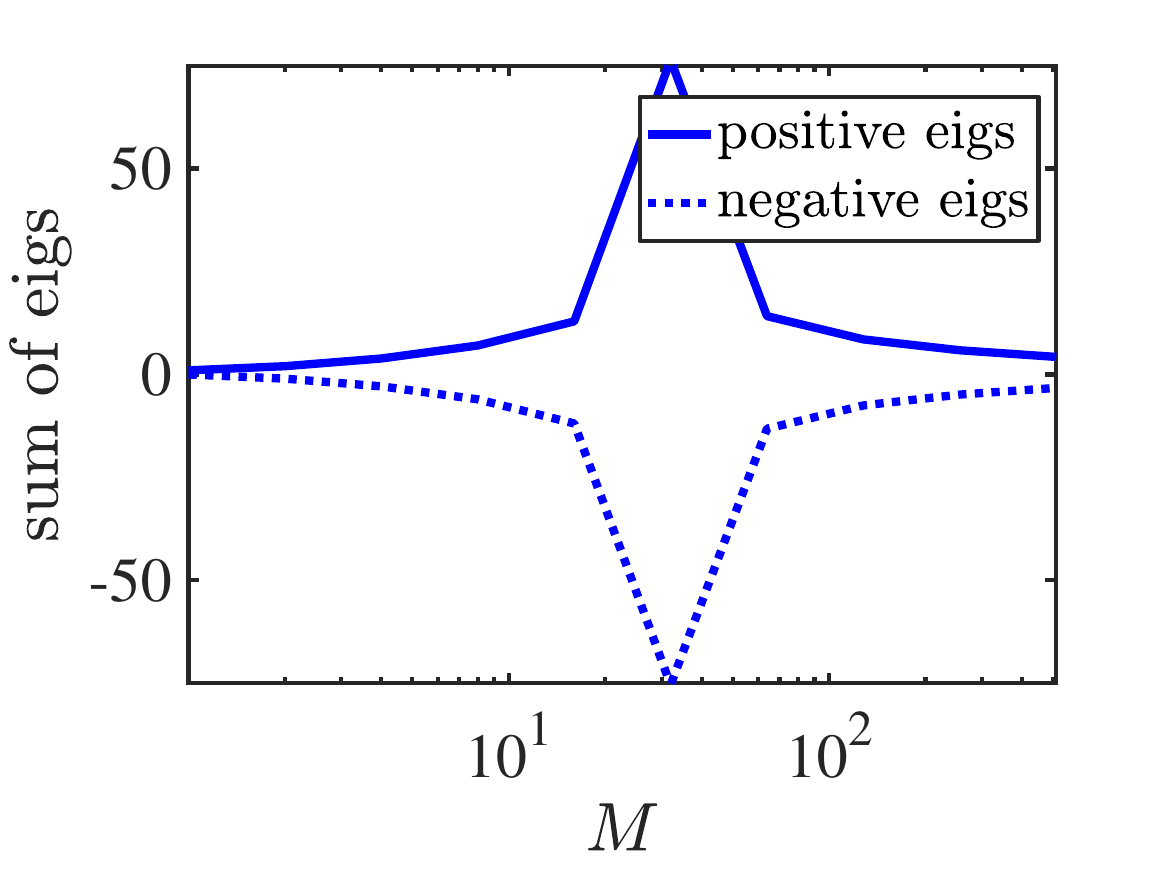}
\end{subfigure}
\begin{subfigure}{0.24\textwidth}
\centering
\includegraphics[width=\textwidth]{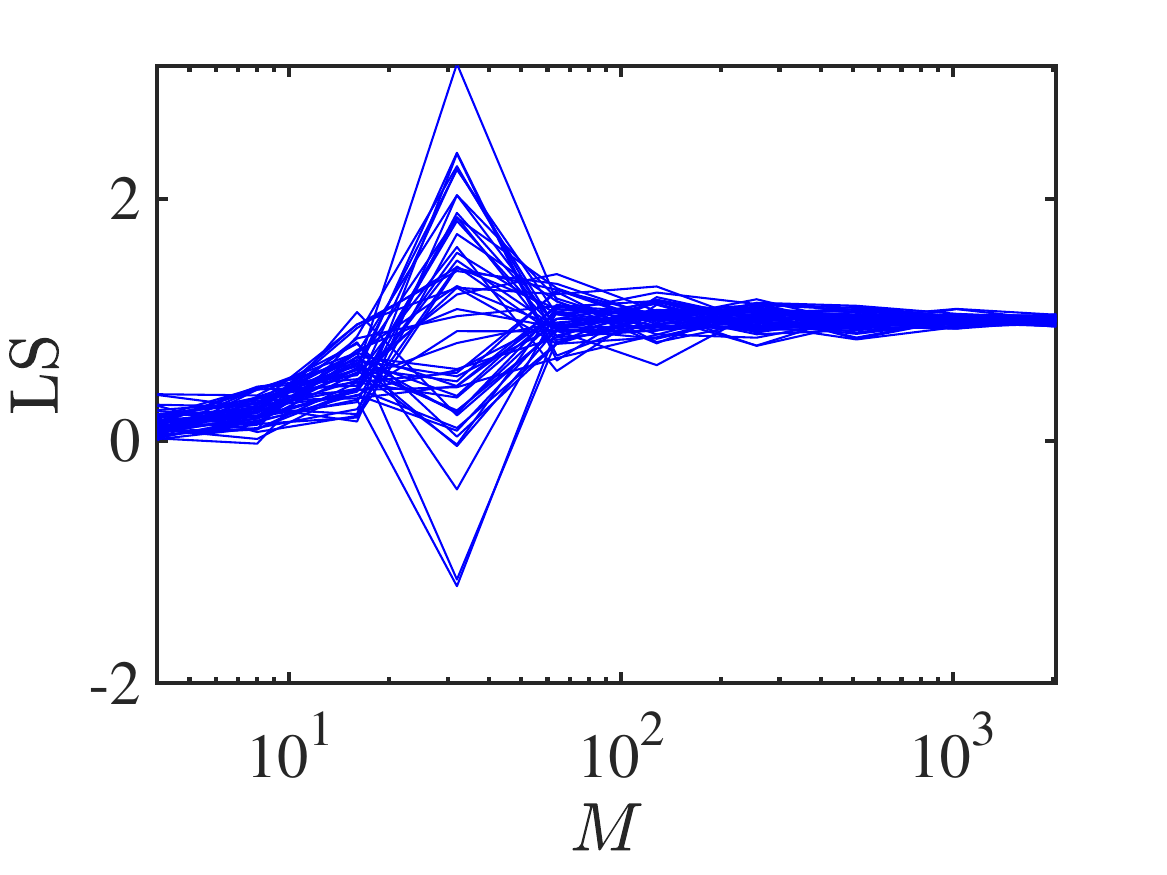}
\end{subfigure}
\begin{subfigure}{0.24\textwidth}
\includegraphics[width=\textwidth]{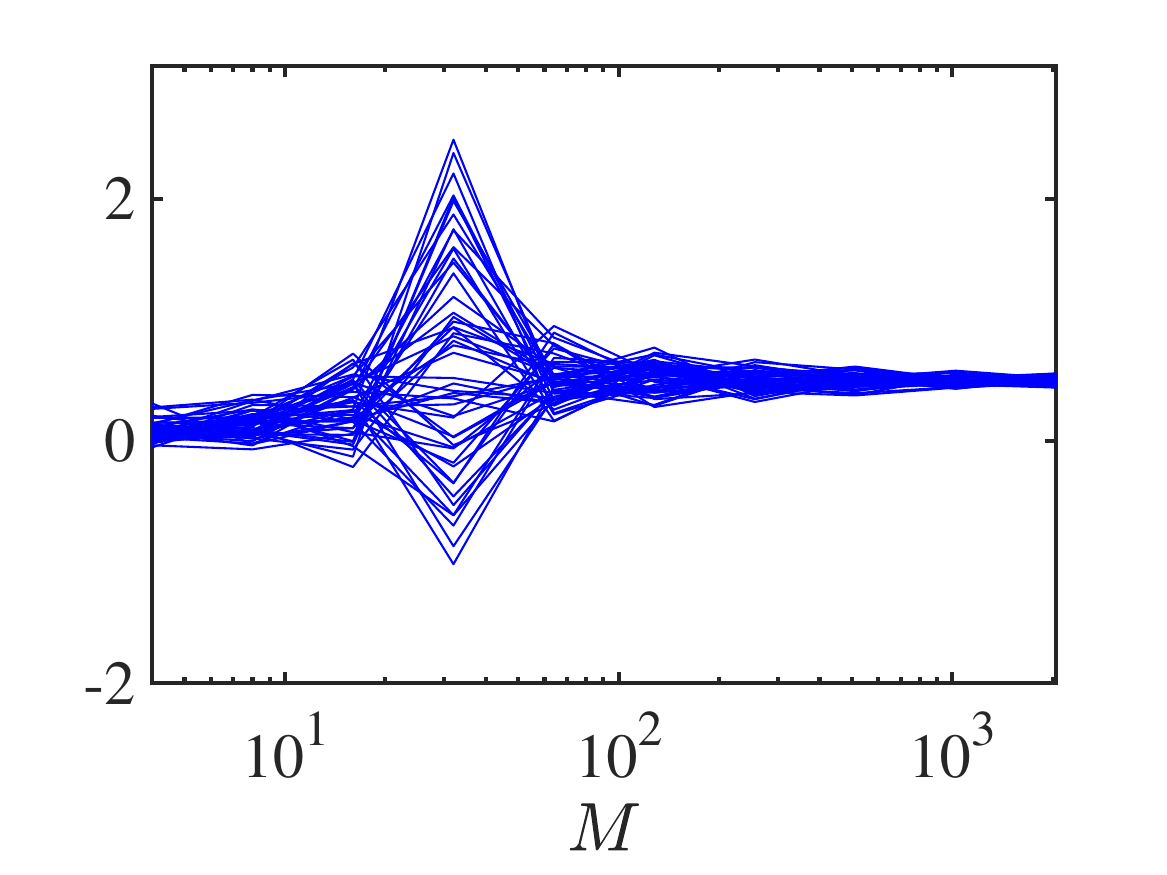}
\end{subfigure}
\begin{subfigure}{0.24\textwidth}
\includegraphics[width=\textwidth]{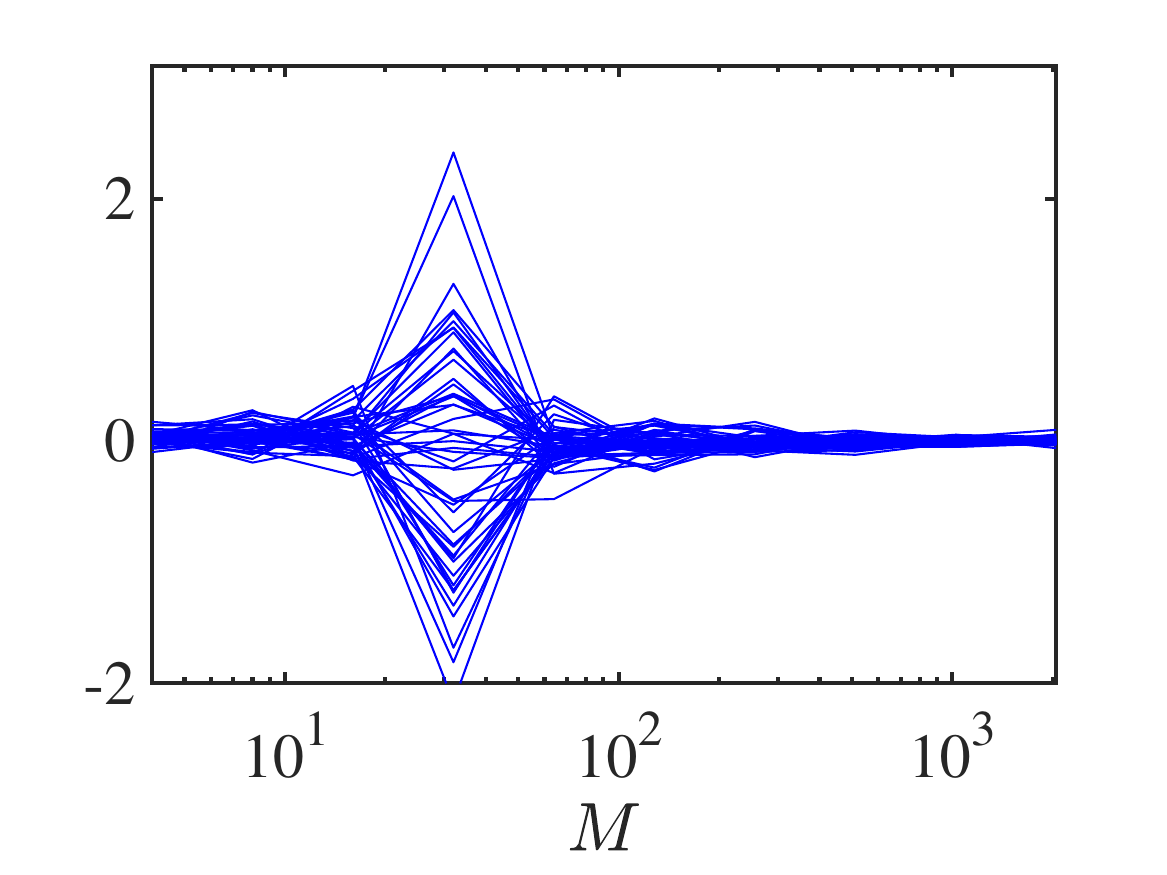}
\end{subfigure}
\vfill
\begin{subfigure}{0.24\textwidth}
\includegraphics[width=\textwidth]{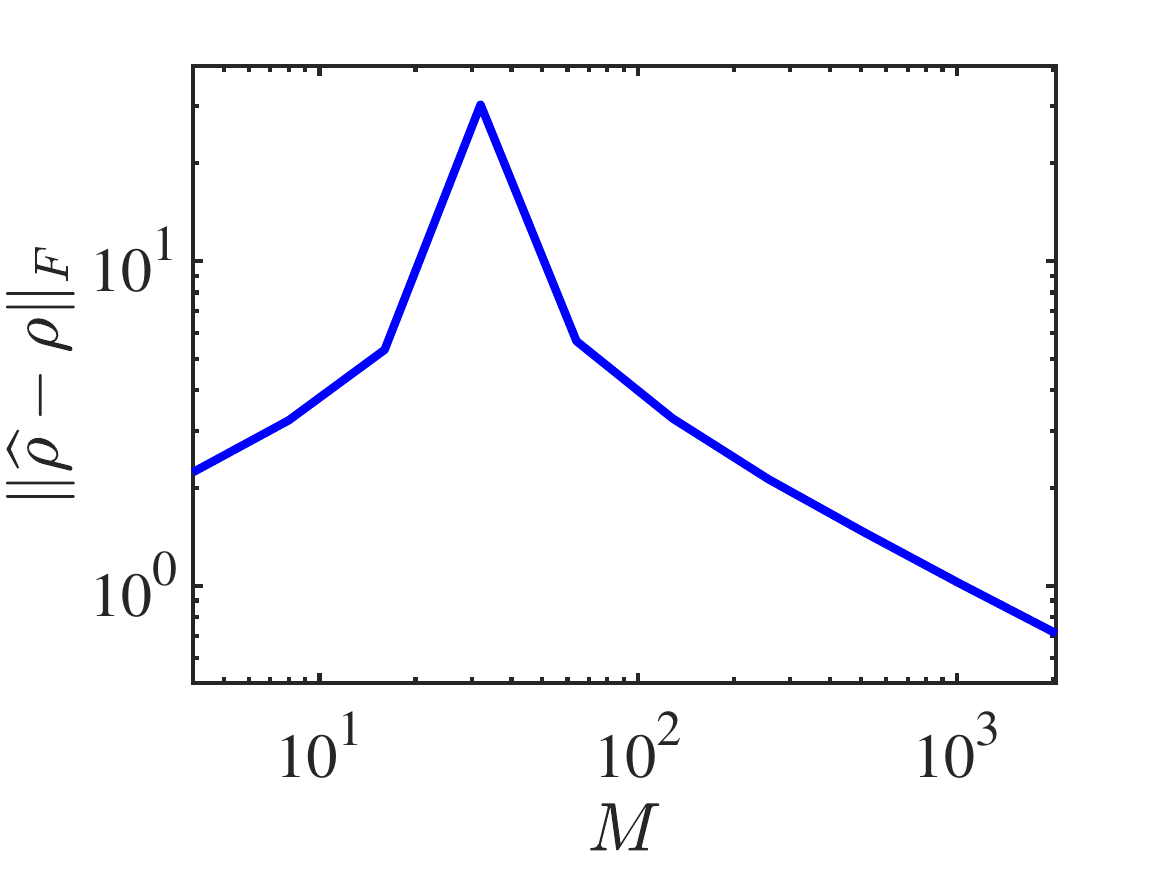}
\caption{$\wh \vrho$}
\end{subfigure}
\begin{subfigure}{0.24\textwidth}
\includegraphics[width=\textwidth]{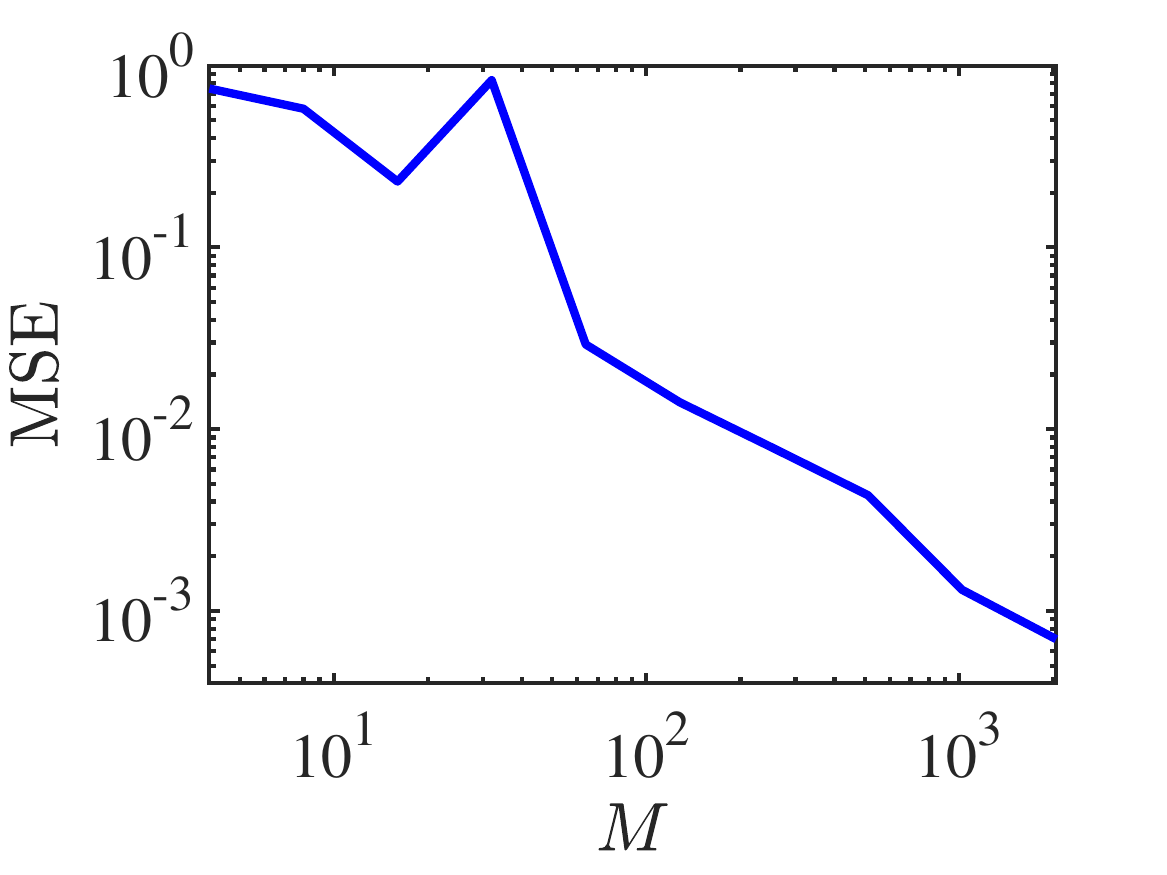}
\caption{$\wh\lambda_0$}
\end{subfigure}
\begin{subfigure}{0.24\textwidth}
\includegraphics[width=\textwidth]{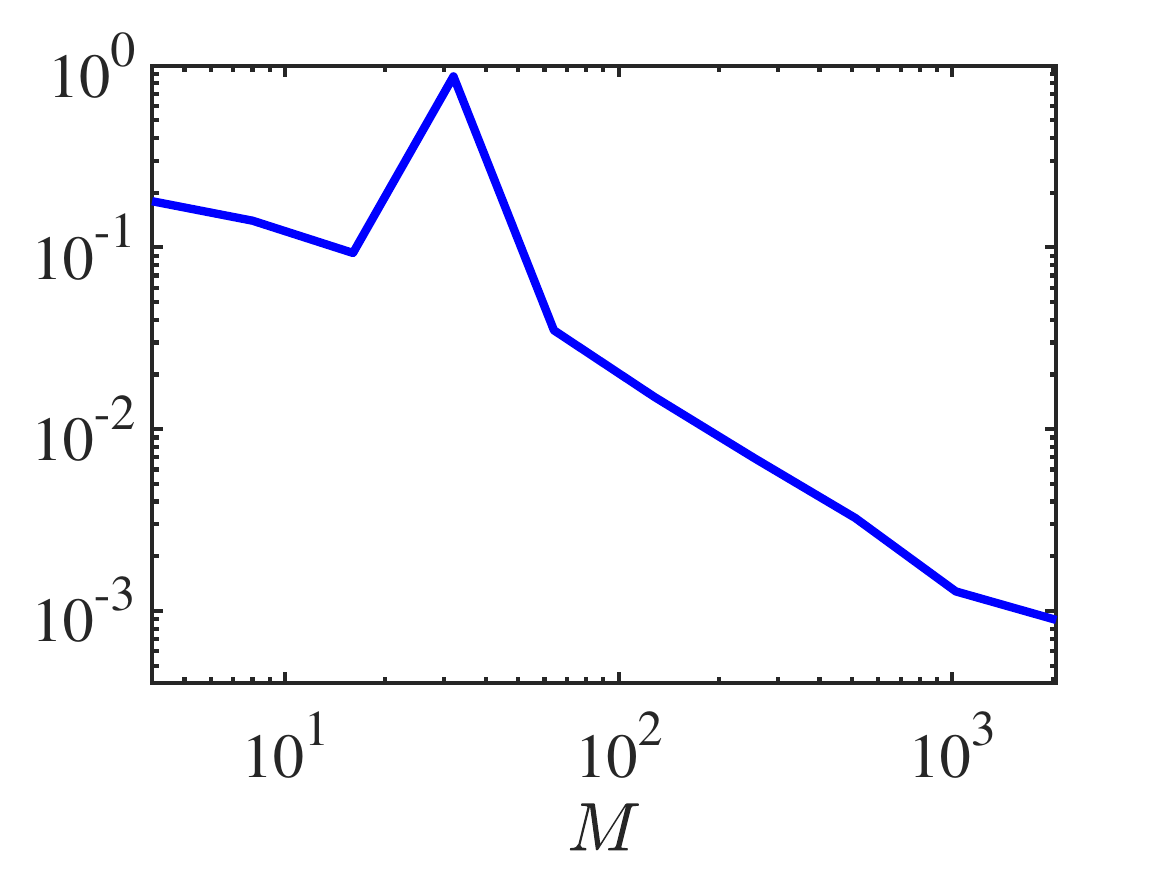}
\caption{$\wh\lambda_1$}
\end{subfigure}
\begin{subfigure}{0.24\textwidth}
\includegraphics[width=\textwidth]{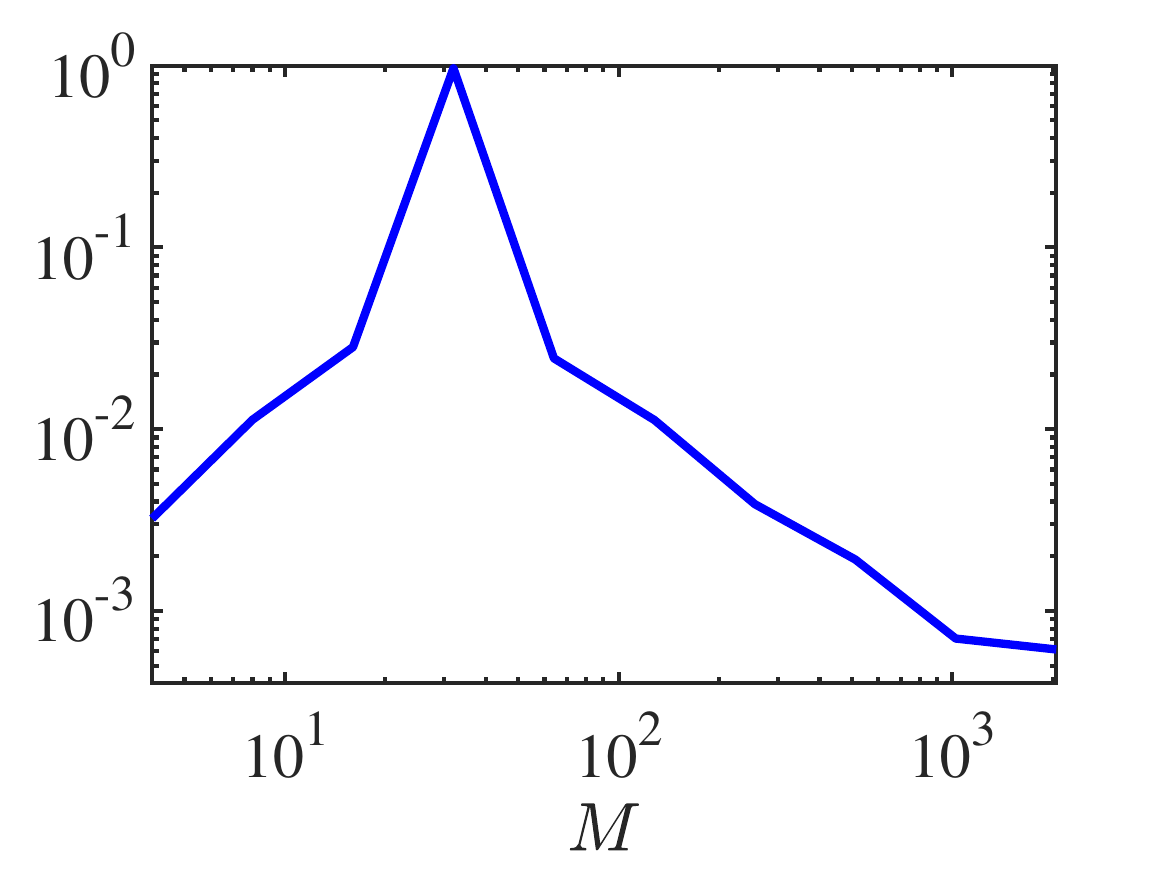}
\caption{$\wh\lambda_2$}
\end{subfigure}
\caption{Illustration of the performance of the LS shadow for estimating the state $\vrho$ and the linear observables $\lambda_i = \trace{\mLambda_i \vrho}$ with  $\mLambda_i = \vphi_i \vphi_i^\dagger$, where $\vphi_0 = \ve_0, \vphi_1 = \frac{1}{\sqrt{2}}\ve_0 + \frac{1}{\sqrt{2(D-1)}} \sum_{j=1}^{D-1}\ve_j, \vphi_2 = \ve_1$: (a) the sum of positive eigenvalues and negative eigenvalues of $\wh \vrho$, and $\|\wh \vrho - \vrho\|_F$, (b-d) $\wh \lambda_i$ (estimator for $\lambda_i$) from 50 independent trials, and the corresponding MSE $(\wh \lambda_i - \lambda_i)^2$ averaged over the 50 trials. 
}
\label{fig:LS}
\end{figure*}

\textit{LS and ``double descent'' phenomena.---}We simulate experiments over 50 independent trials, in each of which we compute the LS estimator for ten collections of measurements $M\in\{2^2,2^3,\ldots,2^{11}\}$.  Notably, Fig.~\ref{fig:LS} shows that the estimation errors for both the state $\vrho$ and the linear observables $\lambda_i$ do not monotonically decrease with the number of POVMs $M$. In particular, for $\vrho$ and $\lambda_2$, when $M$ increases, the estimation error first increases in the underdetermined regime ($M < D$), peaks at the interpolation regime when $M \approx D$, and then decreases when entering the overdetermined regime ($M > D$). Here, the three regimes are defined according to the relationship between the total number of outcomes $MD$ and the size of the state $D^2$, corresponding to the number of equations and parameters in the LS problem [Eq.~\eqref{eq:ls-problem}]. %
The curves of the estimation error for $\lambda_0$ and $\lambda_1$ first decrease, then increase, likewise peaking in the interpolating regime ($M\approx D$), and finally decrease with $M$. This resembles the ``double descent'' phenomenon observed in deep neural networks: performance first improves, then gets worse, and then improves again with increasing model or data size \cite{belkin2019reconciling,nakkiran2021deep,yang2020rethinking,singhphenomenology,chen2024gibbs}. This phenomenon has been formally studied for linear regression problems under certain statistical models~\cite{nakkiran2019more,belkin2020two,bartlett2020benign,sonthalia2023under,curth2024u}, and has recently been observed in neural networks for quantum state tomography~\cite{Smith2021} and polarimetry of vector beams~\cite{Pierangeli2023}.

Roughly speaking, in the interpolating regime $M\approx D$, $(\calA^\dagger \calA)^{+}$ is unstable (with very large singular values) and the LS estimator $\wh \vrho$ becomes highly nonphysical, i.e., it has large negative eigenvalues, resulting in large errors with respect to the ground truth [Fig.~\ref{fig:LS}(a)]. This appears in Fig.~\ref{fig:LS}(b--d) on the estimators $\wh\lambda_i$ as well, which become unstable around the interpolating regime, varying widely between trials. A formal analysis of this phenomenon is beyond the scope of the present investigation and is reserved for future work. 
Nonetheless, its presence in the context of the LS shadow estimator forms an important springboard for the techniques described in the following section. Both RLS and CS estimation procedures mitigate the issue of double descent by replacing the pseudoinverse $(\calA^\dagger \calA)^{+}$ in the LS shadow calculation $\mathcal{S}(\cdot)$ with a stabilized alternative: the similarities---and differences---between the RLS and CS solutions in turn reveal an interesting picture of CS estimation as a complementary and computationally efficient ``regularizer'' for the LS shadow.

\section{Stabilizing the LS Estimator}
\label{sec:stabilizing}

\subsection{RLS Estimation}
\label{sec:RLS}
In the underdetermined regime where the number of tested outcomes $MD$ is smaller than the size of the state $D^2$, regularization has been widely adopted for constraining the resulting solution. In the literature of quantum state tomography, regularization or constraint has been exploited for stable, low-measurement reconstruction under the assumption of specific structural features---such as low-rank states \cite{gross2010quantum,liu2011universal,KuengACHA17, haah2017sample,guctua2020fast,francca2021fast} and matrix product states and operators \cite{verstraete2004matrix,pirvu2010matrix,cramer2010efficient,baumgratz2013scalable,lidiak2022quantum,noh2020efficient,wang2020scalable,jarkovsky2020efficient,qin2023stable}. Without assuming any particular state structure,%
a common regularization 
in statistics and machine learning is $\ell_2$ regularization, resulting in the so-called RLS or ridge-regression estimator. 
This has also been widely used in quantum state tomography~\cite{opatrny1997least,mu2020quantum}.
The $\ell_2$ regularization often leads to a dense solution; in the context of quantum states, it tends to push toward mixed states (lower purity). Specifically, for a given $\mu\ge 0$,
\begin{align}
\begin{split}
\wh \vrho &= \argmin_{\vrho' \in \C^{D\times  D}} \left\{ \norm{\wh \vp - \calA(\vrho')}{{2}}^2 +\mu \norm{\vrho'}{F}^2 \right\}\\
& = (\calA^\dagger \calA + \mu\mId)^{-1}\calA^\dagger(\wh \vp) \\ & = \frac{1}{M} \sum_{m=1}^M  \underbrace{\left(\frac{1}{M}(\calA^\dagger \calA + \mu\mId)\right)^{-1}\calA_m^\dagger(\wh \vp_m)}_{\text{RLS shadow}}.\label{eq:shadow-v2}
\end{split}
\end{align} 
With the introduced $\ell_2$ regularization, $\calA^\dagger \calA + \mu\mId$ is invertible and its condition number decreases as $\mu$ increases, achieving the purpose of stabilization. Following the discussion of LS shadows, we may call the induced $\wh \vrho_m \vcentcolon= \mathcal{S}\left(\calA_m^\dagger(\wh\vp_m)\right)= \left(\frac{1}{M}(\calA^\dagger \calA + \mu\mId)\right)^{-1}\calA_m^\dagger(\wh \vp_m)$ the ``RLS shadow.'' Similar to LS shadows, these RLS shadows are biased estimators of the ground truth $\vrho$.

\begin{figure*}[t]
\centering
\begin{subfigure}{0.24\textwidth}
\centering
\includegraphics[width= \textwidth]{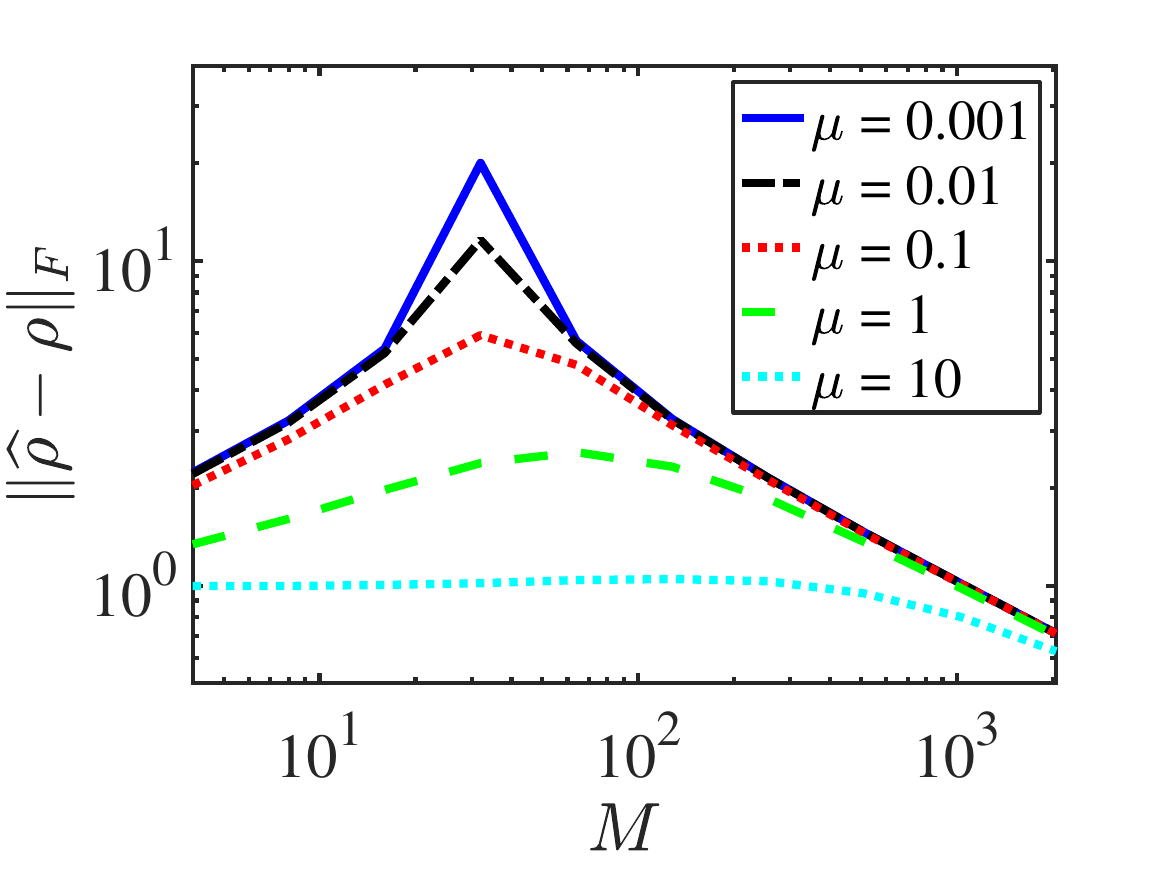}
\caption{$\wh \vrho$}
\end{subfigure}
\begin{subfigure}{0.24\textwidth}
\centering
\includegraphics[width=\textwidth]{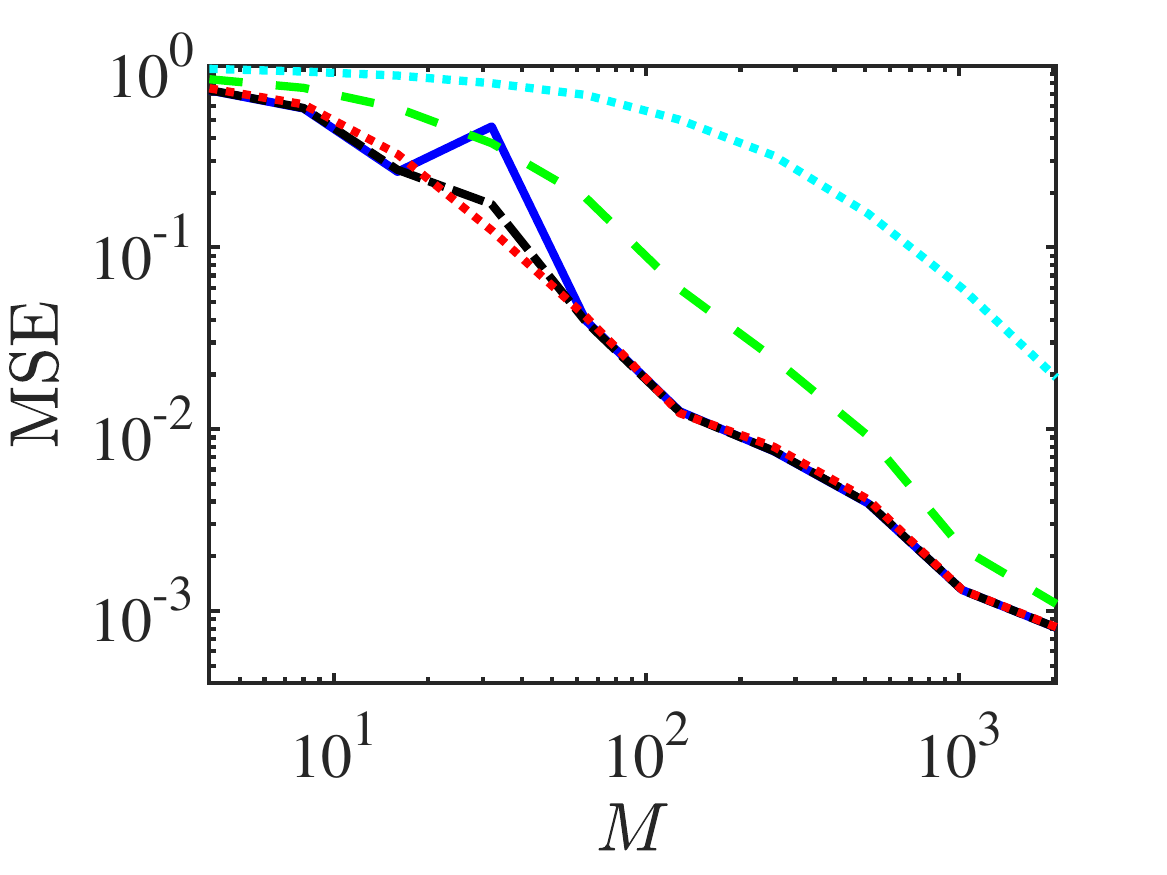}
\caption{$\wh\lambda_0$}
\end{subfigure}
\begin{subfigure}{0.24\textwidth}
\includegraphics[width=\textwidth]{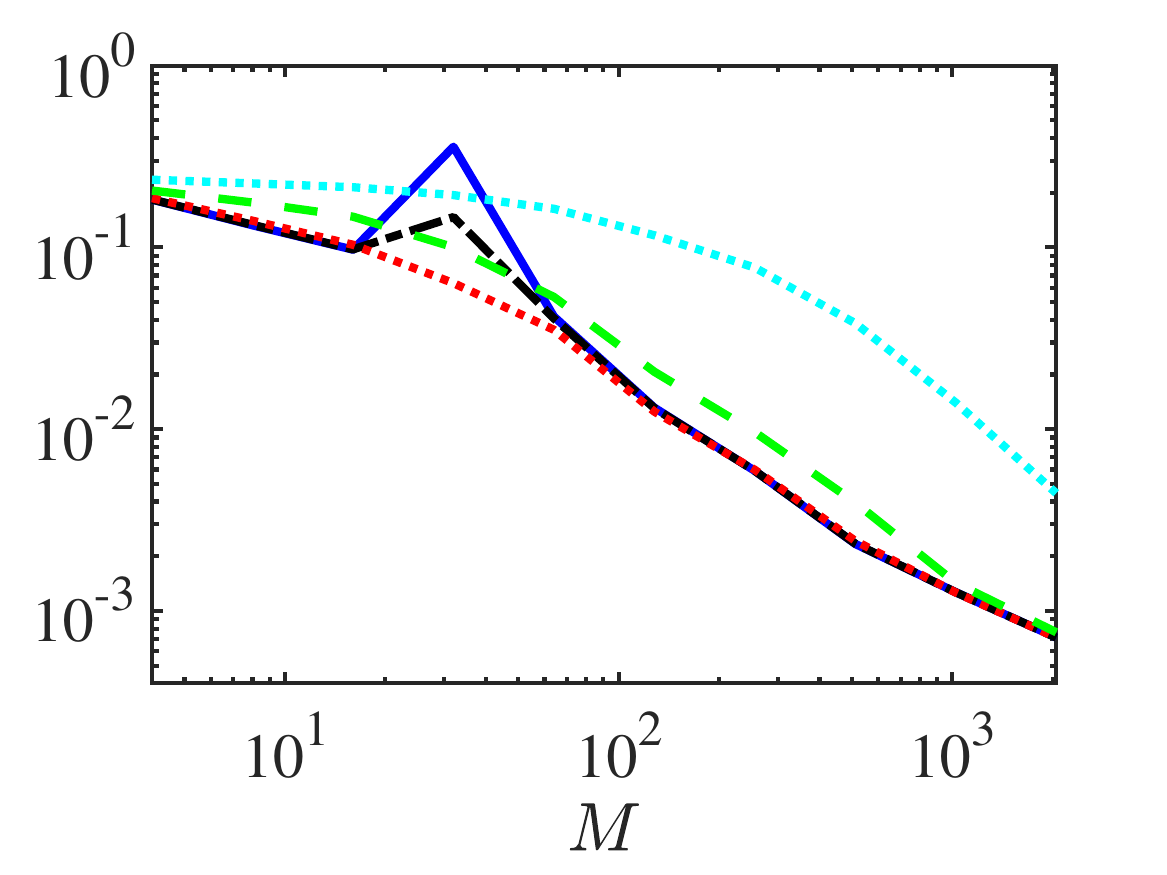}
\caption{$\wh\lambda_1$}
\end{subfigure}
\begin{subfigure}{0.24\textwidth}
\includegraphics[width=\textwidth]{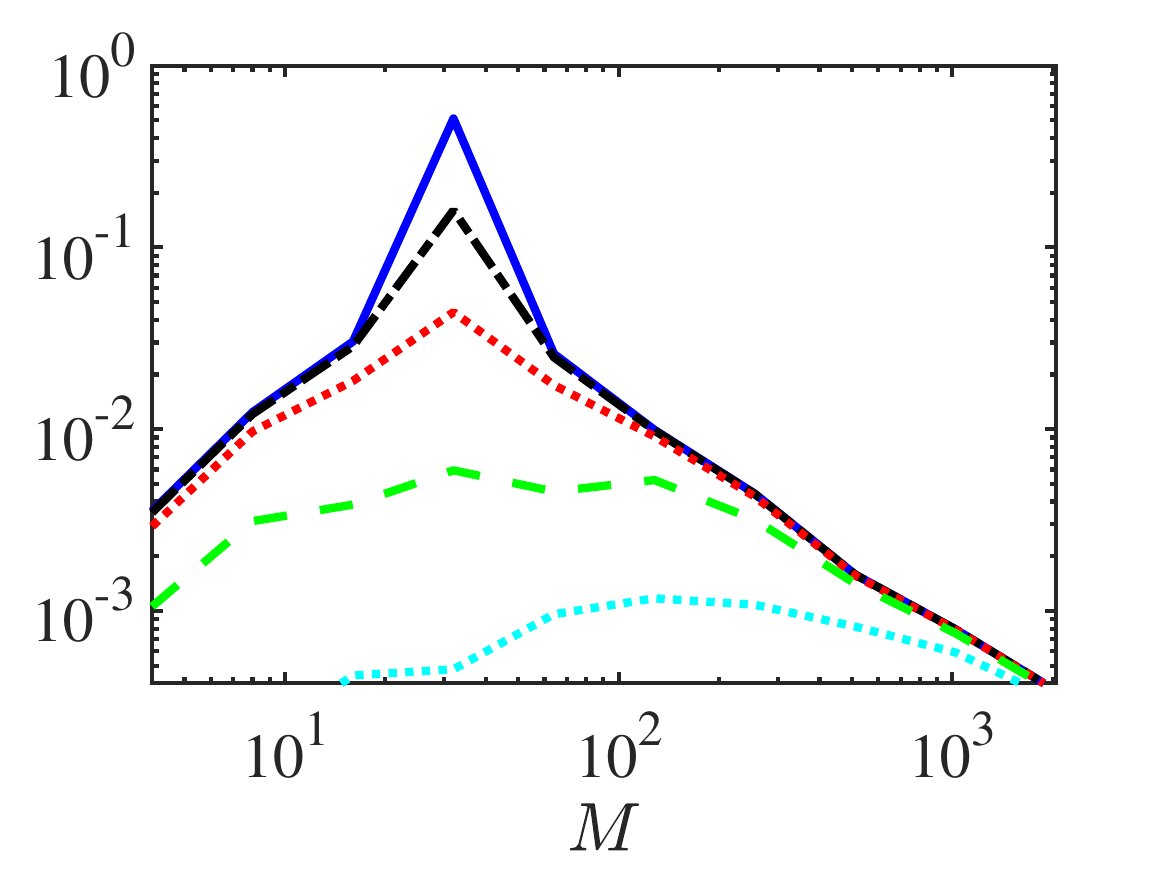}
\caption{$\wh\lambda_2$}
\end{subfigure}
\caption{Illustration of the performance of the RLS shadow with different regularization parameter $\mu$ for estimating the state $\vrho$ and the three linear observables as in Fig.~\ref{fig:LS}.
}
\label{fig:RLS-mu}
\end{figure*}

Choosing a suitable regularization parameter $\mu$ requires balancing stabilization of the operator $\calA^\dagger \calA + \mu\mId$ and fitting the measurements, for which techniques like cross-validation or grid search can be used. 
Recent work~\cite{nakkiran2020optimal} shows that an optimal regularization (which may vary with $M$) that minimizes Frobenius error $\|\wh \vrho-\vrho\|_F$ can mitigate the double descent phenomenon for linear regression under certain statistical assumptions. While such a formal strategy seems impractical here since the state $\vrho$ is unknown \emph{a priori}, numerical experiments demonstrate that a small $\mu$ can indeed make RLS estimators stable. As shown in Fig.~\ref{fig:RLS-mu}, $\mu = 0.1$ leads to monotonically decreasing MSEs for $\lambda_0$ and $\lambda_1$. While the MSE for $\lambda_2$ still peaks in the interpolating regime ($M\approx D$), it is already small (comparable to those for $\lambda_0$ and $\lambda_1$) and significantly smaller than the one achieved by LS. We could achieve monotonically decreasing estimation error for both $\lambda_2$ and $\vrho$ through even larger values of $\mu$ (e.g., the $\mu=1$ case in Fig.~\ref{fig:RLS-mu}). However, this will substantially bias the estimator $\wh \vrho$ to zero, leading to worse estimation for $\lambda_0$ and $\lambda_1$. In order to balance the performance across different observables, throughout all subsequent experiments, we simply take $\mu = 0.1$ for RLS.

\subsection{CS Estimation}
\label{sec:CS}
CS estimation utilizes a different approach to stabilize the pseudoinverse $(\frac{1}{M}\calA^\dagger \calA)^{+}$. Assume each POVM $\{\mA_{m,k}\}_{k\in[K]}$ is independently and randomly generated from an ensemble of POVMs $\setA$ according to a certain probability distribution $P(\setA)$. We may approximate $\frac{1}{M}\calA^\dagger \calA$ by its expectation
\begin{align}
\begin{split}
\calM(\vrho) & =\E\left[\frac{1}{M}\calA^\dagger \calA(\vrho)\right] \\ & = \E_{\{\mA_{k}\}\sim P(\setA) } \left[\sum_{k=1}^K  \innerprod{\mA_{k}}{\vrho} \mA_{k} \right],
\label{eq:Exp-AtA}
\end{split}
\end{align}
where $\{\mA_k\}$ represents a random POVM generated from the ensemble of POVMs $\setA$ according to the probability distribution $P(\setA)$.
Here $\calM$ is called the quantum channel. If $\setA$ is tomographically complete, then $\E[\calA^\dagger \calA]$ is full rank and invertible. The shadow in CS introduced in Ref.~\cite{huang2020predicting} can then be defined by replacing $\frac{1}{M}\calA^\dagger \calA$ in Eq.~\eqref{eq:shadow-v1} with its expectation $\E[\calA^\dagger \calA(\vrho)]$, i.e.,
\e
\wh \vrho = \frac{1}{M} \sum_{m=1}^M \underbrace{\calM^{-1}\left(\calA_m^\dagger(\wh \vp_m)\right)}_{\text{CS shadow}},
\label{eq:shadow-v3}
\ee
so that we obtain the ``CS shadow'' $\wh\vrho_m \vcentcolon= \mathcal{S}\left(\calA_m^\dagger(\wh \vp_m)\right) = \calM^{-1}\left(\calA_m^\dagger(\wh \vp_m)\right)$, for which the original CS proposal is named. (We acknowledge the inherent repetitiveness of the term ``CS shadow''---``classical shadows shadow''---but adopt it for consistency with LS shadow and RLS shadows.)

As $\calM$ defined in Eq.~\eqref{eq:Exp-AtA} is a linear operator, its inverse $\calM^{-1}$ is also linear.  These CS shadows are independent and unbiased estimators of $\vrho$. Specifically, noting that the randomness of each shadow comes from two sources---the randomly selected POVM $\calA_m$ %
and the random experimental outcome $\wh \vp_{m}$---we can take the expectation to obtain
\begin{widetext}
\begin{align}
\E_{\{\mA_{m,k}\}\sim P(\setA),\wh \vp_m} \left[\calM^{-1}\parans{\calA_m^\dagger (\wh \vp_m) } \right]  = \calM^{-1} \underbrace{ \E_{\{\mA_{m,k}\}\sim P(\setA),\wh \vp_m} \left[\parans{\sum_{k=1}^K \wh p_{m,k} \mA_{m,k}} \right] }_{\calM(\vrho)}  = \vrho,
\end{align}
\end{widetext}
where $ \E_{\{\mA_{m,k}\}\sim P(\setA),\wh \vp_m} \left[\sum_{k=1}^K \wh p_{m,k} \mA_{m,k} \right] = \calM(\vrho)$ can be obtained by noting the conditional expectation $\E_{\wh \vp_m } [\wh p_{m,k}\mA_{m,k} \mid \{\mA_{m,k}\}] = \trace(\mA_{m,k}\vrho)\mA_k$ according to the Born rule [see Eqs.~\eqref{The defi of POVM 2} and \eqref{eq:empirical-prob}].

At this point it is useful to pause and compare the three estimation approaches introduced and analyzed so far: LS, RLS, and CS. As articulated in Fig.~\ref{fig:concept} and Eqs.~(\ref{eq:shadow-v1}, \ref{eq:shadow-v2}, \ref{eq:shadow-v3}), each approach produces an estimate that can be viewed as an average over discrete shadows $\wh\vrho_m=\mathcal{S}\left(\calA_m^\dagger(\wh\vp_m)\right)$ each corresponding to one of the $M$ POVMs measured. Whereas LS computes each shadow by direct inversion of the total collection of measurements, both RLS and CS modify this procedure, through regularization and quantum channel inversion, respectively. This observation already offers interesting insights into CS features.

In our opinion, one of the initially most surprising aspects of CS lies in the way it treats the measurement operators post-experiment. Although the POVMs $\calA_m$ are selected at random during the measurement process in the canonical CS example, they are known to the user \emph{a posteriori} through the complete collection $\calA$. Yet this knowledge is intentionally ignored in the CS shadow operation $\mathcal{S}(\cdot)$; the inversion is instead performed on the \emph{a priori} quantum channel with completely random measurements---an essentially ``fictitious'' quantum channel from the perspective of the completed experiment. In the light of RLS, however, this channel selection acquires a more intuitive explanation in terms of stabilization: like RLS, CS allows for well-conditioned inversion under any set of measurements, opening the opportunity to improve stability in the estimation procedure and qualitatively accounting for the rigorous information-theoretic bounds it attains~\cite{huang2020predicting}.

\begin{table*}[tbh!]
\centering
\caption{Comparison of RLS and CS estimators.}
\label{tab:my-table}
\begin{tabular}{c|c|c|c|c}
 & \begin{tabular}[c]{@{}c@{}}distribution\\ independent\end{tabular} & \begin{tabular}[c]{@{}c@{}}computational\\ cost\end{tabular} & bias & variance \\ \hline
LS & \cmark & high & high & high \\ \hline
RLS & \cmark & high & high & low \\ \hline
CS & \xmark & low & low (zero) & high
\end{tabular}
\label{table:comparison}\end{table*}

\textit{CS with rank-1 POVMs.---}
The quantum channel $\calM$ in Eq.~(\ref{eq:Exp-AtA}) depends on the POVM ensemble and the corresponding sampling distribution. Consider rank-1 POVMs of the form $\{\mA_{1},\ldots,\mA_{d} \}$ with $\mA_{k} = \vu_{k}\vu_{k}^\dagger$, where each $\mU = \begin{bmatrix} \vu_{1} & \cdots & \vu_{d} \end{bmatrix}^\dagger$ is randomly chosen from an ensemble of $D\times  D$ unitary matrices $\setU$.  Various unitary ensembles have been explored in prior studies, including the local and global Clifford ensembles \cite{huang2020predicting}, fermionic Gaussian unitaries \cite{zhao2021fermionic}, chaotic Hamiltonian evolutions \cite{hu2023classical}, locally scrambled unitary ensembles \cite{hu2023classical}, and Pauli-invariant unitary ensembles \cite{bu2022classical}, for which explicit formulas for the quantum channel $\calM$ and its inverse $\calM^{-1}$ exist. %
For instance, if $\setU$ is the full unitary group and each unitary matrix $\mU$ is sampled independently according to the Haar measure on $\setU$, %
the quantum channel $\calM$ and its inverse $\calM$ can be computed as
\begin{align}
\begin{split}
 \calM(\vrho) & = \E_{\mU} \left[\sum_{k=1}^{D} \parans{\vu_{k}^\dagger \vrho \vu_{k} }\vu_k \vu_k^\dagger   \right] \\ & = \frac{1}{D+1}\vrho + \frac{\trace(\vrho)}{D+1}\mId,\label{eq:expectation-M}
\end{split}
\end{align}
\begin{equation}
 \calM^{-1}(\vrho) = (D + 1)\vrho - \trace(\vrho)\mId.
\label{eq:M-inverse}
\end{equation}
Plugging this explicit formula and the expression of $\calA^\dagger_m(\wh \vp_m)$ in Eq.~\eqref{eq:Aadjoint-rank-one} into Eq.~\eqref{eq:shadow-v2}, we can further simplify the CS shadow \cite{huang2020predicting}:
\e\begin{split}
 \wh \vrho_m & = (D + 1) (\mU_m^\dagger \wh \vp_m)(\mU_m^\dagger \wh \vp_m)^\dagger \\
 & \qquad- \trace((\mU_m^\dagger \wh \vp_m)(\mU_m^\dagger \wh \vp_m)^\dagger) \mId\\
 & = (D + 1) (\mU_m^\dagger \wh \vp_m)(\mU_m^\dagger \wh \vp_m)^\dagger - \mId,
\end{split}
\label{eq:shadow-unitary}
\ee
Since $(\mU_m^\dagger \wh \vp_m)(\mU_m^\dagger \wh \vp_m)^\dagger$ is rank-1 with its only non-zero eigenvalue equal to 1, each classical shadow satisfies $\trace(\wh\vrho_m)=1$, through the summation of one positive eigenvalue equal to $D$ and $D-1$ negative eigenvalues equal to $-1$.

\section{Comparing RLS and CS Stabilization Techniques}
\label{sec:Compare}
\subsection{Overview}
\label{sec:Overview}
As introduced and discussed in the previous section, both RLS and CS invoke ``fictitious'' quantum channels to stabilize the inverse operation in computing each shadow $\wh\rho_m$. Nevertheless, they do so in significantly different ways, leading to distinct advantages and disadvantages in specific use cases. To examine these aspects further, we test and compare RLS and CS methods in the estimation of quantum observables from three perspectives, each of which leverages targeted numerical simulations to reveal the  important behaviors of interest. We summarize the three features below and in Table~\ref{table:comparison}, and the relevant numerical simulations follow in the subsequent subsections.

\textit{Feature 1: bias and variance tradeoff.---}CS and RLS approaches trade off bias and variance in opposite ways. CS estimates are always unbiased but can exhibit relatively large variance with a limited number of measurements. On the other hand, RLS estimation controls variance through $\ell_2$ regularization, but also introduces bias.

\textit{Feature 2: handling of distribution mismatch.---}The quantum channel $\mathcal{M}$ in CS relies on information about how the POVMs are randomly generated. Such information is not necessary in RLS. In other words, RLS is more flexible as it only requires the POVMs actually used, regardless of whether they are generated randomly or deterministically. On the other hand, by averaging over all possible POVMs, the quantum channel in CS often has a simple explicit formulation that is independent of the specific POVMs measured, as shown in Eqs.~(\ref{eq:expectation-M}, \ref{eq:M-inverse}). In contrast, RLS must compute the inverse $(\mathcal{A}^\dagger\mathcal{A} + \mu\mId)^{-1}$ for each POVM realization. Thus, RLS and CS trade off flexibility in the distribution with computational efficiency.

\textit{Feature 3: scaling with multishot measurements.---}The measurement of $M$ POVMs with $L$ shots each requires a total of $ML$ state preparations. Although we focus primarily on the $L=1$ case---in line with the original CS formulation~\cite{huang2020predicting}---our derivations are completely generic with respect to $L$. In our third set of tests, we therefore examine the performance of both RLS and CS for multishot measurements ($L>1$). Although \emph{a priori} unclear to us whether RLS and CS shadows would show any differences in their respective dependencies on $L$, numerical tests in Sec.~\ref{sec:multishot} find that estimation errors in RLS are much more sensitive to $L$---both for better and worse---than their CS counterparts.

\subsection{Feature 1: Bias-Variance Tradeoff}
\label{sec:biasVariance}

\begin{figure*}[t]
\centering
\begin{subfigure}{0.24\textwidth}
\centering
\includegraphics[width=\textwidth]
{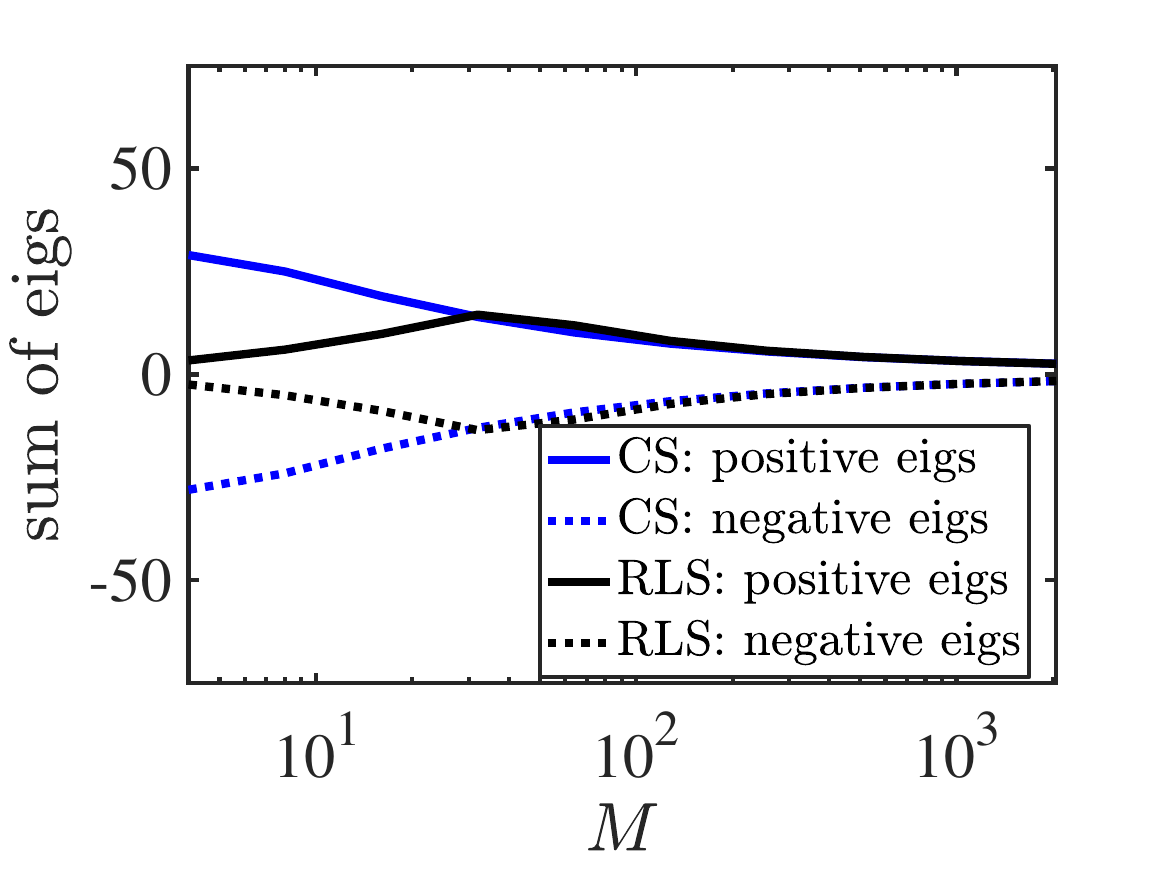}
\end{subfigure}
\begin{subfigure}{0.24\textwidth}
\centering
\includegraphics[width=\textwidth]{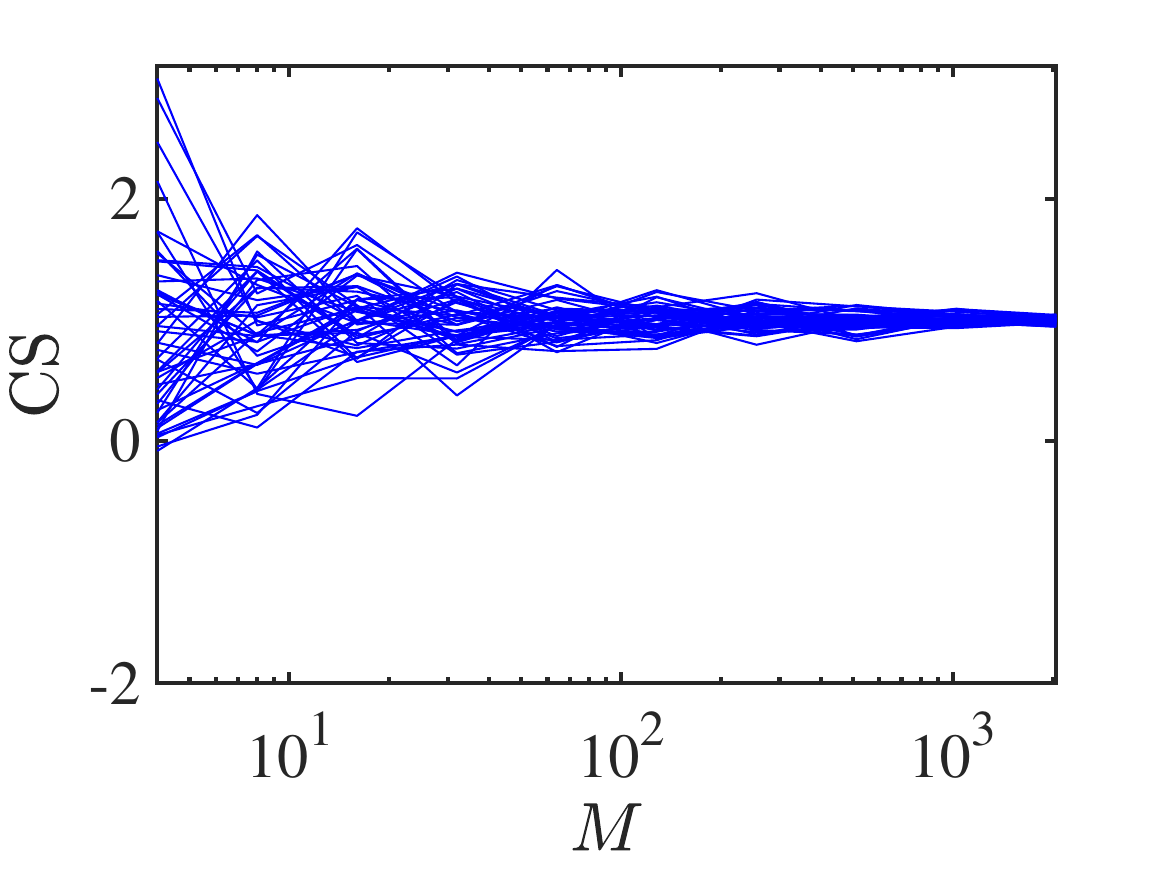}
\end{subfigure}
\begin{subfigure}{0.24\textwidth}
\includegraphics[width=\textwidth]{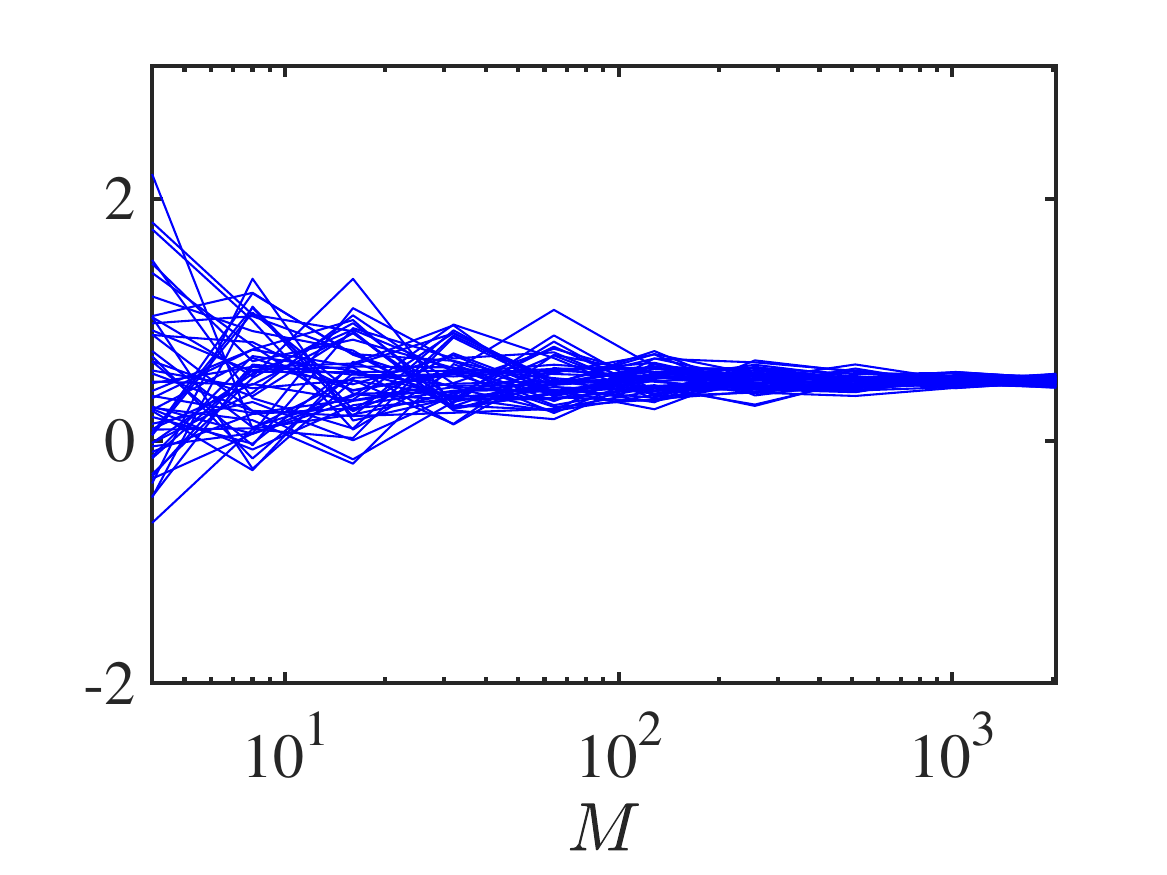}
\end{subfigure}
\begin{subfigure}{0.24\textwidth}
\includegraphics[width=\textwidth]{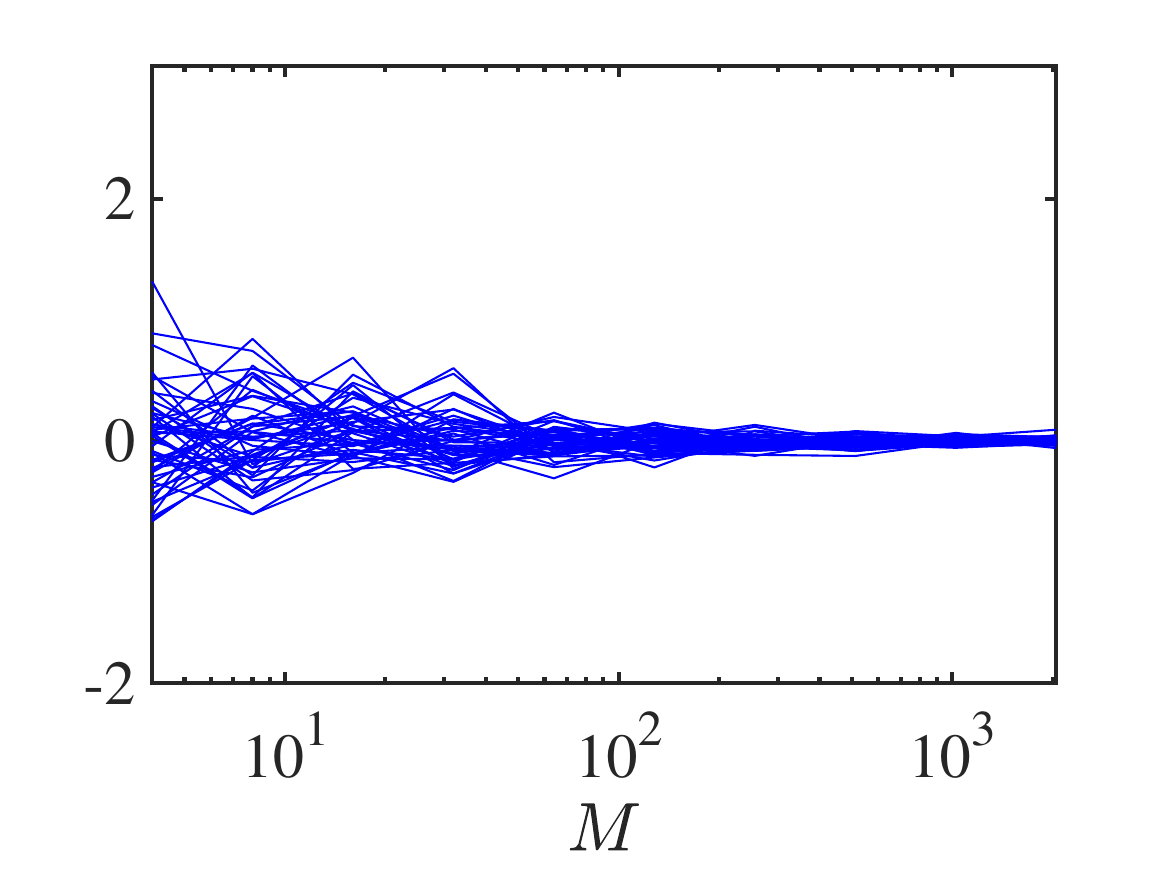}
\end{subfigure}
\vfill
\begin{subfigure}{0.24\textwidth}
\centering
\includegraphics[width= \textwidth]
{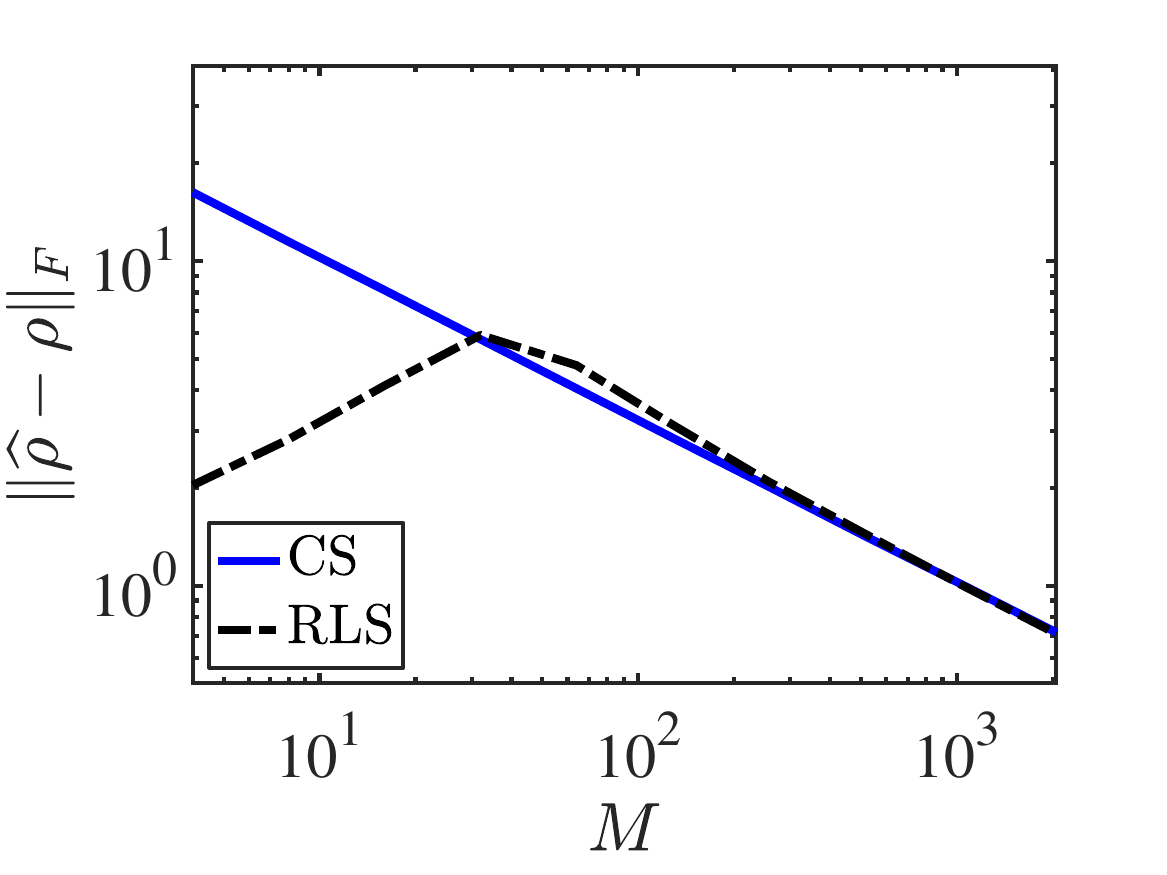}
\end{subfigure}
\begin{subfigure}{0.24\textwidth}
\centering
\includegraphics[width=\textwidth]{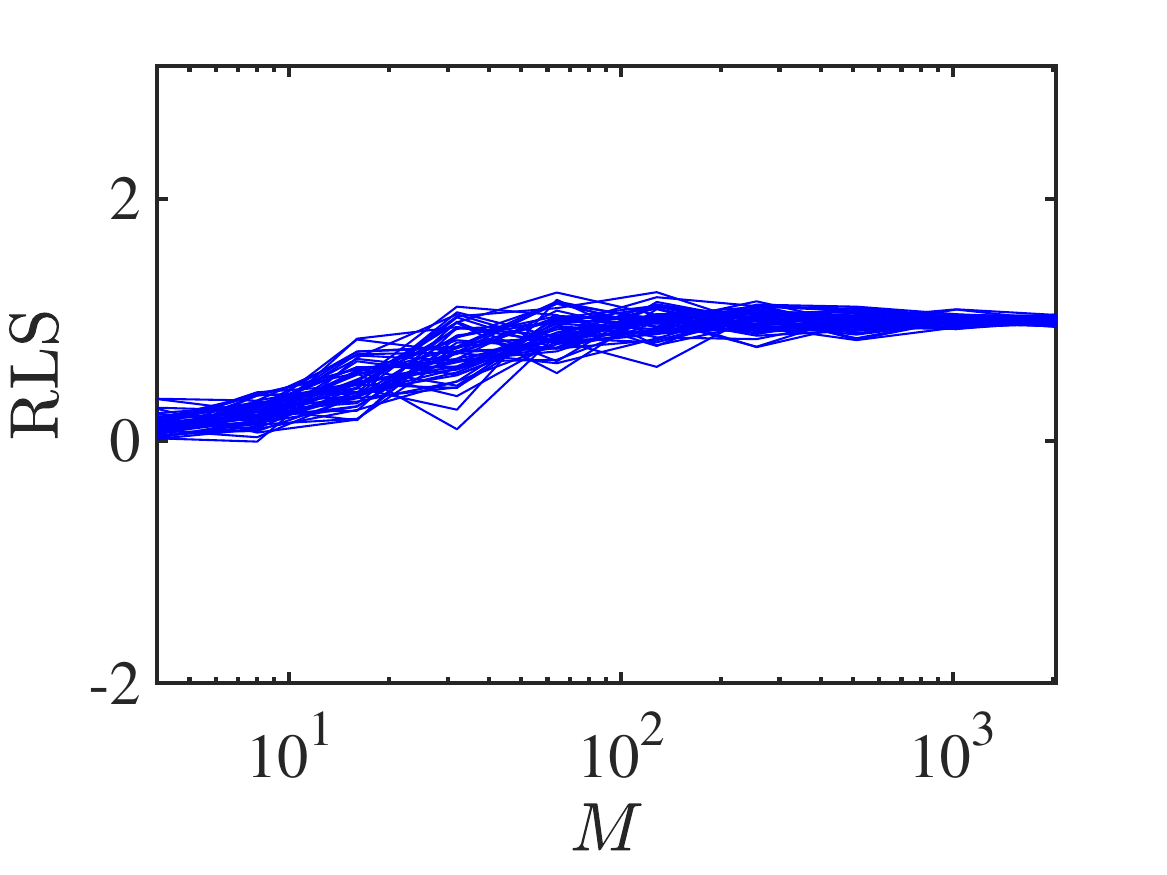}
\end{subfigure}
\begin{subfigure}{0.24\textwidth}
\includegraphics[width=\textwidth]{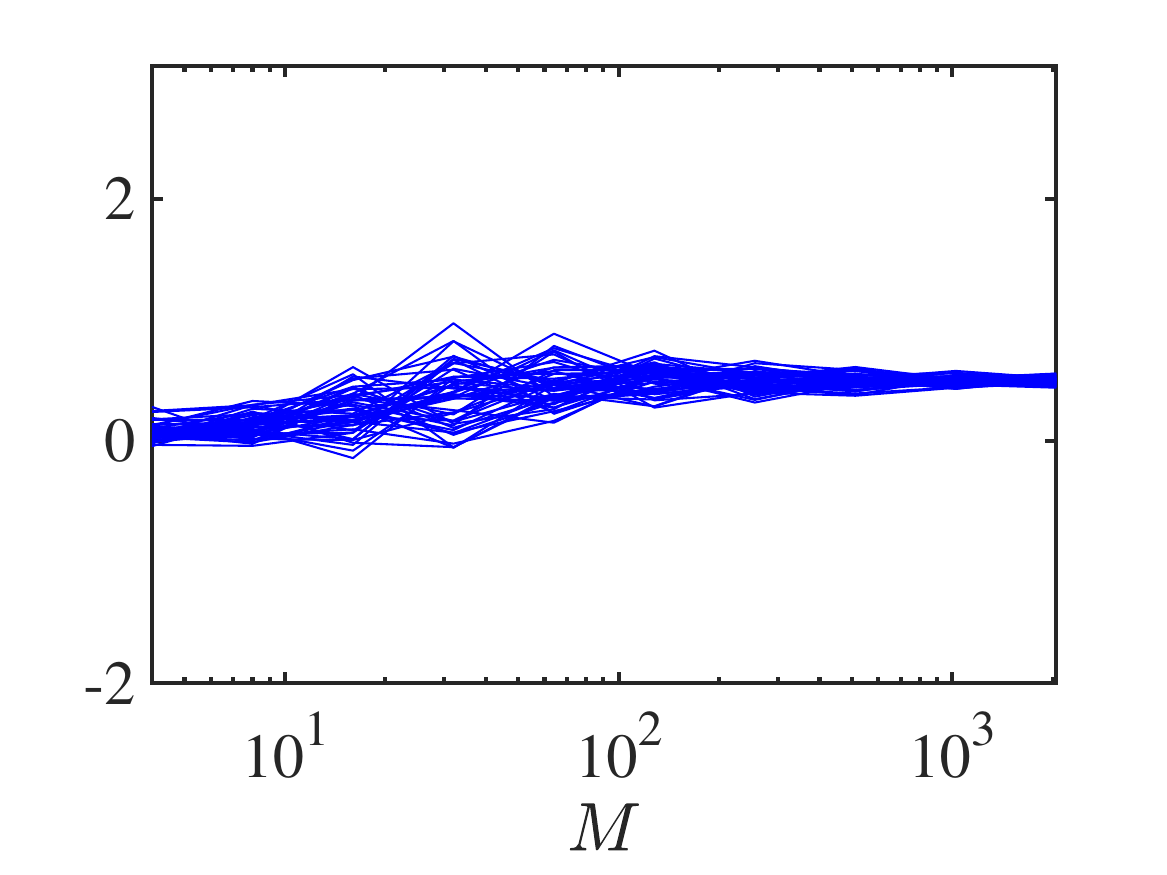}
\end{subfigure}
\begin{subfigure}{0.24\textwidth}
\includegraphics[width=\textwidth]{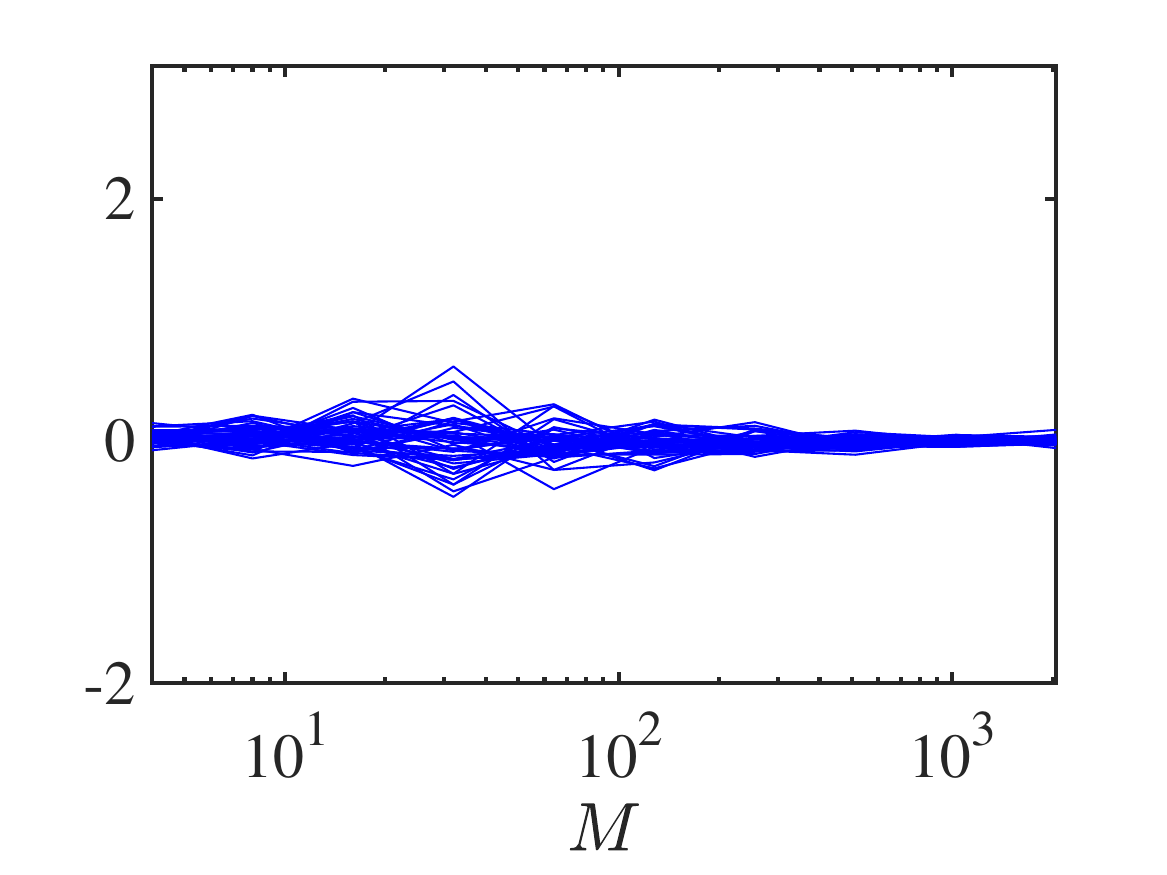}
\end{subfigure}
\vfill
\begin{subfigure}{0.24\textwidth}
\includegraphics[width=\textwidth]
{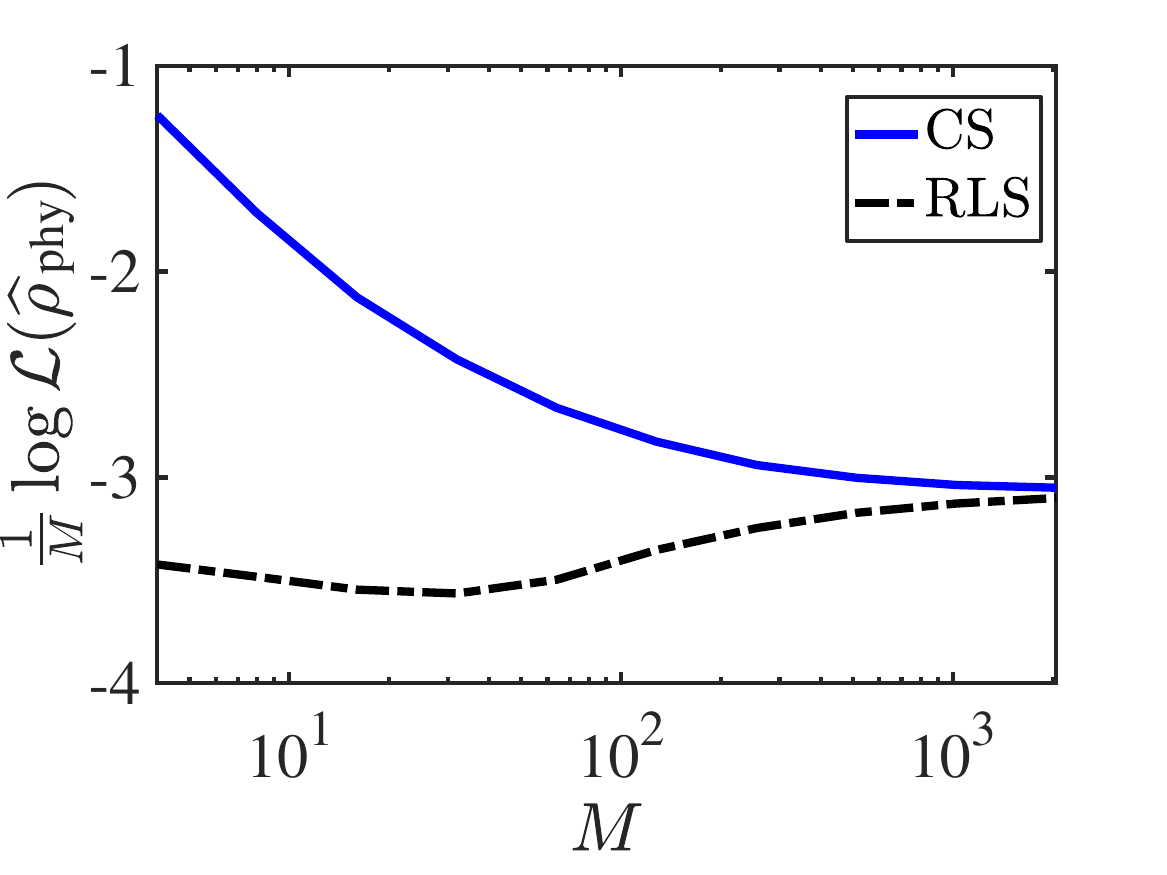}
\caption{$\wh \vrho$}
\end{subfigure}
\begin{subfigure}{0.24\textwidth}
\includegraphics[width=\textwidth]{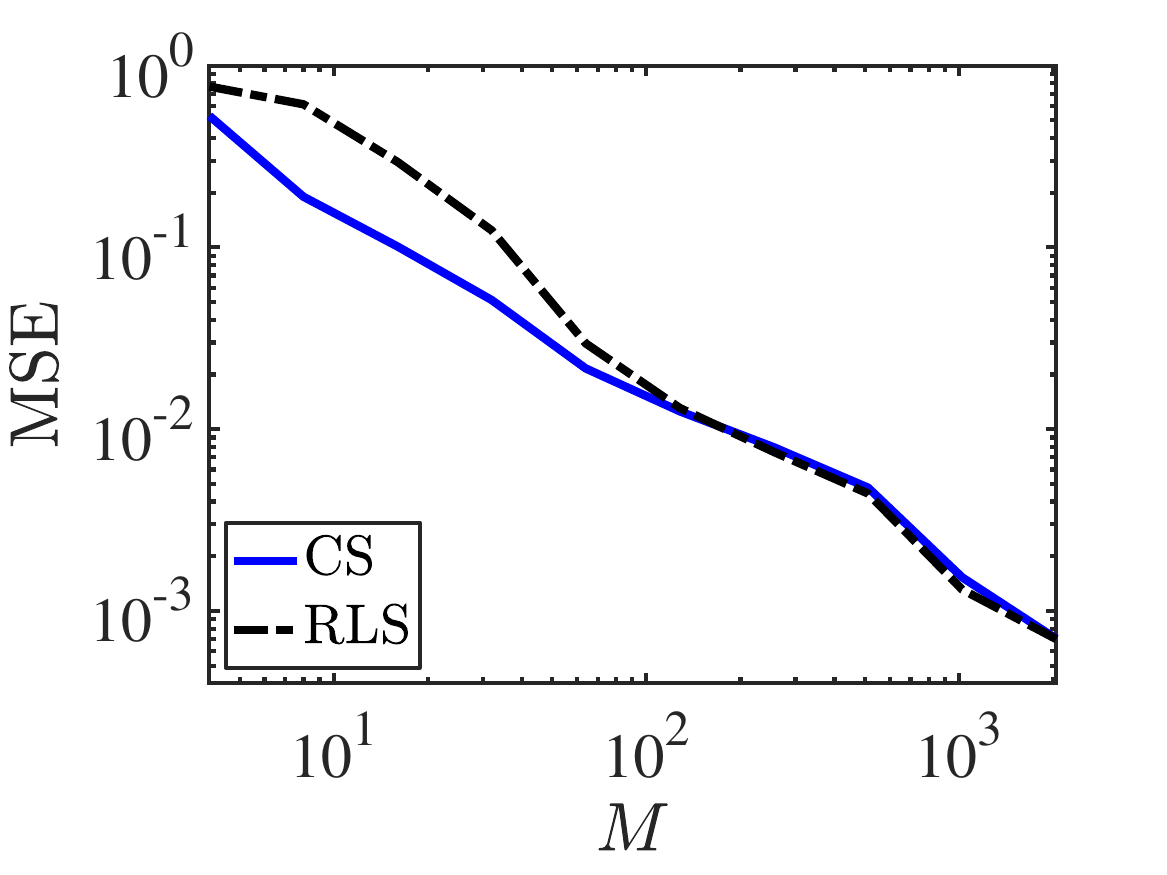}
\caption{$\wh\lambda_0$}
\end{subfigure}
\begin{subfigure}{0.24\textwidth}
\includegraphics[width=\textwidth]{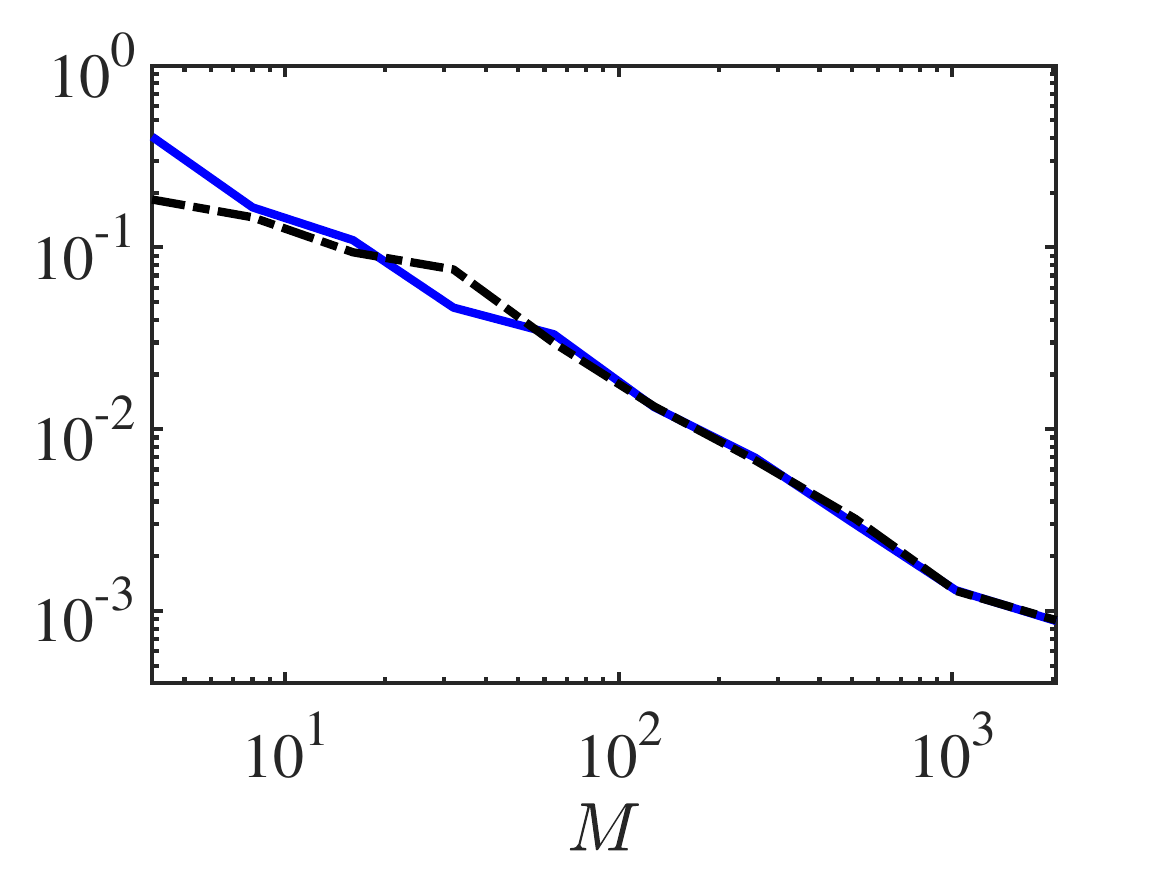}
\caption{$\wh\lambda_1$}
\end{subfigure}
\begin{subfigure}{0.24\textwidth}
\includegraphics[width=\textwidth]{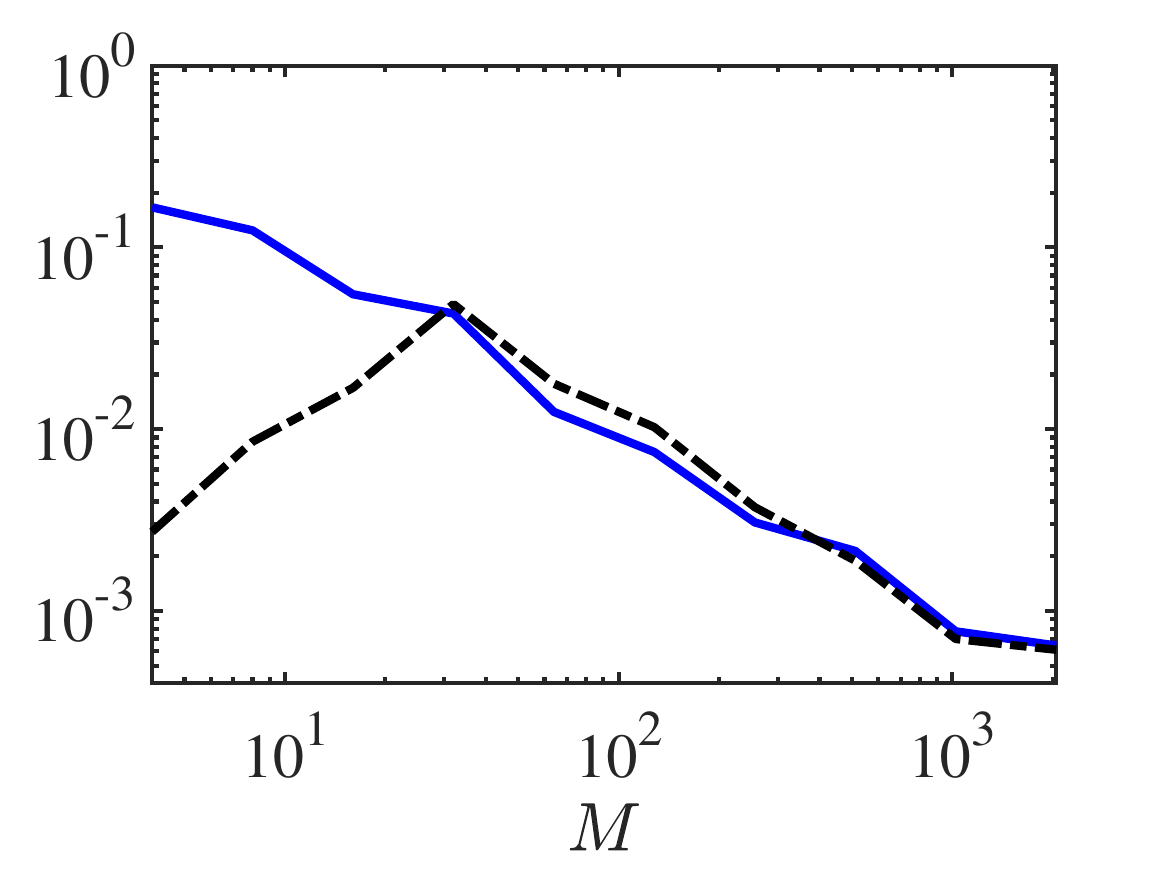}
\caption{$\wh\lambda_2$}
\end{subfigure}
\caption{Illustration of the performance of CS and RLS for estimating the state $\vrho$ and the linear observables $\lambda_i = \trace{\mLambda_i \vrho}$ with  $\mLambda_i = \vphi_i \vphi_i^\dagger$.
}
\label{fig:RLS-Shadow}
\end{figure*}

\begin{figure}[t!]
\centering
\begin{subfigure}{0.235\textwidth}
\includegraphics[width=\textwidth]{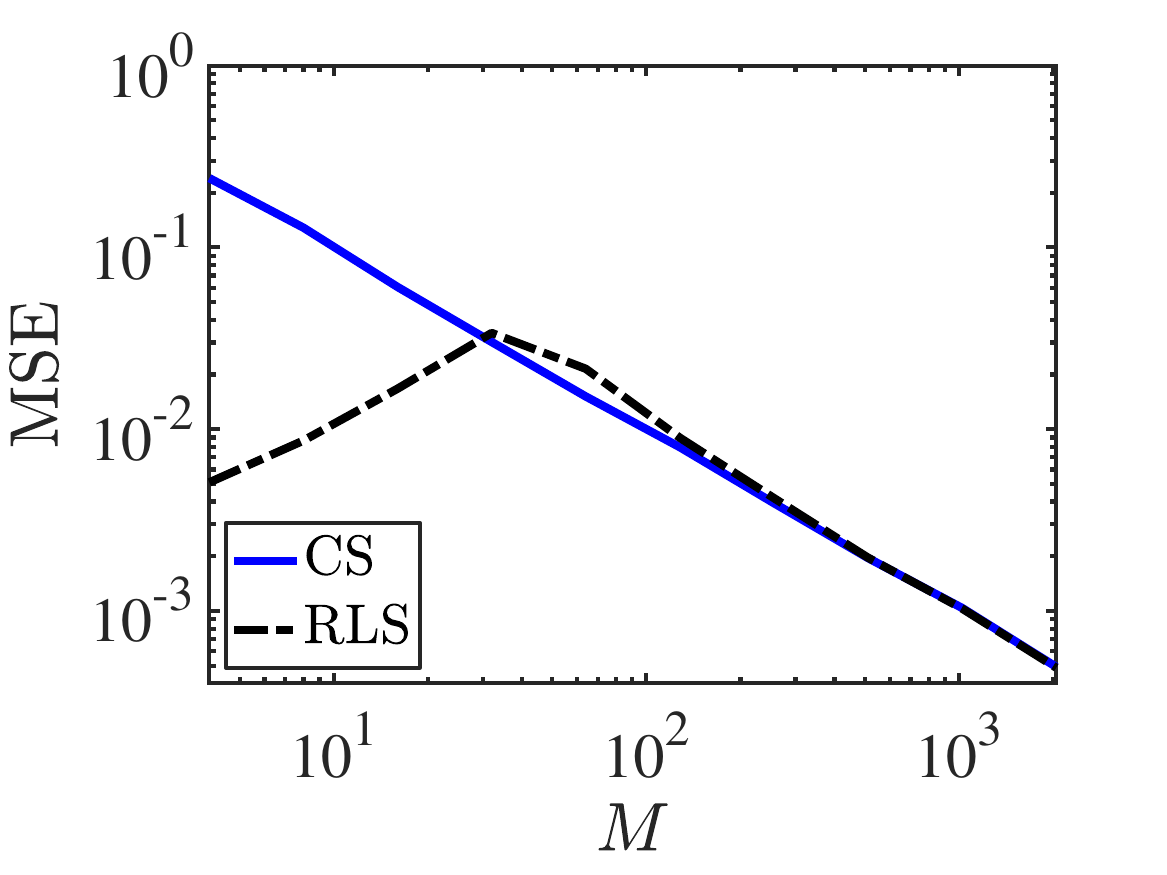}
\caption{}
\end{subfigure}
\begin{subfigure}{0.235\textwidth}
\includegraphics[width=\textwidth]{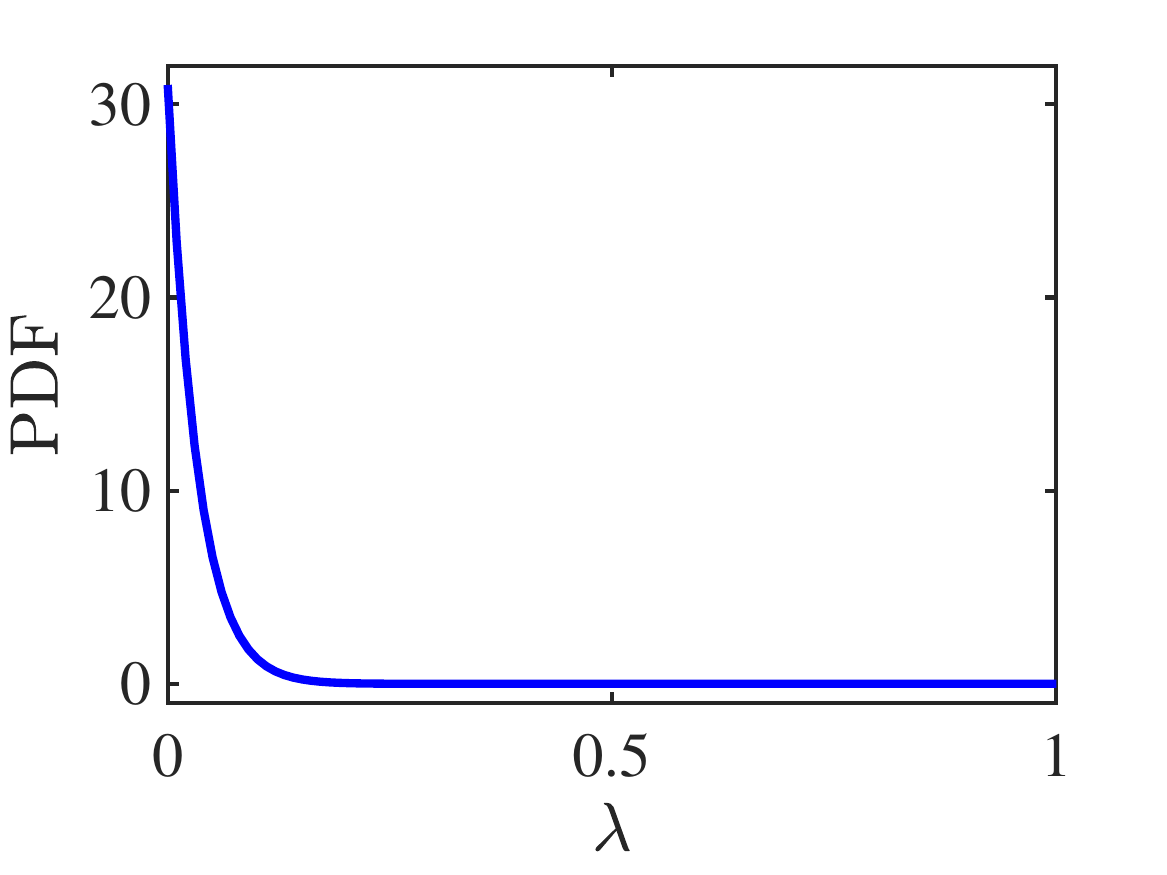}
\caption{}
\end{subfigure}
\caption{(a) Illustration of the performance of the RLS and classical shadow for estimating $50$ linear observables $\lambda = \trace(\mLambda \vrho),\mLambda = \vphi \vphi^\dagger$, where  $\vphi$ is randomly and uniformly generated from the unit sphere. (b) the probability distribution (i.e., probability density function (PDF)) $P(\lambda) = (D-1)(1-\lambda)^{D-2}$ for such a random linear observable $\lambda$.
}
\label{fig:rls-shadow-mse-randon}
\end{figure}

Considering the same ground truth state, observables, and measurements as analyzed by LS shadows in Fig.~\ref{fig:LS}, we apply RLS and CS techniques for inference and show the results in Fig.~\ref{fig:RLS-Shadow}.  The top of the first column plots the estimates of the states in terms of their eigenvalues; the middle shows the distance to the ground truth defined as $\|\wh\vrho - \vrho\|_F$; and the bottom depicts the log-likelihood defined as $\frac{1}{M}\log {\mathcal{L}}(\wh\vrho_\mathrm{phy}) = \frac{1}{M}\sum_{m=1}^M\sum_{k=1}^K  f_{m,k}\log(\trace(\mA_{m,k}\wh\vrho_\mathrm{phy}))$, where
$\wh\vrho_\mathrm{phy}$ denotes the estimator projected onto physical states by the method of Ref.~\cite{Smolin2012}, performed to remove negative probabilities that would otherwise make the log-likelihood complex~\cite{likelihoodNote}.
The last three columns [Fig.~\ref{fig:RLS-Shadow}(b--d)] plot the estimates for particular expectation value $\lambda_i$ for all 50 trials obtained by both RLS and CS; the bottom row shows the MSE with respect to the ground truth, averaged over all trials. 

\Cref{fig:RLS-Shadow}(a) compares features of the estimators themselves. While both RLS and CS yield negative eigenvalues, those from RLS are less extreme than CS in the underdetermined regime ($M<D$) before aligning closely for $M>D$ (top). Such behavior is similarly reflected in RLS's much lower error with respect to the ground truth for $M<D$, followed by close agreement for $M>D$ (middle). Interestingly, however, as the bottom plot shows, likelihood tests (typical in classical inference) strongly favor CS over RLS; for any number of measurements $M$, the likelihood evaluated at the CS estimator exceeds that of RLS, despite the fact RLS is \emph{closer} to the ground truth in much of this regime (smaller error $\|\wh\vrho - \vrho\|_F$).

This unexpected characteristic can be explained by the fact that CS shadows are constructed directly by projectors of the measured outcomes [cf. Eq.~\eqref{eq:shadow-unitary}]. Hence, even for $M\ll D$, the CS estimator automatically assigns high probabilities to prior observations---which is precisely what the likelihood computes. Accordingly, the wide deviation between expectations from a likelihood test (bottom) and the actual ground truth (middle) not only reveals an interesting feature of CS shadows; it also emphasizes the importance of testing these methods against ground truth values to avoid misleading conclusions on their relative merits.

On the observable side, Fig.~\ref{fig:RLS-Shadow}(b--d) reveals that CS shadows possess large variance for low $M$, but are unbiased and converge to ground truth values rapidly, with nearly identical rates for all observables. This enables the derivation of rigorous information-theoretic bounds for any fixed and finite number of linear observables~\cite{huang2020predicting}. In contrast, the $\ell_2$ regularization applied to RLS leads to observable estimators that are biased towards the origin but also have small variance. 
Moreover, compared to the LS shadow tests in Fig.~\ref{fig:LS}, the double-decent phenomenon and wide fluctuations around $M\approx D$ have been significantly attenuated through regularization, especially striking for the $\lambda_0$ and $\lambda_1$ tests. While the RLS estimation errors for $\vrho$ and $\lambda_2$ still increase as $M$ approaches the interpolating regime, they remain small, and they are smaller than those achieved by CS. As discussed at the end of Sec.~\ref{sec:RLS}, this increasing phase could have been mitigated by a large regularization parameter (e.g., $\mu=1$), but at the expense of higher errors for $\lambda_0$ and $\lambda_1$.

\textit{Aside: random observables.}---For the numerical experiments in this paper, we have focused on a fixed state $\vrho = \ve_0 \ve_0^\dagger$ and three rank-1, trace-1 observables with ground truth values $(\lambda_0, \lambda_1, \lambda_3) = (1,1/2,0)$. As discussed in detail in Ref.~\cite{lukens2021bayesian}, observables with ground truth values $\lambda\sim\mathcal{O}(1)$ reflect scenarios where the accuracy of CS significantly surpasses  alternative techniques like maximum likelihood and Bayesian inference for the same number of measurements. Nonetheless, from the perspective of \emph{random} observables or ground truth states, $\lambda\approx 1$ is extremely rare in large Hilbert spaces; in this regime, for example, the Bayesian mean is far more accurate on average than CS~\cite{lukens2021bayesian}.

This distinction helps explain RLS's lower MSE compared to CS for $M\ll D$ in Fig.~\ref{fig:RLS-Shadow}(d).
Taking the same simulated trials 
but selecting $50$ different projectors of the form $\mLambda = \vphi \vphi^\dagger$, where each $\vphi$ is randomly generated from the unit hypersphere according to the Haar measure (equivalent to looking at 50 random ground truth states for a fixed observable~\cite{lukens2021bayesian}),
we find the MSEs plotted in Fig.~\ref{fig:rls-shadow-mse-randon}(a). RLS provides more accurate estimates than CS on average when $M$ is small and demonstrates similar performance as $M$ becomes large. Notably, the MSEs look similar to those for $\lambda_2$ in Fig.~\ref{fig:RLS-Shadow}(d). This can be explained by the probability density function (PDF) for the random observable $\lambda = \trace(\mLambda \vrho) = \|\vphi^\dagger \ve_0\|_2^2$,  given by $P(\lambda) = (D-1)(1-\lambda)^{D-2}$~\cite{zyczkowski2005average}. As depicted in Fig.~\ref{fig:rls-shadow-mse-randon}(b), the random variable $\lambda$ has a mean of $1/D=1/32$ and mode of 0. Thus, in the random context $\lambda_2=0$ represents a much more typical value than either $\lambda_0=1$ or $\lambda_1=1/2$, justifying the strikingly similar behavior for $\wh\lambda_2$ [Fig.~\ref{fig:RLS-Shadow}(d)] compared to Haar-random cases [Fig.~\ref{fig:rls-shadow-mse-randon}(a)]. So while we do consider observables like $\lambda_0$ which reflect situations of interest in practice (e.g., verification of high-fidelity state preparation), it is important to bear in mind the extreme improbability of this situation for truly random states or observables.

\subsection{Feature 2: Distribution Mismatch}

\begin{figure*}[t]
\centering
\begin{subfigure}{0.24\textwidth}
\centering
\includegraphics[width=\textwidth]{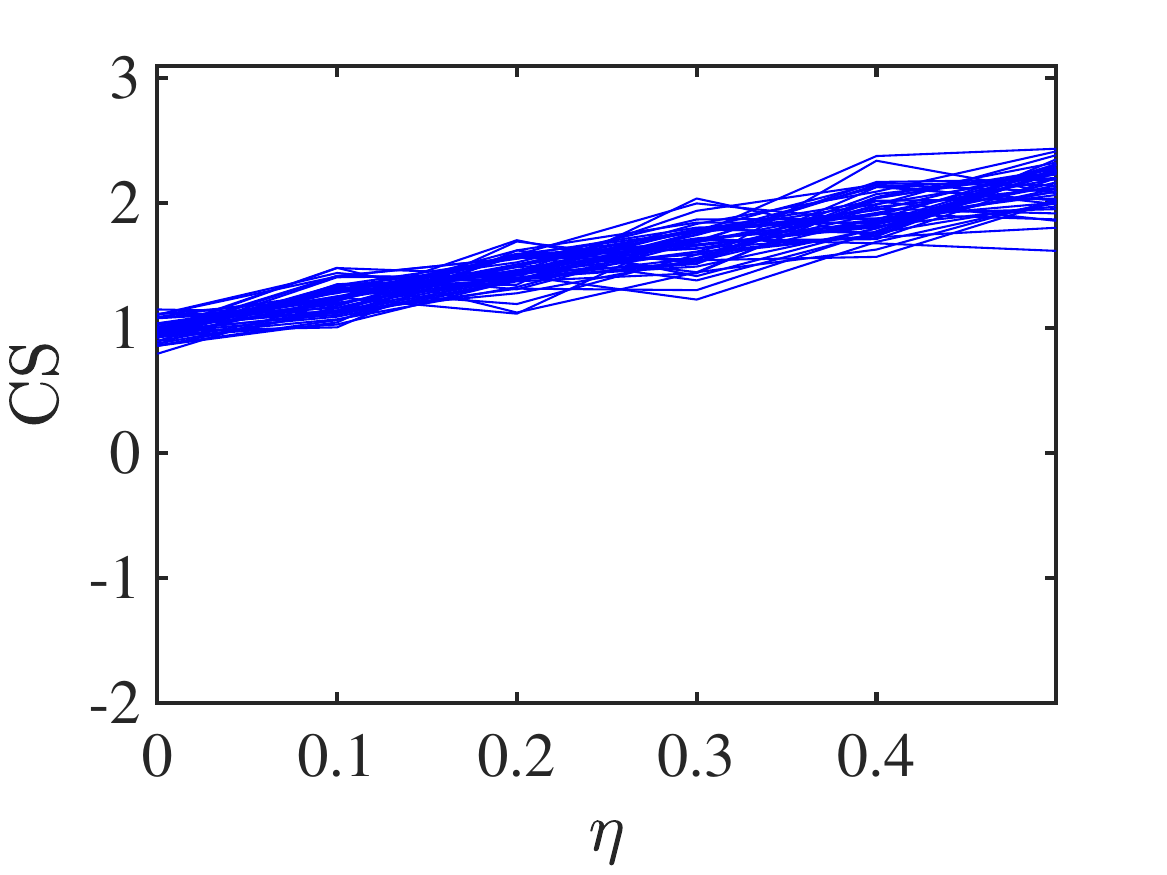}
\end{subfigure}
\begin{subfigure}{0.24\textwidth}
\includegraphics[width=\textwidth]{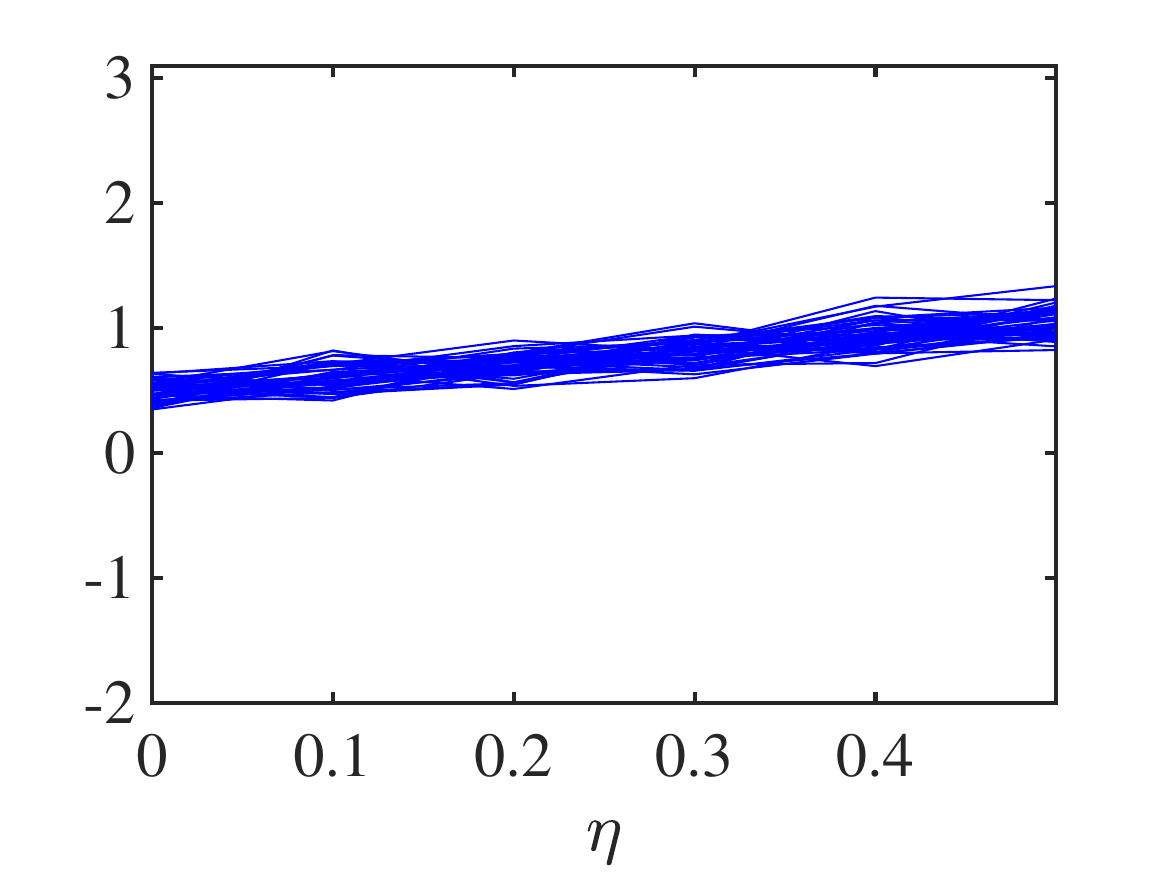}
\end{subfigure}
\begin{subfigure}{0.24\textwidth}
\includegraphics[width=\textwidth]{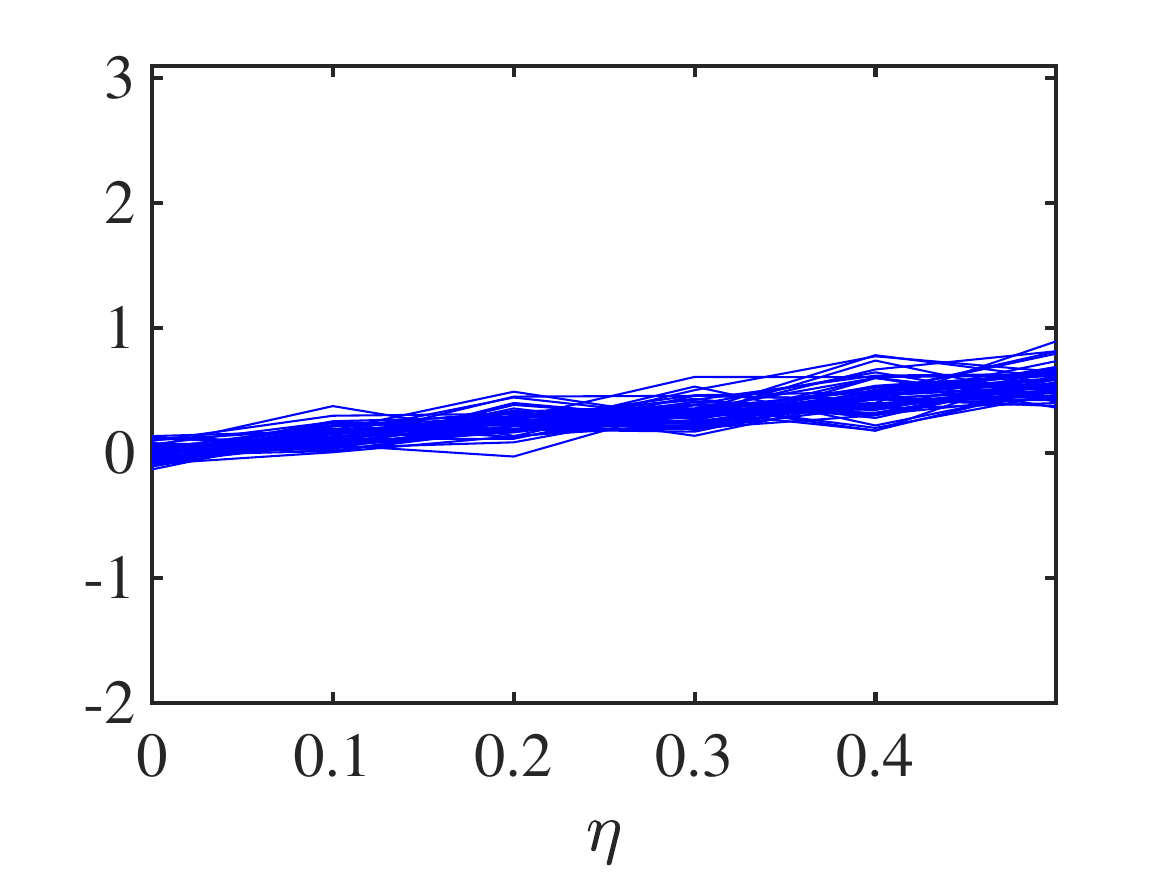}
\end{subfigure}
\vfill
\begin{subfigure}{0.24\textwidth}
\centering
\includegraphics[width=\textwidth]{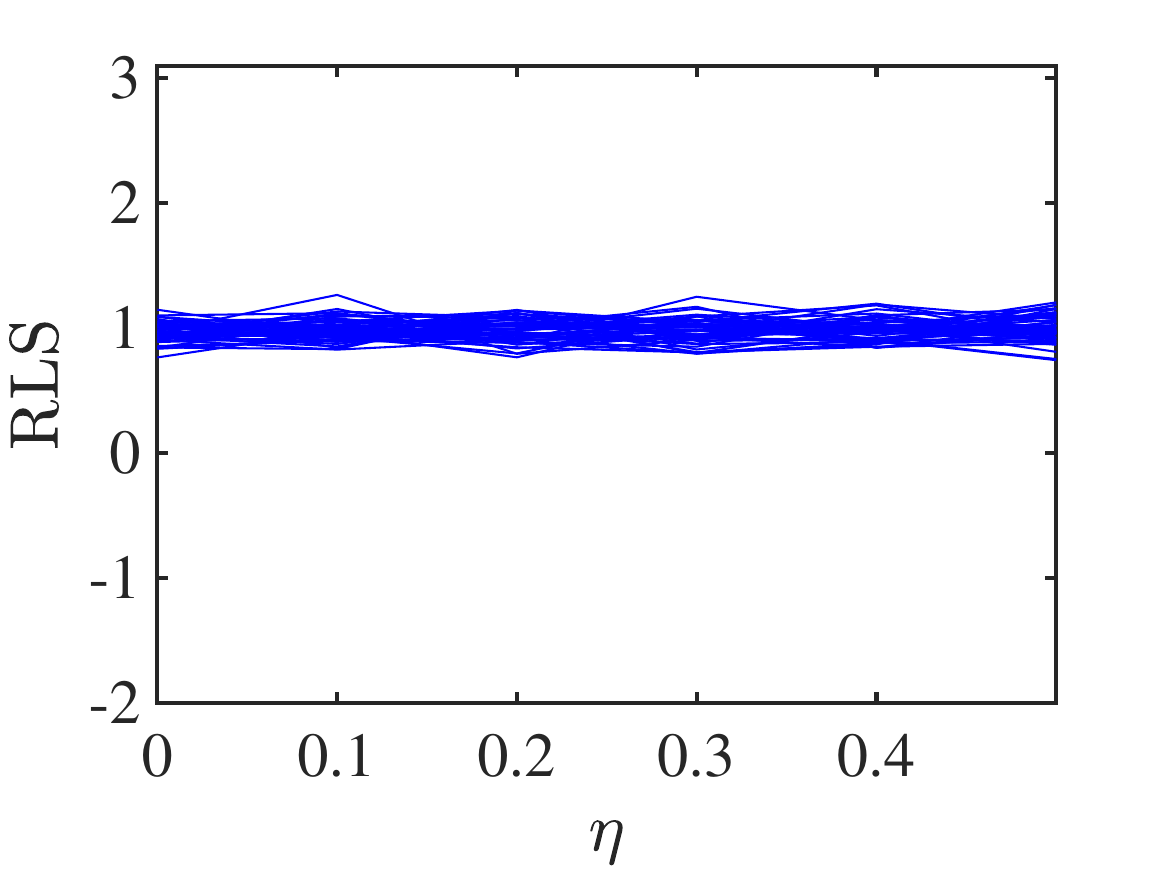}
\end{subfigure}
\begin{subfigure}{0.24\textwidth}
\includegraphics[width=\textwidth]{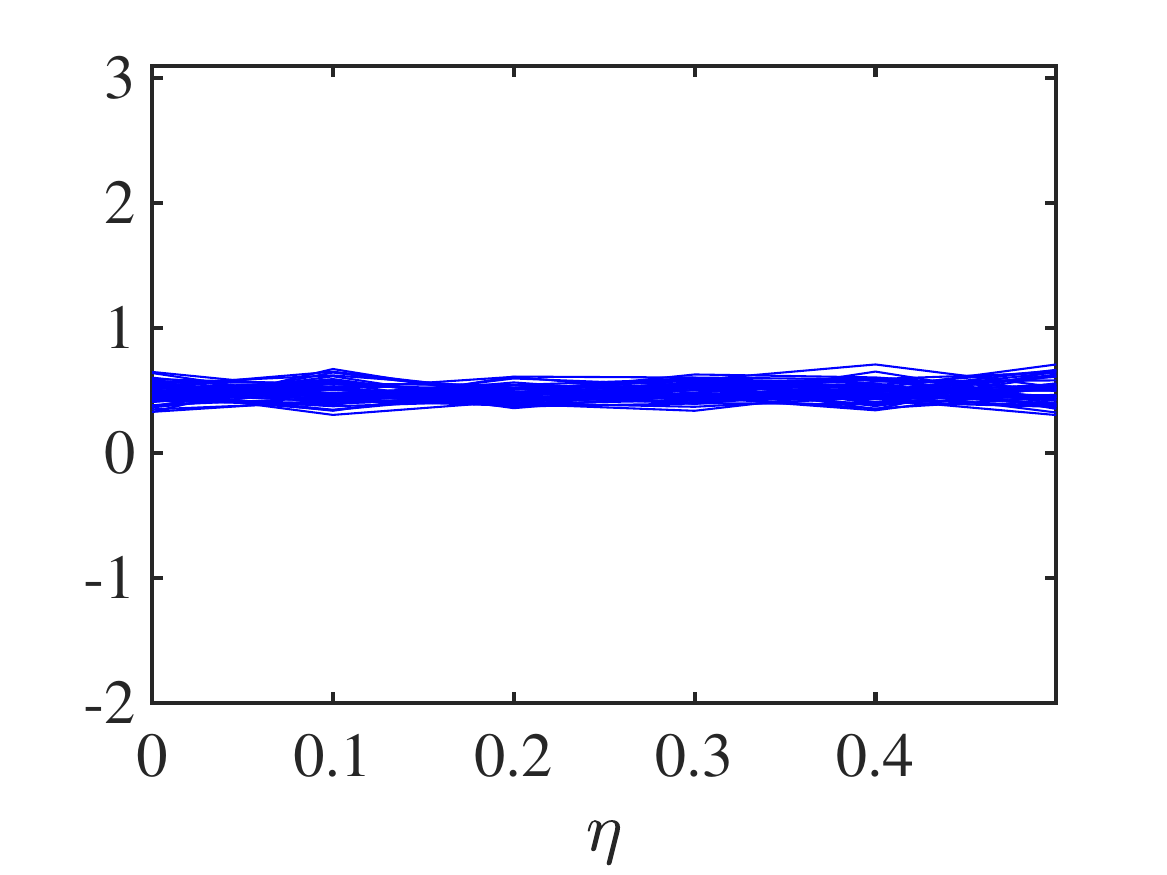}
\end{subfigure}
\begin{subfigure}{0.24\textwidth}
\includegraphics[width=\textwidth]{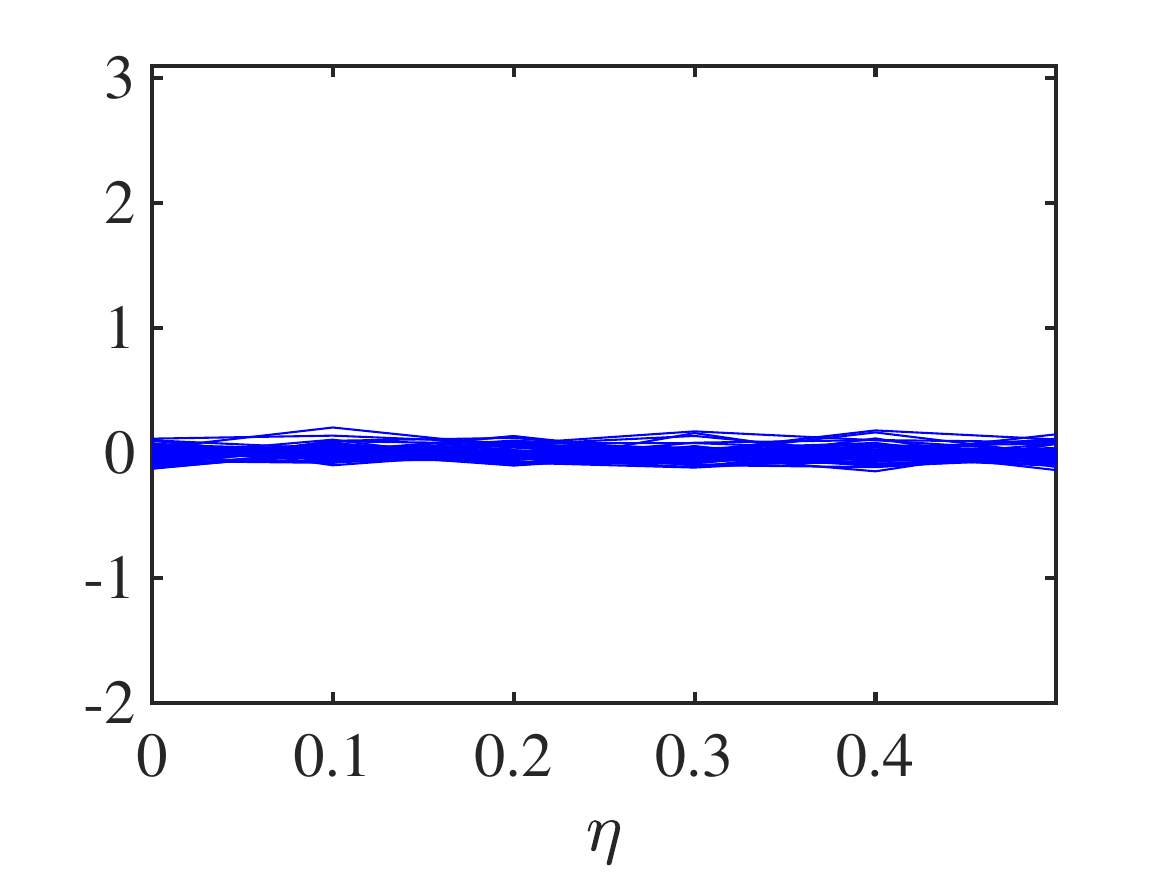}
\end{subfigure}
\vfill
\begin{subfigure}{0.24\textwidth}
\includegraphics[width=\textwidth]{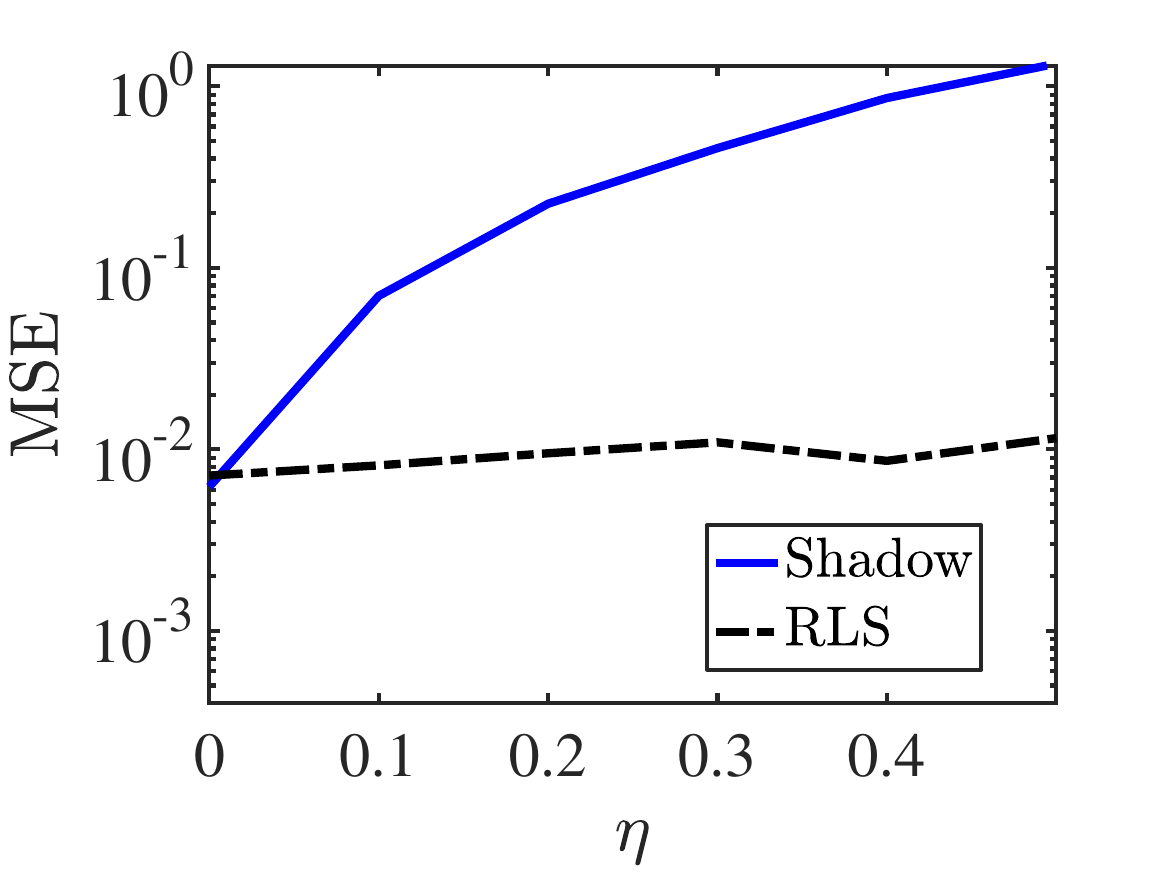}
\caption{$\wh\lambda_0$}
\end{subfigure}
\begin{subfigure}{0.24\textwidth}
\includegraphics[width=\textwidth]{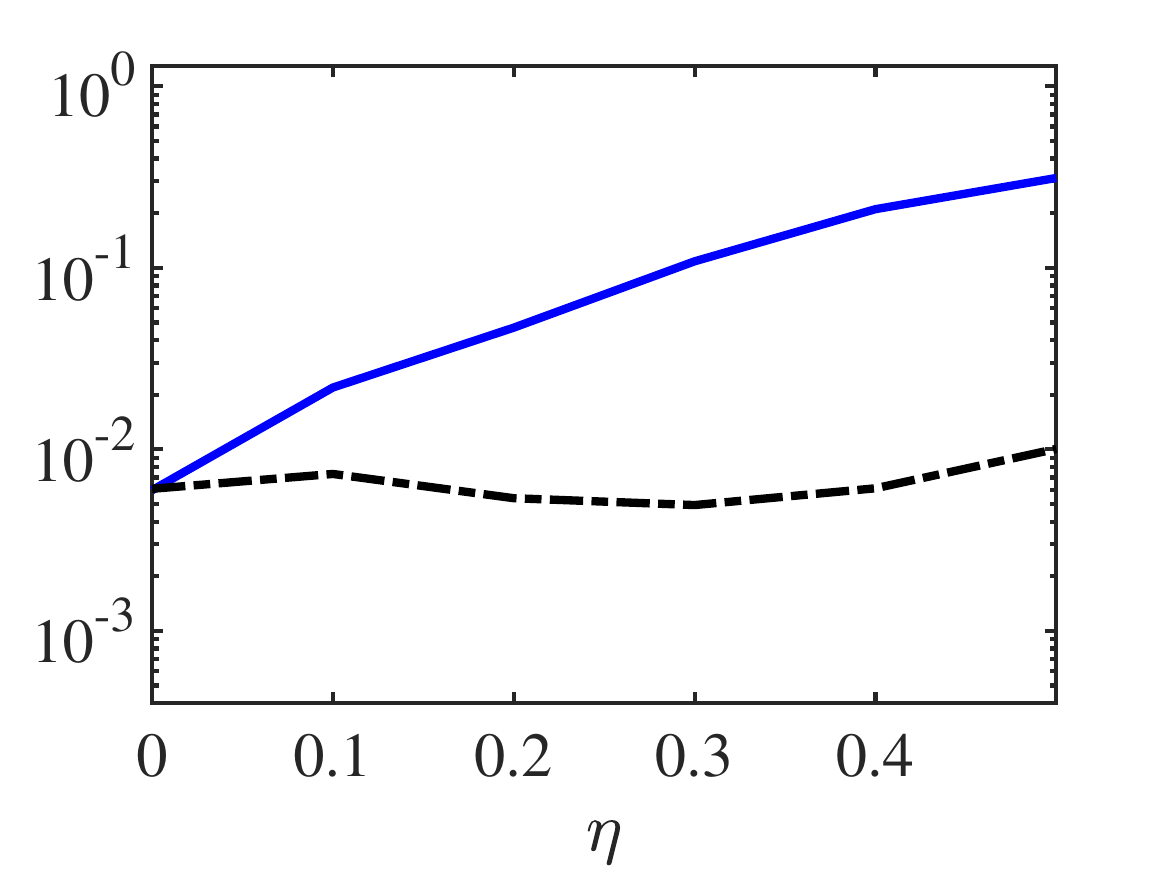}
\caption{$\wh\lambda_1$}
\end{subfigure}
\begin{subfigure}{0.24\textwidth}
\includegraphics[width=\textwidth]{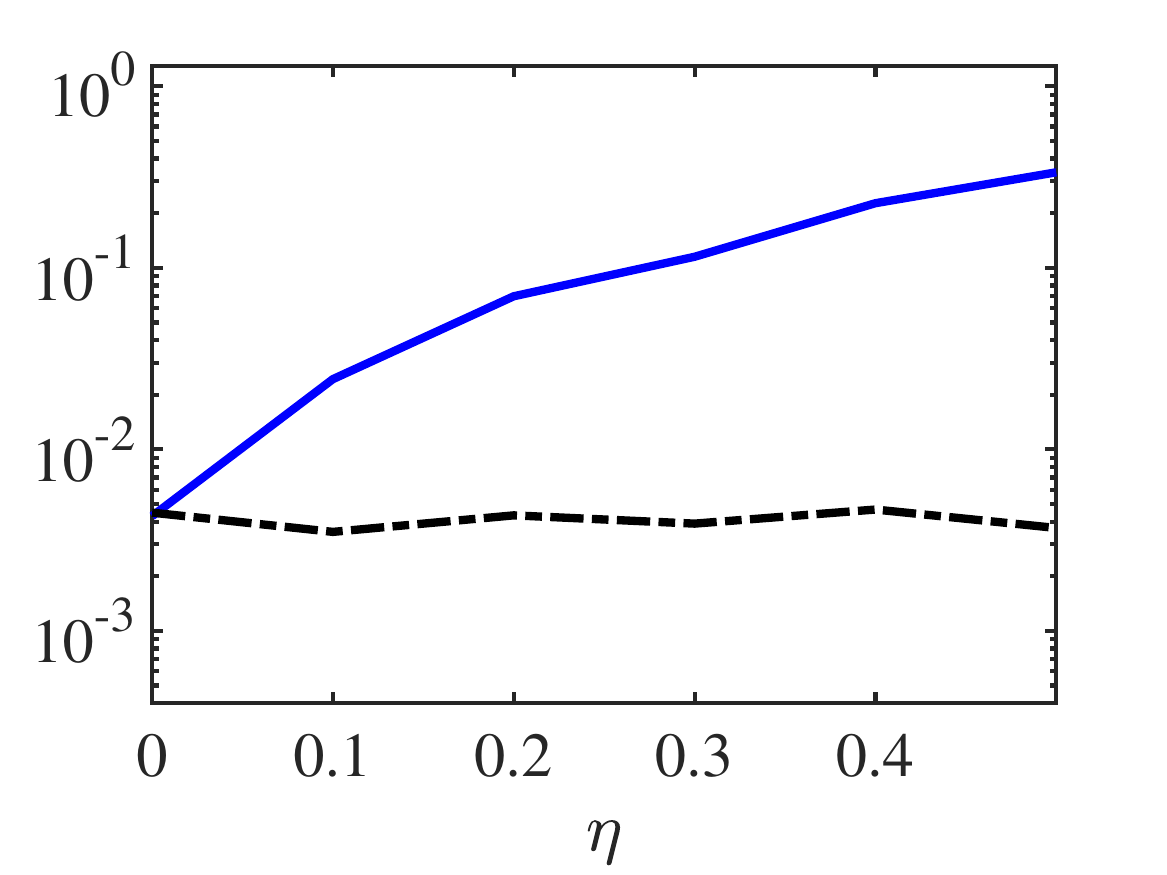}
\caption{$\wh\lambda_2$}
\end{subfigure}
\caption{Illustration of the performance of the RLS and CS approaches for estimating the  linear observables $\lambda_i = \trace(\mLambda_i \vrho)$ in the presence of distribution shifts: the unitary matrices are generated from the distribution $(1-\eta)P(\setU_{D}) + \eta P((\setU_{2})^{\otimes n})$, where $\setU_D$ and $(\setU_{2})^{\otimes n}$ denote the global and local Haar-distributed unitary matrices, respectively.  
}
\label{fig:RLS-Shadow-model-shift}
\end{figure*}

As described in \Cref{sec:CS}, CS leverages the assumption of random measurements and involves the computation of the quantum channel for the entire distribution, as in Eq.~\eqref{eq:Exp-AtA}. However, in practical experiments, the randomness is typically synthetic, and thus the measurement basis may not be perfectly generated according to the desired distribution. In this case, the distribution shift may pose robustness challenges for CS as it crucially depends on the formulation of the quantum channel. In contrast, the RLS approach does not incorporate assumptions of randomness at any point and may therefore prove more useful in this case.

To further illustrate this point, we invoke the same setup as in the previous experiments, but now generating the unitary measurement matrices from a mixture distribution $(1-\eta)P(\setU_{D}) + \eta P((\setU_{2})^{\otimes n})$, where $\setU_D$ and $(\setU_{2})^{\otimes n}$ denote global and local (tensor products of qubit) Haar-distributed unitary matrices, respectively.  In other words, for each state copy we perform either a random global basis measurement with probability $1-\eta$ or a random local basis measurement with probability $\eta$. While this toy example is not expected to reflect distribution errors in practical systems, it provides a clear showcase of the key points. Suppose that an experimenter expects the chosen operations to be sampled from a global Haar distribution, but with probability $\eta$ actually generates tensor products of local qubit unitaries. In this case, the CS shadows will be computed according to Eq.~\eqref{eq:shadow-unitary} under assumptions violated by the experiment.

\Cref{fig:RLS-Shadow-model-shift} shows numerical results for the system of interest for $M = 256$ measurements and $\eta\in(0,0.5)$. The performance of CS degrades as a result of the distribution shift, and it worsens as $\eta$ increases. In contrast, RLS depends solely on the actual measurements performed and does not rely on information about the sampling distribution, making its performance stable against such distribution shifts.

\subsection{Feature 3: Multishot Measurements}
\label{sec:multishot}

\begin{figure*}[t]
\centering
\begin{subfigure}{0.24\textwidth}
\centering
\includegraphics[width=\textwidth]{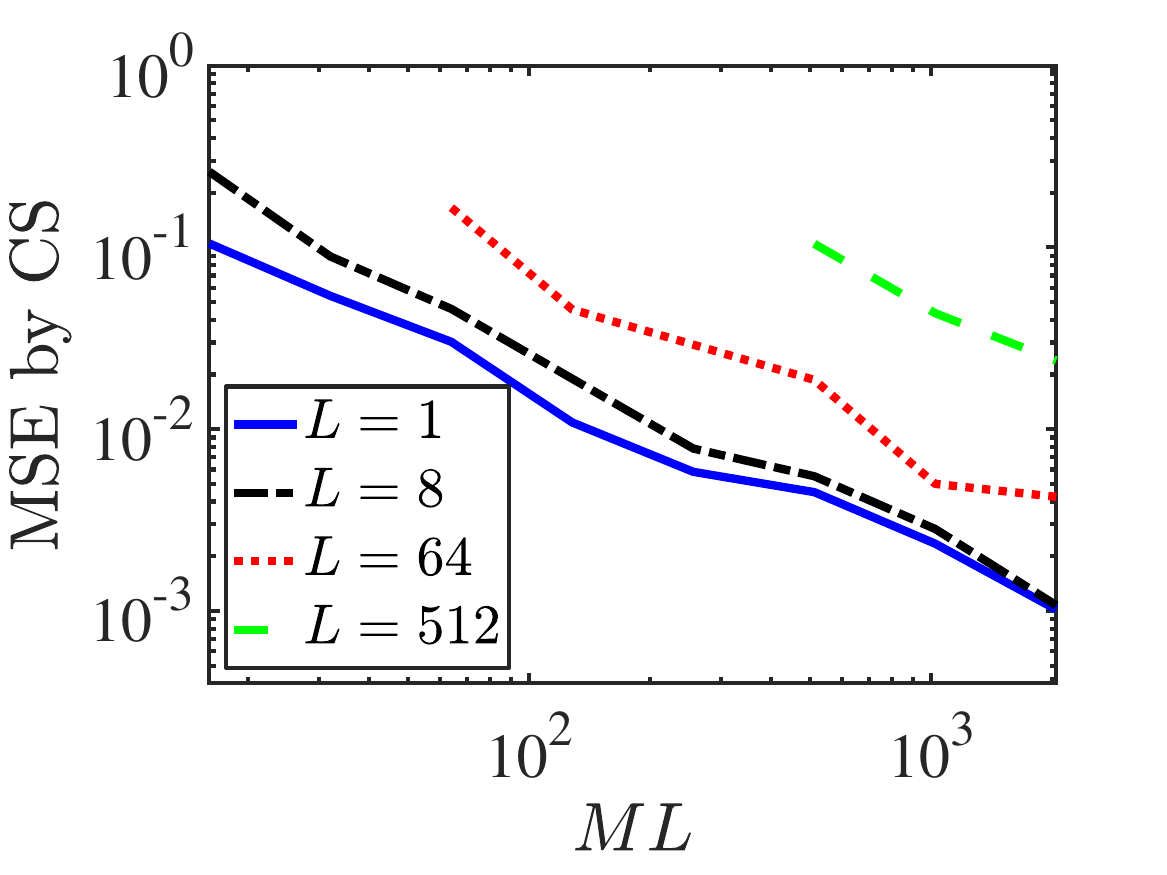}
\end{subfigure}
\begin{subfigure}{0.24\textwidth}
\includegraphics[width=\textwidth]{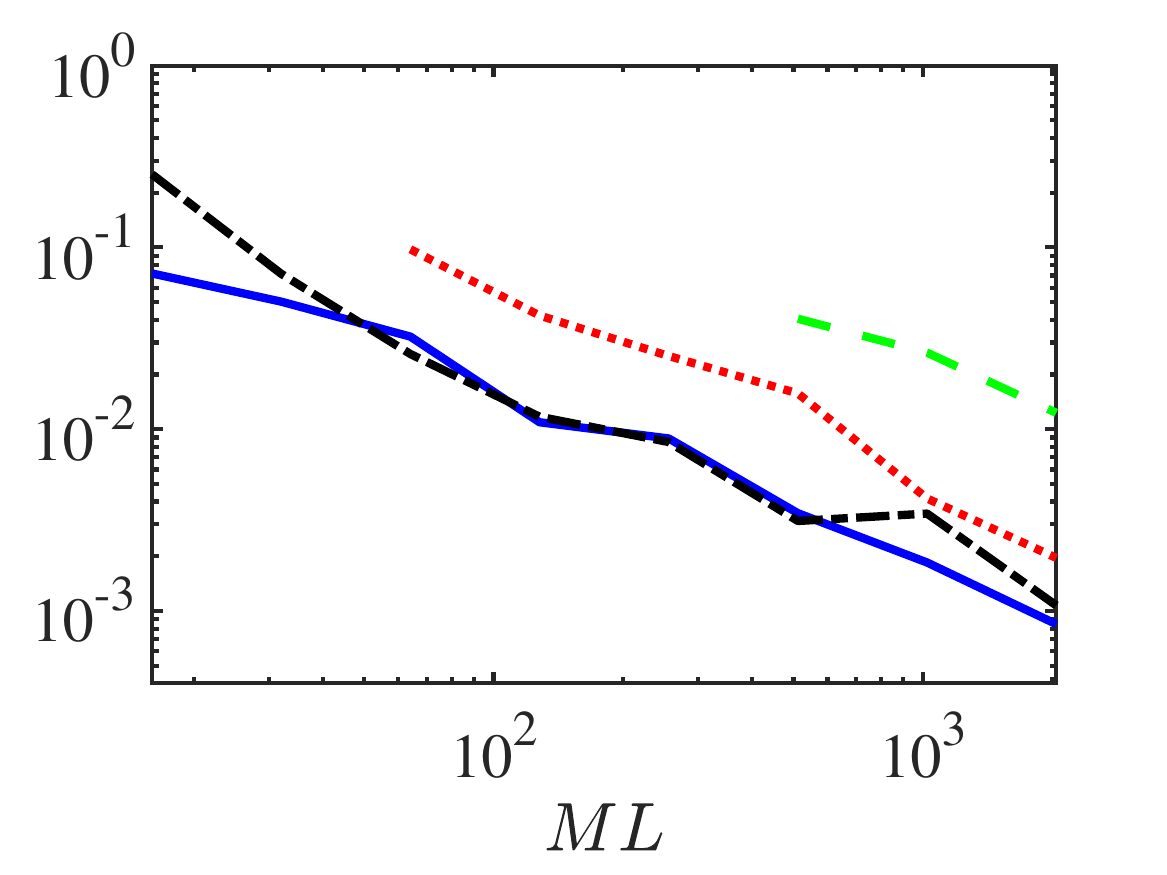}
\end{subfigure}
\begin{subfigure}{0.24\textwidth}
\includegraphics[width=\textwidth]{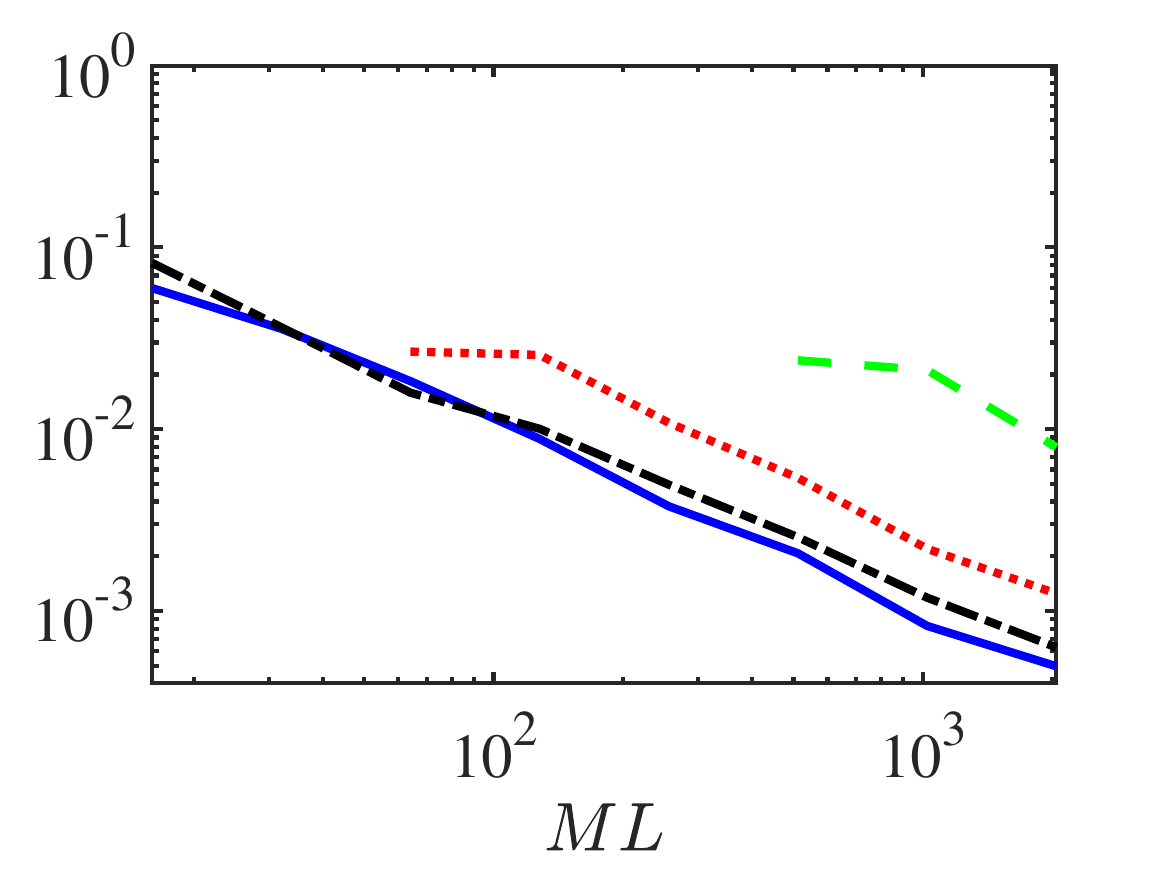}
\end{subfigure}
\begin{subfigure}{0.24\textwidth}
\includegraphics[width=\textwidth]{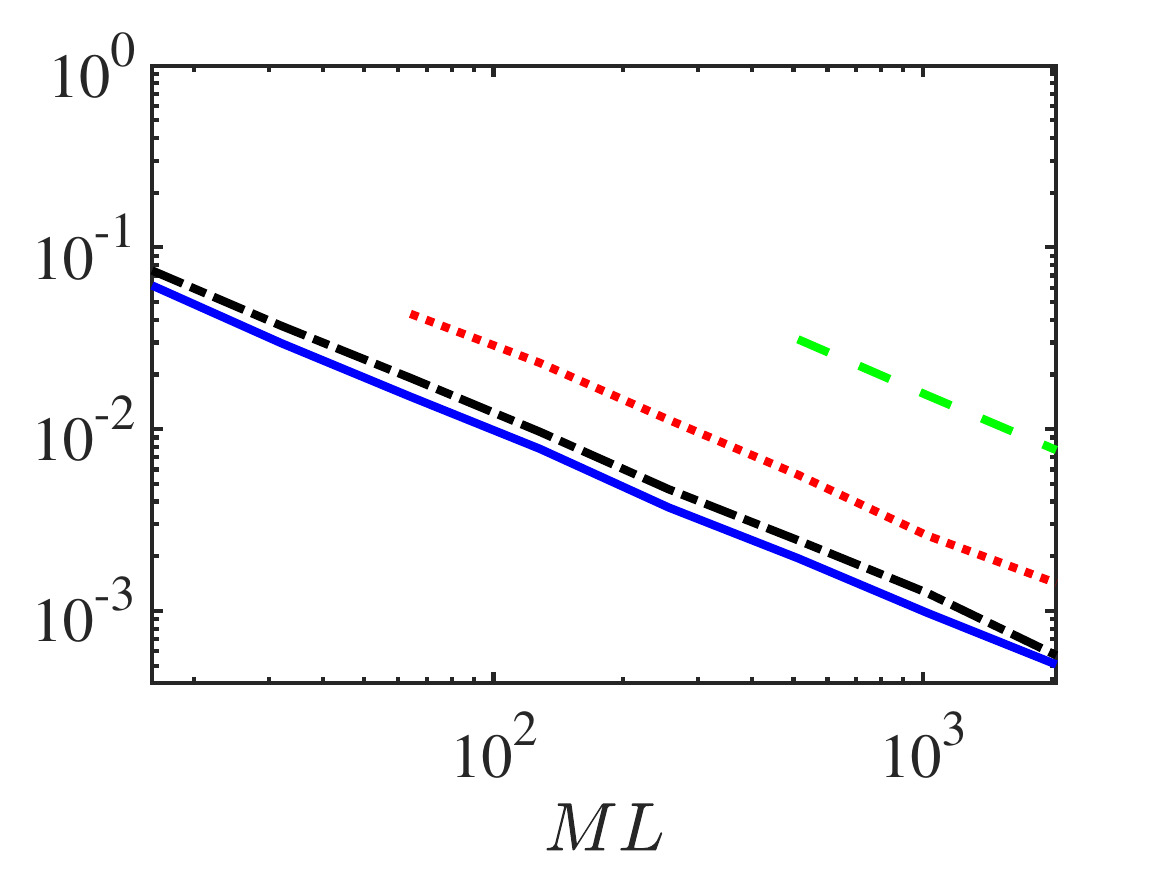}
\end{subfigure}
\vfill
\begin{subfigure}{0.24\textwidth}
\centering
\includegraphics[width=\textwidth]{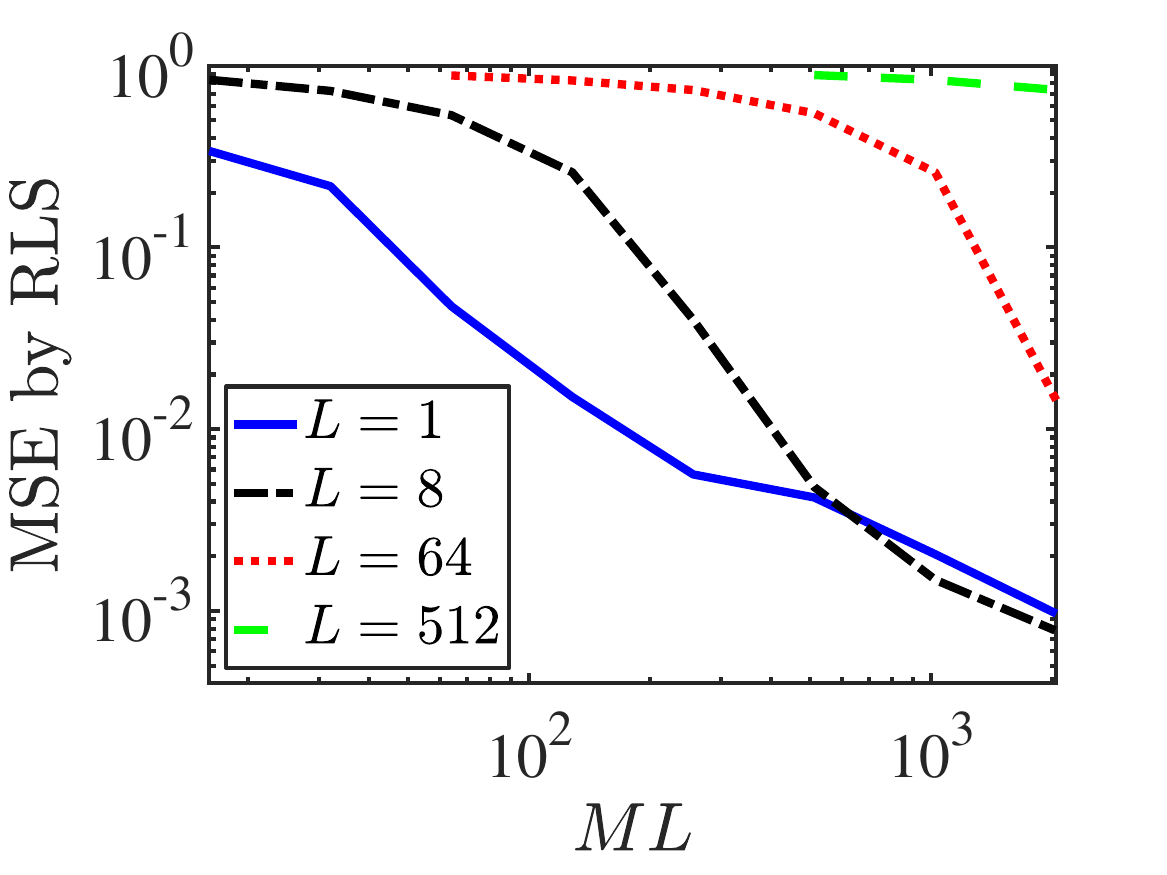}
\caption{$\wh\lambda_0$}
\end{subfigure}
\begin{subfigure}{0.24\textwidth}
\includegraphics[width=\textwidth]{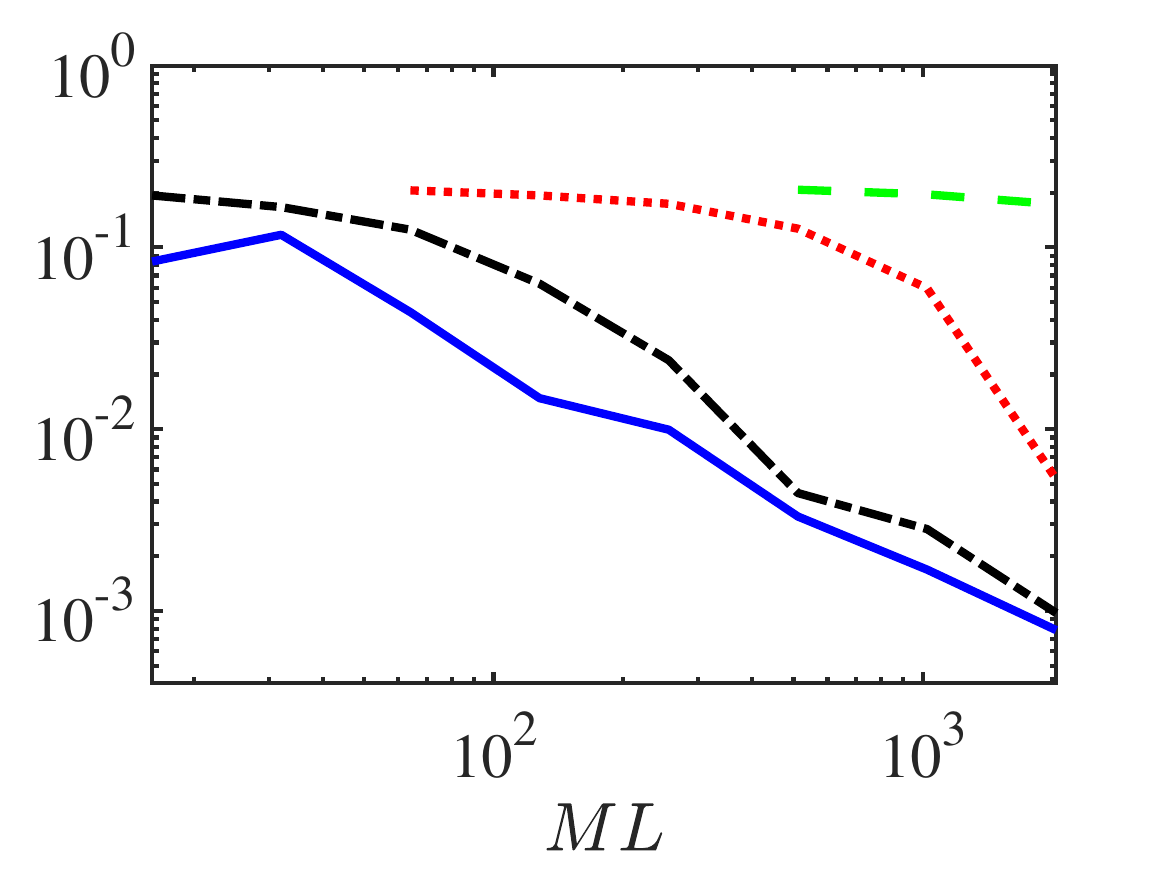}
\caption{$\wh\lambda_1$}
\end{subfigure}
\begin{subfigure}{0.24\textwidth}
\includegraphics[width=\textwidth]{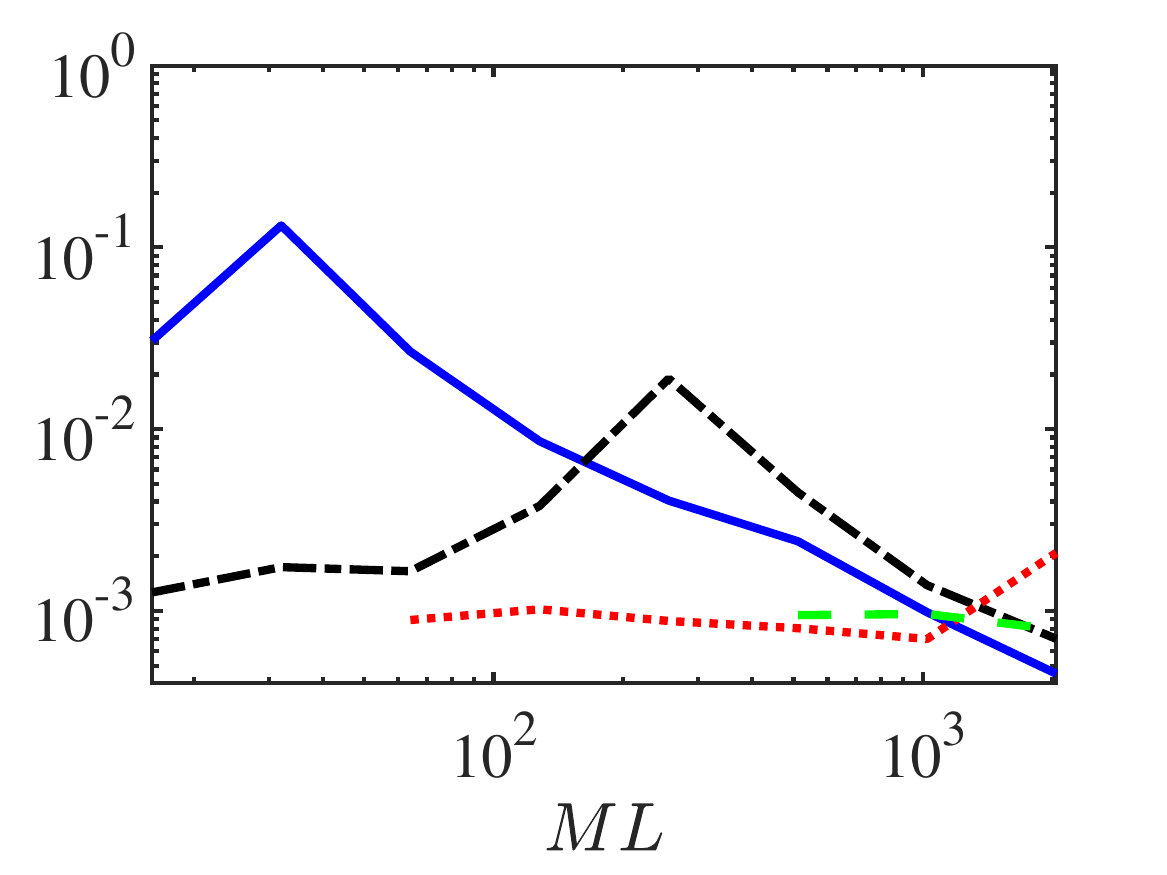}
\caption{$\wh\lambda_2$}
\end{subfigure}
\begin{subfigure}{0.24\textwidth}
\includegraphics[width=\textwidth]{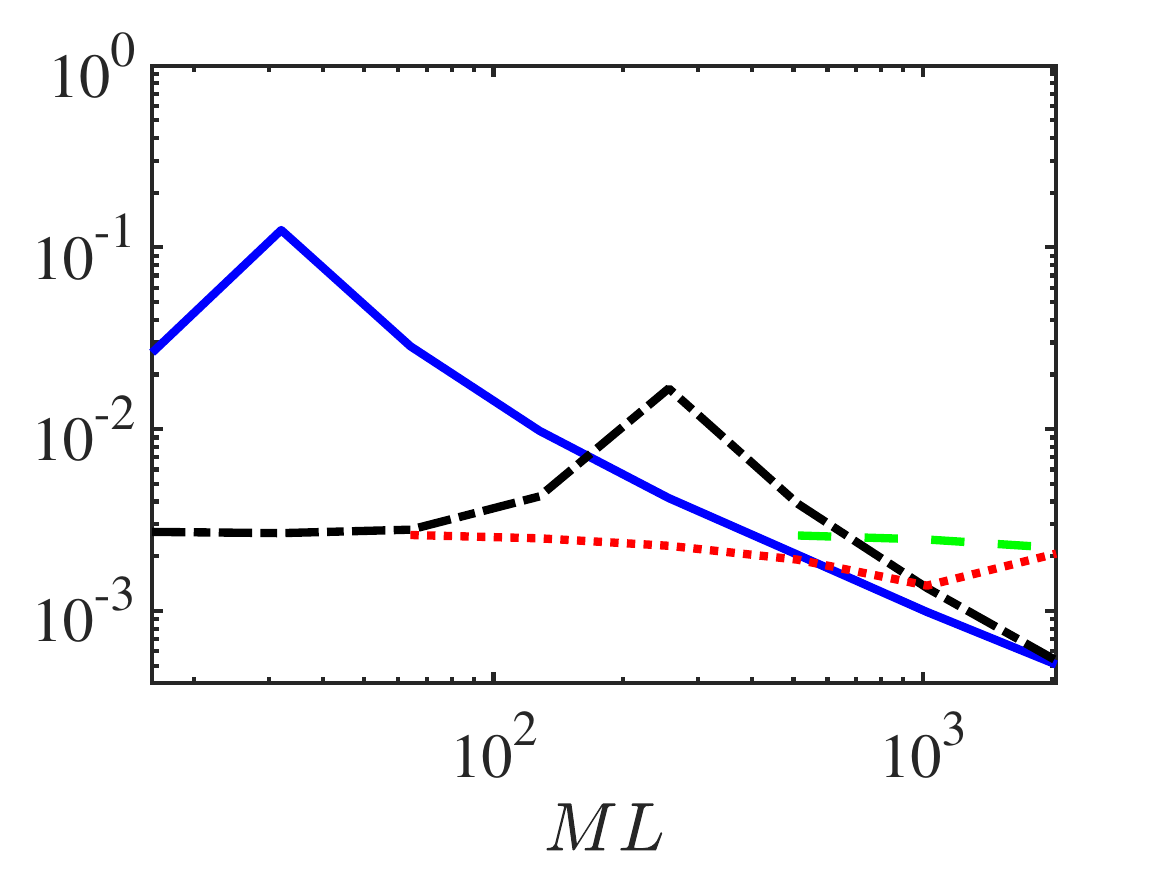}
\caption{$\wh\lambda$}
\end{subfigure}
\caption{Illustration the performance of CS (top row) and RLS (bottom row) with multishot measurements for estimating the three linear observables $\{\lambda_0, \lambda_1, \lambda_2\}$ as in Fig.~\ref{fig:LS}, and 50 random linear observables $\lambda$ as in Fig.~\ref{fig:rls-shadow-mse-randon}(a).
}
\label{fig:one-vs-more}
\end{figure*}

The derivations culminating in Eqs. (\ref{eq:shadow-v2}, \ref{eq:shadow-v3}) for RLS and CS are not confined to single-shot measurements ($L=1$) but inherently include multishot ($L>1$) scenarios as well. In the context of multishot measurements, where empirical frequencies are computed by averaging across all outcomes [Eq.~(\ref{eq:empirical-prob})], the shadows in Eqs.~(\ref{eq:shadow-v2}, \ref{eq:shadow-v3}) could always be \emph{viewed} as single-shot results but duplicated POVMs, by converting $L$ and $M$ to effective values $L_\mathrm{eff}=1$ and $M_\mathrm{eff}=ML$.
This equivalence stems from the linear nature of shadows concerning the empirical frequencies.

What then distinguishes measuring the quantum state using each POVM only once or multiple times? The primary differences are practical in origin, depending on the relative difficulty of preparing state copies compared to reconfiguring the measurement. For example, in many photonics experiments (particularly with spontaneous parametric downconversion~\cite{Mandel1995, Shih2003}), states are prepared continuously and at random, so one need only increase the integration time to push to large $L$ for a fixed measurement setting. On the other hand, for systems composed of superconducting circuits where each state copy is actively prepared, the difference in difficulty between increasing $L$ and increasing $M$ is less dramatic, since both state and measurement circuit are prepared actively and deterministically. 

For fixed number of state copies $ML$, we expect the single-shot regime $L=1$ to provide the closest agreement between LS and CS, for in that case (maximum $M$) the experimental operator $\calA^\dagger \calA$ should approach its expected value $\E[\calA^\dagger \calA]$ most rapidly. In general, for a fixed $ML$, increasing $M$ explores the Hilbert space more efficiently at the expense of greater statistical noise per setting, whereas increasing $L$ reduces statistical noise at the expense of measurement variety. 

To explore this tradeoff for CS and RLS, we conduct numerical experiments using the same setup as in Fig.~\ref{fig:RLS-Shadow} in the multishot regime. In each experiment, we keep the total number of copies of the state $ML$ fixed and vary the number of shots $L$ within the set $\{1, 8, 64, 512\}$ for measuring each POVM, which in turn varies the number of POVMs $M$. Fig.~\ref{fig:one-vs-more} illustrates the performance of CS (top row) and RLS (bottom row) in estimating the expectations of the three linear observables $\{\lambda_0, \lambda_1, \lambda_2\}$ as in Fig.~\ref{fig:LS}, and 50 random linear observables as in Fig.~\ref{fig:rls-shadow-mse-randon}(a). CS consistently achieves its best performance with single-shot measurements, and its error increases with $L$. This observation aligns with our earlier discussion: a greater number of random POVMs brings $\frac{1}{M}\calA^\dagger \calA$ closer to its expected value as described in Eq.~(\ref{eq:Exp-AtA}), and the maximum diversity of POVMs is attained in a single-shot measurement. 

Interestingly, while the primary impact of $L>1$ for CS estimation is to shift the total error up for a given $ML$, the log-log slopes of all examples remain approximately $-1$, indicating favorable scaling $\mathrm{MSE}\propto (ML)^{-1}$ for all $L$ regimes considered. On the other hand, $L>1$ examples alter the \emph{slopes} of the RLS estimator errors as well as their absolute values. For the $\wh\lambda_0$ and $\wh\lambda_1$ cases, the error increases rapidly with $L$ accompanied by an extremely  shallow initial slope [Fig.~\ref{fig:one-vs-more}(a,b)]; in contrast, the low-$ML$ regime of MSE for $\wh\lambda_2$ and random linear observables $\wh \lambda$ is actually \emph{lower} for $L>1$ compared to $L=1$ [Fig.~\ref{fig:one-vs-more}(c)]. Both of these features are likely due to the bias present in RLS shadows. With a smaller number of POVMs (low $M$), the observables computed by RLS are heavily biased to zero, 
which happens to deviate strongly from the ground truth values $\lambda_0=1$ and $\lambda_1=1/2$, yet is precisely the true expectation of $\mLambda_2$  ($\lambda_2 = 0$) and close to the expectation of most random $\mLambda$ [cf. Fig.~\ref{fig:rls-shadow-mse-randon}(b)]. In contrast, because the CS shadow is always unbiased regardless of $M$ and $L$, comparable behavior is seen for all examined observables.

To complement these numerical and qualitative findings, we now mathematically derive MSE formulas for the expectation of observable $\mLambda$ [$\lambda = \trace(\mLambda \vrho)$] with estimator $\wh\lambda = \trace(\mLambda \wh \vrho) = \frac{1}{M}\sum_{m=1}^M \trace(\mLambda \wh \vrho_m)$. We will focus on CS shadows as they are unbiased estimators, which will simplify the analysis, and exhibit consistent behavior across different observables in Fig.~\ref{fig:one-vs-more}. The following result establishes the variance (and hence MSE as CS is unbiased) of CS for estimating $\lambda$.

\begin{widetext}
\begin{theorem} Consider a ground truth state $\vrho$ which is repeatedly prepared and measured with $M$ POVMs $\{\mA_{m,k}\}_{k\in[K]},m\in[M]$  generated independently and randomly from an ensemble $\setA$ according to probability distribution $P(\setA)$. Each POVM is used to measure the state $L$ times. Then the MSE of the CS estimate of the expectation of the linear observable $\mLambda$ is given by 

\e\begin{split}
&\E_{\{\mA_{m,k}\}\sim P(\setA),\wh \vp_m}\left[\parans{\trace(\mLambda \wh \vrho) - \trace(\mLambda \vrho) }^2\right]\\
& = \frac{1}{ML}\E_{\{\mA_{k}\}\sim P(\setA)}\left[\sum_{k=1}^K \parans{{\trace(\mA_k \vrho)} + (L-1)\parans{\trace(\mA_k \vrho)}^2} \cdot \parans{\trace\parans{\mLambda \cdot{\calM^{-1}\parans{  \mA_{k} }}} }^2\right]  \\
& \quad + \frac{1 -1/L}{M}\E_{\{\mA_{k}\}\sim P(\setA)}\left[\sum_{k\neq k'}  \trace(\mA_k \vrho) \cdot \trace(\mA_{k'} \vrho)  \cdot \trace\parans{\mLambda \cdot{\calM^{-1}\parans{  \mA_{k} }}} \cdot \trace\parans{\mLambda \cdot{\calM^{-1}\parans{  \mA_{k'} }}} \right]\\
& \quad - \frac{1}{M}\parans{\trace(\mLambda \vrho)}^2.
\end{split}
\label{eq:MSE-multi-shots}\ee
\label{lemma:MSE-multi-shots}\end{theorem}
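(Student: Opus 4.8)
The plan is to exploit the fact that the CS estimator is an average of $M$ independent, identically distributed shadow contributions that are (by the unbiasedness computation already displayed above) unbiased, so that the MSE reduces to a single-shadow variance. Writing $X_m \vcentcolon= \trace(\mLambda \wh\vrho_m)$, we have $\wh\lambda = \trace(\mLambda\wh\vrho) = \frac{1}{M}\sum_{m=1}^M X_m$, where the $X_m$ are i.i.d. (the POVMs are drawn independently from the same ensemble and each is measured with $L$ independent shots) and $\E[X_m] = \trace(\mLambda\vrho)$. Hence the MSE equals $\var(\wh\lambda) = \frac{1}{M}\var(X_1) = \frac{1}{M}\bigl(\E[X_1^2] - (\trace(\mLambda\vrho))^2\bigr)$, which already produces the last term $-\frac{1}{M}(\trace(\mLambda\vrho))^2$ in \eqref{eq:MSE-multi-shots}. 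Everything then rests on evaluating the second moment $\E[X_1^2]$.

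For the second moment, I would first use the linearity of $\calM^{-1}$ to write $\wh\vrho_1 = \calM^{-1}(\calA_1^\dagger(\wh\vp_1)) = \sum_{k=1}^K \wh p_{1,k}\,\calM^{-1}(\mA_{1,k})$, so that $X_1 = \sum_{k=1}^K \wh p_{1,k}\, c_k$ with $c_k \vcentcolon= \trace(\mLambda\,\calM^{-1}(\mA_{1,k}))$ depending on the randomly drawn POVM alone. Squaring and separating the diagonal from the off-diagonal index pairs gives $X_1^2 = \sum_k \wh p_{1,k}^2\, c_k^2 + \sum_{k\ne k'} \wh p_{1,k}\wh p_{1,k'}\, c_k c_{k'}$. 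The key step is to take the expectation by conditioning on the POVM (tower property), under which $\wh\vp_1 = \vf/L$ with $\vf$ multinomial with parameters $L$ and $\{p_k = \trace(\mA_{1,k}\vrho)\}$, and to insert the multinomial moments $\E[\wh p_{1,k}^2 \mid \{\mA_{1,k}\}] = \tfrac{1}{L}(p_k + (L-1)p_k^2)$ and, for $k\ne k'$, $\E[\wh p_{1,k}\wh p_{1,k'}\mid\{\mA_{1,k}\}] = (1-\tfrac1L)\,p_k p_{k'}$. These follow from $\E[f_k] = Lp_k$, $\var(f_k) = Lp_k(1-p_k)$, and $\mathrm{Cov}(f_k,f_{k'}) = -Lp_kp_{k'}$, and they correctly collapse the off-diagonal sum to zero when $L=1$.

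Substituting these conditional moments, taking the outer expectation over the POVM ensemble, and restoring $p_k = \trace(\mA_k\vrho)$ and $c_k = \trace(\mLambda\,\calM^{-1}(\mA_k))$ yields exactly the two expectation terms in \eqref{eq:MSE-multi-shots}; dividing by $M$ and combining with the $-\frac{1}{M}(\trace(\mLambda\vrho))^2$ term from the first step completes the derivation. I do not expect a genuine obstacle: this is a careful but routine second-moment computation. The only points requiring attention are (i) cleanly separating the two independent sources of randomness through the tower property---the random choice of POVM and, conditioned on it, the multinomial outcomes; (ii) bookkeeping the diagonal versus off-diagonal index sums so that the $p_k$-versus-$p_k^2$ structure emerges correctly; and (iii) verifying the $L=1$ boundary case, where $\wh p_{1,k}\in\{0,1\}$ forces $\wh p_{1,k}^2 = \wh p_{1,k}$ and the cross terms vanish, consistent with the $(1-1/L)$ prefactor.
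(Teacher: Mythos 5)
Your proposal is correct and follows essentially the same route as the paper's proof: reduce the MSE to $\frac{1}{M}$ times a single-shadow second moment via independence and unbiasedness, expand the shadow linearly in the empirical frequencies, and apply the tower property with the multinomial conditional moments $\E[\wh p_k^2 \mid \{\mA_k\}] = \frac{1}{L}\left(p_k + (L-1)p_k^2\right)$ and $\E[\wh p_k \wh p_{k'} \mid \{\mA_k\}] = \left(1-\frac{1}{L}\right)p_kp_{k'}$. No gaps; the diagonal/off-diagonal bookkeeping and the $L=1$ sanity check match the paper's argument exactly.
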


\begin{proof}[Proof of \Cref{lemma:MSE-multi-shots}]
Using the expression $\wh\vrho = \frac{1}{M}\sum_{m}\wh\vrho_m$, we have
\e\begin{split}
& \E_{\{\mA_{m,k}\}\sim P(\setA),\wh \vp_m}\left[\parans{\frac{1}{M}\sum_{m=1}^M \trace(\mLambda \wh \vrho_m) - \trace(\mLambda \vrho) }^2\right] \\
& = \frac{1}{M^2}\sum_m\E_{\{\mA_{m,k}\}\sim P(\setA),\wh \vp_m}\left[\parans{\trace(\mLambda \wh \vrho_m) }^2\right] - \frac{1}{M}\parans{\trace(\mLambda \vrho)}^2\\
& = \frac{1}{M^2}\sum_m
\E_{\{\mA_{m,k}\}\sim P(\setA),\wh \vp_m}\left[\parans{\trace\parans{\mLambda \cdot {\calM^{-1}\parans{ \sum_{k=1}^K \wh p_{m,k} \mA_{m,k} }}} }^2\right] - \frac{1}{M}\parans{\trace(\mLambda \vrho)}^2,
\end{split}
\label{eq:MSE-shadow-obsrvable}\ee
where the first equality follows from the independence and unbiasedness of the shadows $\wh\vrho_m, m\in[M]$. We now focus on the analysis of $\E_{\{\mA_{m,k}\}\sim P(\setA),\wh \vp_m}\left[\parans{\trace\parans{\mLambda \cdot{\calM^{-1}\parans{ \sum_{k=1}^K \wh p_{m,k} \mA_{m,k} }}} }^2\right]$. Since this term will be the same for each $m$, for simplicity, we drop the subscript $m$ and write it as $
\E\left[\parans{\trace\parans{\mLambda \cdot{\calM^{-1}\parans{ \sum_{k=1}^K \wh p_{k} \mA_{k} }}} }^2\right],$
where $\{\wh p_k\}$ are the empirical frequencies that are obtained by using the randomly generated POVM $\{\mA_k\}$ to measure the quantum state $L$ times.  
We note that here the expectation is taken over two types of randomness: the randomly selected POVM $\{\mA_k\}$ and the random measurements $\{\wh p_k\}$. Conditioned on $\{\mA_k\}$, $\{\wh p_k\}$ obeys a multinominal distribution with properties
\begin{align}
&\E\left[\wh p_k^2  \mid \{\mA_k\} \right] = p_k^2 + \frac{p_k (1-p_k)}{L} =  \frac{p_k + (L-1)p_k^2}{L},\\
&\E\left[\wh p_k \wh p_{k'}  \mid \{\mA_k\} \right] = p_k p_{k'} - \frac{1}{L} p_k p_{k'}  = \left(1 -\frac{1}{L}\right) p_k p_{k'}, \ \forall \ k\neq k'.
\end{align}
We now proceed by using these results and the fact that $\calM^{-1}$ is a linear operator:
\e\begin{split}
&\E_{\{\mA_k\},\wh \vp}\left[\parans{\trace\parans{\mLambda \cdot{\calM^{-1}\parans{ \sum_{k=1}^K \wh p_{k} \mA_{k} }}} }^2\right] = \E_{\{\mA_k\},\wh \vp}\left[\parans{\sum_{k=1}^K \wh p_{k}\cdot \trace\parans{\mLambda \cdot{\calM^{-1}\parans{  \mA_{k} }}} }^2\right]\\
& = \E_{\{\mA_k\},\wh \vp}\left[\sum_{k=1}^K \wh p_{k}^2 \cdot \parans{\trace\parans{\mLambda \cdot{\calM^{-1}\parans{  \mA_{k} }}} }^2\right]  \\
& \quad + \E_{\{\mA_k\},\wh \vp}\left[\sum_{k\neq k'} \wh p_{k}\wh p_{k'} \cdot \trace\parans{\mLambda \cdot{\calM^{-1}\parans{  \mA_{k} }}} \cdot \trace\parans{\mLambda \cdot{\calM^{-1}\parans{  \mA_{k'} }}} \right]\\
& = \E_{\{\mA_k\}}\left[\sum_{k=1}^K \frac{p_{k} + (L-1)p_k^2}{L} \cdot \parans{\trace\parans{\mLambda \cdot{\calM^{-1}\parans{  \mA_{k} }}} }^2\right]  \\
& \quad + \E_{\{\mA_k\}}\left[\sum_{k\neq k'} \left(1 -\frac{1}{L}\right) p_k p_{k'} \cdot \trace\parans{\mLambda \cdot{\calM^{-1}\parans{  \mA_{k} }}} \cdot \trace\parans{\mLambda \cdot{\calM^{-1}\parans{  \mA_{k'} }}} \right].
\end{split}\ee

We complete the proof by plugging the above into Eq.~\eqref{eq:MSE-shadow-obsrvable}.
\end{proof}
\end{widetext}

When $L = 1$, Eq.~(\ref{eq:MSE-multi-shots}) reduces to the formulation in  Ref.~\cite{huang2020predicting} (Lemma S1 therein) through the property that $\calM^{-1}$ is self-adjoint and the expression for rank-1 orthonormal POVMs [Eq.~\eqref{eq:shadow-unitary}]. 
During the final preparation and revision of this work, we became aware of two works with derivations similar to ours: one for rank-1 orthonormal POVMs with multishot measurements~\cite{zhou2023performance} and another that focuses on MSE for
a single unitary and then studies how the MSE varies with different choices of unitary groups~\cite{helsen2023thrifty}.
Our formulation in Eq.~(\ref{eq:MSE-multi-shots}) is distinct by holding for general POVMs.  Though Eq.~(\ref{eq:MSE-multi-shots}) may appear complex, if we disregard the last term, we can draw the following two key observations. (i) Since $\trace(\mA_k \vrho)$ is often very small, the dominant term becomes $\frac{1}{ML} \E_{{\mA_k}}\left[\sum_{k=1}^K {\trace(\mA_k \vrho)} \cdot \parans{\trace\parans{\mLambda \cdot{\calM^{-1}\parans{ \mA_{k} }}} }^2\right]$. This term decreases proportionally to $1/ML$, where $ML$ represents the total number of measurements. (ii) Conversely, if we keep $ML$ fixed, using a larger value of $L$ generally results in a larger MSE since the remaining terms increase with $L$. This explains the observed decrease in performance with increasing $L$ as shown in \Cref{fig:one-vs-more}.

\section{Conclusion}
\label{sec:conclusion}
In this paper, we have identified and formalized deep connections between traditional LS-based techniques for quantum state estimation and the disruptive methodologies of CS. Through careful derivation of the LS tomographic problem, we have shown that the LS estimator can be viewed as the average of distinct ``shadows'' $\wh\vrho_m$, each corresponding to a specific measurement, in complete analogy with CS. 
This extension of the shadow picture to LS in turn reveals a novel viewpoint for CS in connection with regularization; just like traditional techniques such as RLS, CS reduces the instabilities of LS in the underdetermined regime through replacement of $(\frac{1}{M}\calA^\dagger\calA)^{+}$ with a well-conditioned channel inverse.

Notwithstanding these intuitive similarities between RLS and CS shadows, our tests above reveal key differences. RLS shadows reduce variance at the cost of bias, are robust to errors in the distribution of random measurements, and are highly sensitive to the tradeoff in the number of POVMs $M$ and number of shots $L$. 
In contrast, CS shadows are unbiased at the expense of variance, produce high estimation errors whenever the actual measurements diverge from the expected distribution, and scale favorably with a variety of $M$ and $L$ combinations. Certainly, although not optimal in all categories of interest, the fact that CS shadows are unbiased for any number of measurements---even in the highly underdetermined regime---is a remarkable feature that distinguishes its version of regularization from alternatives such as RLS.

Irrespective of such observations, none of the various tradeoffs can minimize the exceptional \emph{computational} efficiency possible with CS shadows over both LS and RLS methods. Whenever the  quantum channel $\calM$ can be analytically inverted---as in the example in Eqs.~(\ref{eq:expectation-M},\ref{eq:M-inverse})---CS requires numerical calculation of no matrix inverses, unlike both LS [Eq.~\eqref{eq:shadow-v1}] and RLS [Eq.~\eqref{eq:shadow-v2}]. Indeed,
while examples of CS shadows up to 120 qubits were shown in Ref.~\cite{huang2020predicting}, the record dimensionality for LS tomography (specifically, LS projected onto physical states) is a comparatively meager 14 qubits~\cite{Hou2016}, and it is difficult to imagine significant increases beyond that number with existing computing technology. However, while CS shadows may face minimal competition in ultralarge Hilbert spaces, our findings connecting it to LS methods reveal a fascinating conceptual lineage with traditional methodologies, shedding further light into the secrets of the exciting and transformative tomographic procedure that is CS.

\section*{Code Availability}
The MATLAB code used to produce the results in this study is available at \url{https://github.com/ZhihuiZhu/shadow_ls}.

\acknowledgments
We acknowledge funding support from the National Science Foundation (CCF-2241298, EECS-2409701),
a Partnership Seed Award from the Center for Quantum Information and Engineering (CQISE) at the Ohio State University, and the U.S. Department of Energy, Office of Science, Office of Advanced Scientific Computing Research (ERKJ432, ERKJ353, DE-SC0024257). We thank the Ohio Supercomputer Center for providing the computational resources and the Quantum Collaborative led by Arizona State University for providing valuable expertise and resources. A portion of this work was performed at Oak Ridge National Laboratory, operated by UT-Battelle for the U.S. Department of Energy under Contract No. DE-AC05-00OR22725. We are grateful to Stephen Becker, Zhexuan Gong, Zhen Qin, Michael Wakin, Otfried G\"{u}hne and Nikolai Wyderka for many valuable discussions.

\bibliographystyle{apsrev4-1}
\bibliography{quantum.bib}

\end{document}